\DeclareFontFamily{U}{mathx}{\hyphenchar\font45}
\DeclareFontShape{U}{mathx}{m}{n}{
      <5> <6> <7> <8> <9> <10>
      <10.95> <12> <14.4> <17.28> <20.74> <24.88>
      mathx10
      }{}
\DeclareSymbolFont{mathx}{U}{mathx}{m}{n}
\DeclareMathSymbol{\bigtimes}{1}{mathx}{"91}
\definecolor{DarkRed}{rgb}{0.1,0.1,0.8}
\definecolor{DarkBlue}{rgb}{0.1,0.1,0.5}
\definecolor{ForestGreen}{rgb}{0.1333,0.5451,0.1333}
\definecolor{Red}{rgb}{0.9,0,0}
\crefname{property}{property}{Property}
\crefname{equation}{eq}{Eq}
\def\BState{\State\hskip-\ALG@thistlm}
\newtheorem{theorem}{Theorem}
\newtheorem{lemma}{Lemma}[section]
\newtheorem{proposition}[lemma]{Proposition}
\newtheorem{corollary}[lemma]{Corollary}
\newtheorem{definition}[lemma]{Definition}
\newtheorem*{claim*}{Claim}
\newtheorem*{proposition*}{Proposition}
\newtheorem*{lemma*}{Lemma}
\newtheorem*{problem*}{Problem}
\crefname{lemma}{Lemma}{Lemmas}
\crefname{claim}{Claim}{Claims}
\newtheorem{mdresult}{Result}
\newtheorem{remark}[lemma]{Remark}
\newtheorem{assumption}{A\hspace{-1mm}}
\newtheorem{observation}[lemma]{Observation}
\newtheoremstyle{restate}{}{}{\itshape}{}{\bfseries}{~(restated).}{.5em}{\thmnote{#3}}
\theoremstyle{restate}
\theoremstyle{definition}
\newtheorem{mdalg}{Algorithm}
\renewcommand{\qed}{\nobreak \ifvmode \relax \else
      \ifdim\lastskip<1.5em \hskip-\lastskip
      \hskip1.5em plus0em minus0.5em \fi \nobreak
      \vrule height0.75em width0.5em depth0.25em\fi}
\newcommand{\eps}{\ensuremath{\epsilon}}
\newcommand{\paren}[1]{\ensuremath{\left(#1\right)}\xspace}
\newcommand{\IR}{\ensuremath{\mathbb{R}}}
\newcommand{\prob}[1]{\Pr\paren{#1}}
\DeclareMathOperator*{\Exp}{\ensuremath{{\mathbb{E}}}}
\DeclareMathOperator*{\Prob}{\ensuremath{\mathbb{P}}}
\renewcommand{\Pr}{\Prob}
\newenvironment{tbox}{\begin{tcolorbox}[
		enlarge top by=5pt,
		enlarge bottom by=5pt,
		 breakable,
		 boxsep=0pt,
                  left=4pt,
                  right=4pt,
                  top=10pt,
                  arc=0pt,
                  boxrule=1pt,toprule=1pt,
                  colback=white
                  ]
	}
{\end{tcolorbox}}
\newcommand{\II}{\ensuremath{\mathbb{I}}}
\newcommand{\mireal}[1][]{
  \ifx\relax#1\relax%
    \II(\mione \,; \mitwo)%
  \else%
    \II(\mione \,; \mitwo\mid #1)%
  \fi
}
\newcommand{\cS}{\mathcal{S}}
\newcommand{\ip}[2]{\ensuremath{\langle  #1, #2 \rangle}}
\newcommand{\norm}[1]{\ensuremath{\left\lVert #1 \right\rVert}}
\newcommand{\A}[1]{\ensuremath{W^{(#1)}}}
\newcommand{\Em}[1]{\ensuremath{\mathcal{E}^{(#1)}}}
\newcommand{\um}[1]{\ensuremath{u^{(#1)}}}
\newcommand{\vm}[1]{\ensuremath{v^{(#1)}}}
\newcommand{\sm}[1]{\ensuremath{s^{(#1)}}}
\newcommand{\infnorm}[1]{\ensuremath{\left\lVert #1 \right\rVert}_{\infty}}
\newcommand{\twonorm}[1]{\ensuremath{\left\lVert #1 \right\rVert}_{2}}
\newcommand{\twotoinfnorm}[1]{\ensuremath{\left\lVert #1 \right\rVert}_{2\rightarrow \infty}}
\newcommand{\hsbm}{\textnormal{HSBM}}
\newcommand{\sign}{\textnormal{sgn}}
\newenvironment{breakablealgorithm}
  {
   \begin{center}
     \refstepcounter{algorithm}
     \hrule height.8pt depth0pt \kern2pt
     \renewcommand{\caption}[2][\relax]{
       {\raggedright\textbf{\ALG@name~\thealgorithm} ##2\par}%
       \ifx\relax##1\relax 
         \addcontentsline{loa}{algorithm}{\protect\numberline{\thealgorithm}##2}%
       \else 
         \addcontentsline{loa}{algorithm}{\protect\numberline{\thealgorithm}##1}%
       \fi
       \kern2pt\hrule\kern2pt
     }
  }{
     \kern2pt\hrule\relax
   \end{center}
  }
\definecolor{applegreen}{rgb}{0.55, 0.71, 0.0}
\title{Community Detection in the Hypergraph SBM: \\Exact Recovery Given the Similarity Matrix \footnotetext{Accepted for presentation at the Conference on Learning Theory (COLT) 2023.}
}
\author{Julia Gaudio \\
\href{mailto:julia.gaudio@u.northwestern.edu}{\text{julia.gaudio@northwestern.edu}} \\
Northwestern University. \and 
Nirmit Joshi\\
\href{mailto:nirmit@u.northwestern.edu}{\text{nirmit@ttic.edu}}\\
TTI-Chicago}
\date{}
\begin{document}
\maketitle

\begin{abstract}
 Community detection is a fundamental problem in network science. In this paper, we consider community detection in hypergraphs drawn from the \emph{hypergraph stochastic block model} (HSBM), with a focus on exact community recovery. We study the performance of polynomial-time algorithms which operate on the \emph{similarity matrix} $W$, where $W_{ij}$ reports the number of hyperedges containing both $i$ and $j$. Under this information model, while the precise information-theoretic limit is unknown, Kim, Bandeira, and Goemans derived a sharp threshold up to which the natural  min-bisection estimator on $W$ succeeds. As min-bisection is NP-hard in the worst case, they additionally proposed a semidefinite programming (SDP) relaxation and conjectured that it achieves the same recovery threshold as the min-bisection estimator.
  
 In this paper, we confirm this conjecture. We also design a simple and highly efficient spectral algorithm with nearly linear runtime and show that it achieves the min-bisection threshold. Moreover, the spectral algorithm also succeeds in denser regimes and is considerably more efficient than previous approaches, establishing it as the method of choice. Our analysis of the spectral algorithm crucially relies on strong \emph{entrywise} bounds on the eigenvectors of $W$. Our bounds are inspired by the work of Abbe, Fan, Wang, and Zhong, who developed entrywise bounds for eigenvectors of symmetric matrices with independent entries. Despite the complex dependency structure in similarity matrices, we prove similar entrywise guarantees.
\end{abstract}

\section{Introduction}
Community detection is the problem of partitioning a network into densely connected clusters. As a fundamental network science problem, community detection arises in numerous applications: sociology \cite{10.1561/2200000005, Newman-randomgraph}, protein interactions \cite{10.1093/bioinformatics/btl370,doi:10.1126/science.285.5428.751}, image applications \cite{868688}, natural language processing \cite{PhysRevE.84.036103}, webpage sorting \cite{KUMAR19991481} and many more. In 1983, \cite{Holland1983} introduced the \emph{stochastic block model} (SBM), a probabilistic generative model for networks with community structure. Since then, community detection in the SBM has been intensely studied in the probability, statistics, and theoretical computer science communities \cite{Dyer1989,McSherry2001,Decelle2011,https://doi.org/10.48550/arxiv.1311.4115,10.1145/2591796.2591857,7354421,Abbe2016ExactRI,NIPS2015-cfee3986}; also see \cite{Abbe2017} for a survey.

In this paper, we consider an extension of the SBM to \emph{hypergraphs}. A hypergraph is a generalization of a graph that captures higher-order interactions.  For example, an academic co-authorship network may be modeled as a hypergraph, where each hyperedge represents the author list of a paper. Formally, a hypergraph is specified by a set of vertices $V$ and a set of hyperedges $E$. Each hyperedge $e \in E$ is a subset of $V$. We specialize to \emph{uniform} hypergraphs, where each hyperedge contains the same number of vertices. A \emph{$d$-uniform hypergraph} satisfies $|e| = d$ for all $e \in E$ (in particular, a graph is $2$-uniform hypergraph). We then say that the hypergraph has \emph{order} $d$. 

We now describe the Hypergraph Stochastic Block Model (HSBM), which was first proposed by \cite{Ghoshdastidar2014}. We consider the version with two balanced communities and equal inter-community edge probabilities. The model is specified by its order $d$ and two parameters $1 \geq p_n> q_n >0$. First, a community assignment vector $\sigma^{*} \in \{\pm 1\}^n$ is sampled uniformly at random from the set $\{\sigma \in \{\pm 1\}^n : \langle \mathbf{1}_n, \sigma\rangle = 0\}$\footnote{For simplicity of exposition, we assume $n$ is even, and thus, each community has exactly $n/2$ vertices.}, where $\mathbf{1}_n \in \mathbb{R}^n$ is the vector of all ones. 

Conditioned on $\sigma^{*}$, we sample a hypergraph $G = ([n], E)$ as follows. Each $e= \{i_1, i_2, \dots, i_d\} \in \binom{[n]}{d}$ appears as a hyperedge independently with probability
\begin{align*}
\prob{e \in E} = \begin{cases}
p_n & \sigma^{*}(i_1) = \sigma^{*}(i_1) = \cdots = \sigma^{*}(i_d)\\
q_n & \text{otherwise}.
\end{cases}
\end{align*}
We then write $G \sim \text{HSBM}(d,n, p_n, q_n)$. Throughout, we consider $d$ to be a constant and denote $\mathcal{E}:=\binom{[n]}{d}$ to be the set of all possible hyperedges. We use the parametrization:
\begin{equation}\label{eq:parameter-regimes}
    p_n=\alpha f_n \hspace{3mm} \text{and} \hspace{3mm}  q_n=\beta f_n;
\end{equation}
\begin{center}
   where either ($f_n=o(1)$ and $\alpha>\beta>0$) or ($f_n=1$ and $1 \geq \alpha>\beta>0$) for constants $\alpha,\beta$. 
\end{center}
 We are interested in algorithms that recover all of the vertex labels. More formally, we say that an estimator $\hat{\sigma}_n$ \emph{achieves exact recovery} if
\[\lim_{n \to \infty} \prob{\hat{\sigma}_n \in \{\pm \sigma^{*}_n\}} = 1.\]
For clarity of presentation, we typically drop the dependence on $n$. 

The limiting regime for the exact recovery problem is $f_n = \Theta\left(\nicefrac{\log n}{n^{d-1}} \right)$. That is, when $f_n = o\left(\nicefrac{\log n}{n^{d-1}} \right)$, exact recovery is not possible statistically. This is because the hypergraph will have isolated vertices with a high probability, which are impossible to classify information theoretically. On the other hand, when $f_n = \omega\left(\nicefrac{\log n}{n^{d-1}}\right)$, there are efficient algorithms for exact recovery \cite{Chien2019}. In the logarithmic degree regime, we typically parametrize as $f_n = \nicefrac{\log n}{\binom{n-1}{d-1}}$. 
In the corresponding model $\text{HSBM}(d,n,\alpha f_n, \beta f_n)$, there is a precise information-theoretic threshold, determined by \cite{Kim2018}.
If $I_{\textnormal{full}}(d,\alpha,\beta):=\frac{1}{2^{d-1}} \left(\sqrt{\alpha} - \sqrt{\beta} \right)^2 < 1$, then exact recovery is impossible, while if $I_{\textnormal{full}} (d,\alpha,\beta) > 1$, then there are polynomial-time algorithms for the exact recovery problem developed by \cite{Ghoshdastidar2015a,Ghoshdastidar2015b,Ghoshdastidar2017,Ahn2018,Chien2019}, culminating in the work of \cite{Zhang2021}, whose algorithm applies to a general class of HSBMs. Their results essentially show that there is no statistical-computational gap for the exact recovery problem.

While the work of \cite{Zhang2021} resolves the question of poly-time exact recovery even for general HSBMs, the algorithm uses the full hypergraph information. Unfortunately, storing the full hypergraph information can be prohibitively expensive; in the regime where $p_n, q_n = \Theta(1)$, it requires $\Theta(n^d)$ space. A natural question arises: is there some other way of storing the hypergraph information which uses less space than storing the full information, while still being powerful enough for exact recovery? One candidate information model is the so-called \emph{similarity matrix}.
\begin{definition}[\textbf{Similarity Matrix}]\label{def:similarity-matrix}
Let $G = ([n], E)$ be a hypergraph on $n$ vertices. The similarity matrix of $G$ is the zero-diagonal matrix $W$ whose entries are
\[W_{ij} = \left|\left\{e \in E : \{i,j\} \subset e  \right\} \right|\]
for $i \neq j$. In other words, $W_{ij}$ counts the number of edges which contain both $i$ and $j$. We also write $W = \mathcal{S}(G)$ to define the similarity matrix transformation.
\end{definition}
Even in the case when $p_n,q_n = \Theta(1)$, the similarity matrix requires only $O(n^2 \log n)$ space to store. Recent works \cite{Lee2020,Kim2018,cole2020exact} have considered the exact recovery problem, given the similarity matrix $W = \mathcal{S}(G)$, where $G \sim \text{HSBM}(d, n, p_n, q_n)$. \cite{Lee2020} showed that the asymptotic regime under which the similarity matrix is powerful enough for exact recovery is given by $p_n - q_n = \Omega\left(\sqrt{p_n \log n / n^{d-1}} \right)$. In the logarithmic degree regime where $f_n=\log n/\binom{n-1}{d-1}$, while the precise information-theoretic threshold (in terms of $d, \alpha ,\beta$) for exact recovery given $W=\mathcal{S}(G)$ remains unknown\footnote{A previous version of this work claimed that $I(d,\alpha,\beta) = 1$ is the information-theoretic threshold, when given the similarity matrix. However, \cite[Theorem 3]{Kim2018} establishes $I(d,\alpha,\beta) = 1$ as the threshold for the min-bisection estimator, and it is not known whether this threshold coincides with the information-theoretic threshold given the similarity matrix.}, 
\cite[Theorem 3]{Kim2018} analyzed the performance of a natural min-bisection estimator.  
Letting \begin{equation}
I(d, \alpha, \beta) = \max_{t \geq 0} \frac{1}{2^{d-1}} \left[\alpha \left(1 - e^{-(d-1)t} \right) + \beta \sum_{r=1}^{d-1}\binom{d-1}{r} \left(1 - e^{-(d-1-2r)t}\right)  \right],
\end{equation}
the min-bisection estimator on $W$ achieves exact recovery if $I(d, \alpha, \beta) > 1$ and fails if $I(d,\alpha,\beta)<1$. Note that the min-bisection problem is NP-hard in the worst-case. Therefore, \cite{Kim2018} additionally proposed a semidefinite programming (SDP) relaxation, showing that it succeeds in exact recovery if $I_{\text{SDP}}(d, \alpha, \beta) > 1$, where
\vspace{-1mm} \[I_{\text{SDP}}(d, \alpha, \beta) = \frac{(d-1)}{2^{2d}} \cdot \frac{(\alpha - \beta)^2}{\left(\alpha \frac{d}{2^d} + \beta \left(1 - \frac{d}{2^d} \right)\right)}.\]
Note that $I_{\text{SDP}}(d, \alpha, \beta) > I(d, \alpha, \beta)$ for $d \geq 3$; see Figure \ref{fig:phase-transition} for an illustration. The authors conjectured that the SDP achieves the min-bisection threshold; i.e. it succeeds whenever $I(d,\alpha, \beta) > 1$ \cite[Conjecture 1.2]{Kim2018}. This leaves an open question: 
\begin{center}
    \emph{Can this sharp threshold for min-bisection be achieved by a polynomial time algorithm that only uses the similarity matrix?}
\end{center} 
We also investigate the question of whether using the similarity matrix enables efficient, nearly linear runtime algorithms. Motivated by the success of spectral algorithms for other community detection problems \cite{Abbe2020ENTRYWISEEA, Deng2021, Dhara2022a, Dhara2022b}, and given their efficiency, we ask:
\begin{center}
\emph{Can we design a spectral algorithm with nearly linear runtime that achieves the min-bisection threshold from the similarity matrix?}    
\end{center}
\subsection{Our Contributions} 
Our first main contribution is to show that the SDP algorithm succeeds whenever $I(d,\alpha,\beta)>1$ (Theorem \ref{theorem:SDP}), achieving the min-bisection threshold. We note that there has been some follow-up work on community recovery from the similarity matrix \cite{Lee2020, cole2020exact}, but to our knowledge, we are the first to resolve \cite[Conjecture 1.2]{Kim2018}. We also show that the SDP is robust to monotone adversarial changes on the similarity matrix.

Our second main contribution is a simple spectral algorithm based on the similarity matrix that also achieves the min-bisection threshold (Theorem \ref{theorem:spectral}).
\begin{figure}
    \centering
    \includegraphics[width=100mm, height=75mm, scale=2.0]{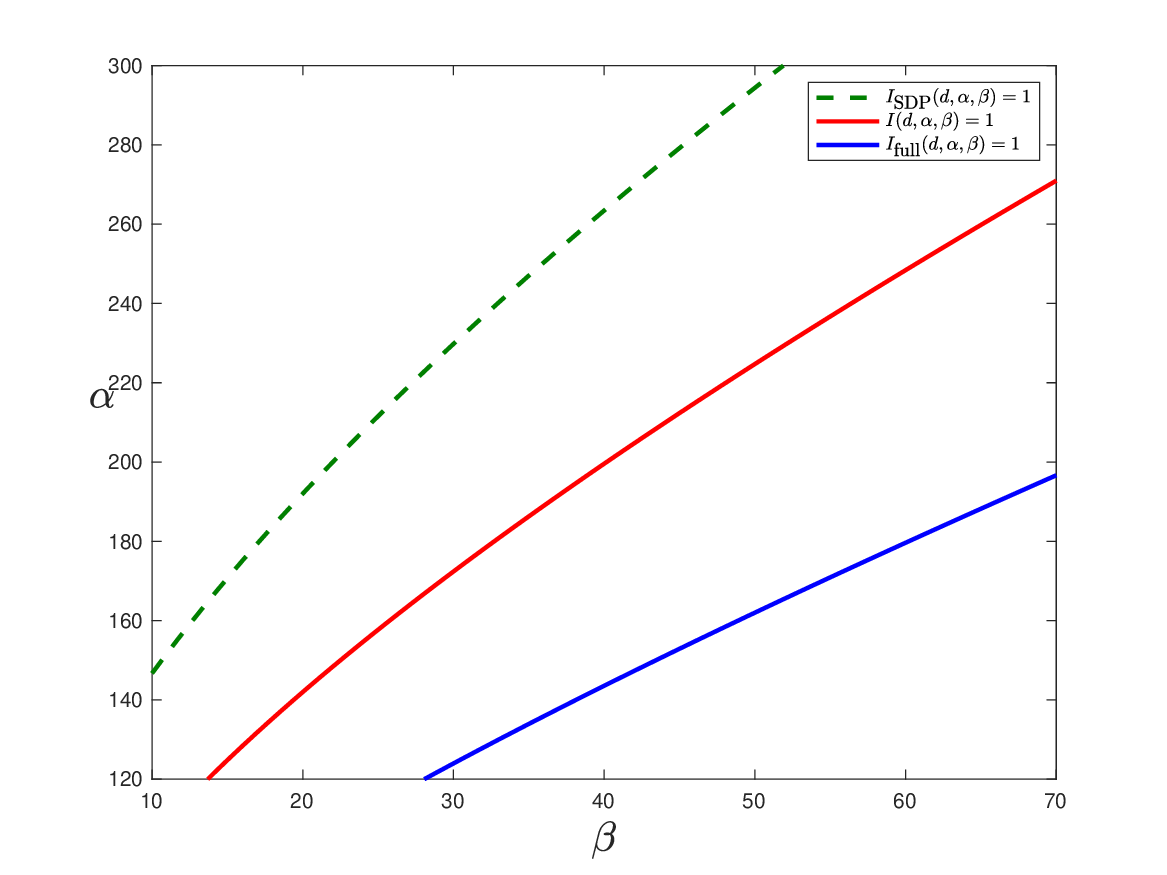}
    \caption{Visualization of $I_{\textnormal{SDP}}(d,\alpha,\beta)=1$, $I(d,\alpha,\beta)=1$ and $I_{\textnormal{full}}(d,\alpha,\beta)=1$ when $d=6$.}
    \label{fig:phase-transition}
\end{figure}
Our algorithm determines communities based on the signs of the entries of the second eigenvector of the similarity matrix and does not require any clean-up. Furthermore, the algorithm can be implemented in $O(n \log^2 n)$ time in the logarithmic degree regime, using the fast eigenvector computation method of \cite{garber2016faster}. When $f_n = \omega(\log n /n^{d-1})$ and full hyperedge information is available, we may subsample the hypergraph to return to the logarithmic degree regime. Since the subsampling procedure takes $O(n \log n)$ time, the overall runtime is still $O(n \log^2 n)$. Table \ref{tab:run-time-full_information} summarizes the runtime of the algorithm, compared to other approaches in the literature (allowing for subsampling), including results coming after our initial posting on arXiv. These works include \cite{wang2023projected}, which proposed the Projected Tensor Power Method. The algorithm achieves the exact recovery threshold $I_{\textnormal{full}} (d,\alpha,\beta) > 1$, with a runtime of $O(n \log^2(n) / \log \log n)$, when initialized from a labeling satisfying a certain partial recovery condition \cite[Theorem 4.1]{wang2023projected}. Most recently, \cite{Dumitriu2023} proposed a two-stage algorithm achieving exact recovery down to the information-theoretic threshold. \cite{Dumitriu2023} also identified the recovery threshold in the non-uniform case, and showed that their algorithm achieves the information-theoretic threshold in that case also.

We also compare runtimes when only the similarity matrix is available. In this situation, we cannot subsample down to the logarithmic degree regime, 
 due to the loss of hyperedge information. Nevertheless, we show that our spectral algorithm achieves exact recovery directly from $W=\mathcal{S}(G)$, where $G\sim \text{HSBM}(d,n,p_n,q_n)$, in all the regimes of $p_n, q_n$ captured by \eqref{eq:parameter-regimes} as long as $f_n=\omega(\log n/n^{d-1})$ (Theorem \ref{theorem:spectral-otheregimes}). Moreover, we show that its runtime is $O(n^2 \log n)$ in the worst-case; i.e. $p_n,q_n=\Theta(1)$. See Table \ref{tab:run-time-similarity-matrix}.
 
\begin{table}[!htp]
\begin{center}
\begin{tabular}{ |c|c| } 
 \hline
 Strategy (Reference) & Runtime (all regimes) \\
 \hline
 \hline
 Spectral (This work) & $O(n \log^2 n)$  \\ 
 \hline
 SDP \cite{Kim2018} & $\tilde{O}(n^{3.5})$ \cite{Jiang2020} \\ 
 \hline
 Spectral+refinement \cite{Chien2019} & $O(n^3 \log n)$ \\ 
 \hline
  Projected Tensor Power Method \cite{wang2023projected} & $O(n \log^2(n))$ \tablefootnote{Given a suitable initialization that satisfies a certain partial recovery criterion, their algorithm runs in $O(\frac{n \log^2 n}{\log \log n})$ time; see \cite[Theorem 4.1]{wang2023projected}. However, to the best of our knowledge, this criterion can only be achieved in $O(n \log^2 n)$ time using current methods \cite{Dumitriu2023}.} \\ 
 \hline
Spectral + refinement \cite{Dumitriu2023} & $O(n \log^2(n))$  \\ 
 \hline
\end{tabular}
\caption{Runtime comparison when the full hypergraph is known.} \label{tab:run-time-full_information}
\vspace{-4mm}
\end{center}
\end{table}

\begin{table}[ht]
\begin{center}
\begin{tabular}{ |c|c|c| } 
 \hline
 Strategy & Runtime & Runtime\\
  (Reference) & (Logarithmic degree regime) & (All regimes)\\
 \hline
 \hline
 Spectral (This work) & $O(n \log^2 n)$ & $O(n^2 \log n)$ \\ 
 \hline
 SDP \cite{Kim2018} & \multicolumn{2}{c|}{$\tilde{O}(n^{3.5})$ \cite{Jiang2020}} \\ 
 \hline
\end{tabular}
\caption{Runtime comparison when only the similarity matrix is known.}\label{tab:run-time-similarity-matrix}
\vspace{-4mm}
\end{center}
\end{table}

In order to analyze our spectral algorithm, inspired by the work of \cite{Abbe2020ENTRYWISEEA}, we develop $\ell_{\infty}$ (entrywise) bounds for the eigenvectors of similarity matrices of a large class of random hypergraph models (Theorem \ref{thm:entrywise-analysis}), which may be of independent interest. Roughly speaking, we show that an eigenvector $u_k$ of a random similarity matrix $W$ is close to its first order approximation $Wu_k^*/\lambda_k^*$ in the $\ell_\infty$ norm under mild conditions, where $(\lambda_k^*,u_k^*$) is the corresponding eigenpair of $\Exp{[W]}$. Our result addresses two important questions raised by \cite{Abbe2020ENTRYWISEEA} by: (1) providing an example of entrywise eigenvector approximation beyond symmetric matrices with independent entries, (2) expanding the class of graph-based matrices for which entrywise eigenvector guarantees are known, beyond adjacency matrices \cite{Abbe2020ENTRYWISEEA} and Laplacian matrices \cite{Deng2021}.

\paragraph*{Organization}
Section \ref{sec:results} contains our main results. We give proof outlines of the SDP and spectral algorithm results in Section \ref{sec:SDP-outline} and \ref{sec:spectral-outline}, respectively. Directions for future work are proposed in Section \ref{sec:discussion}. The proofs of our main results are provided in the appendix. 

\subsection{Further Related Work}
\paragraph{Other recovery problems.} While we only focus on exact recovery in this work, partial recovery (recovering a non-trivial constant fraction of the community labels) and almost exact recovery (recovering all but a vanishing fraction of the community labels)
have also been studied in the context of the HSBM
\cite{Angelini2015,Zhang2022,Dumitriu2021,Zhen2022,Ke2019}. Community recovery in the non-uniform HSBM has been studied \cite{Ghoshdastidar2017,Dumitriu2021,Alaluusua2023}, with sharp results in the exact recovery problem given by the very recent work of \cite{Dumitriu2023}. Additionally, \cite{Dumitriu2023} study almost exact recovery in non-uniform HSBM and show that it is possible whenever the maximal expected degree diverges, and give an efficient algorithm for the same.  

\paragraph{Spectral methods.} Spectral algorithms have been very successful in statistical inference problems, e.g. community detection \cite{newman2006finding, Rohe_2011, McSherry2001,Abbe2020ENTRYWISEEA, Dhara2022c, Deng2021}, the Planted Clique Problem \cite{alon1998finding}, clustering \cite{von2007tutorial, ng2001spectral}, dimensionality reduction \cite{belkin2003laplacian} and many more; see \cite{chen2021spectral} for a survey on the topic. 

\paragraph{Follow-up work.} 
\cite{Alaluusua2023} considered a multi-layer version of the HSBM, where each layer corresponds to an independent, regular HSBM. 
Their work provides a sufficient condition for exact recovery from the aggregated similarity matrix (the sum of the similarity matrices of all layers), using an SDP approach. 

\subsection{Notation}\label{subapp:notation}
For any real numbers $a, b \in R$, we denote $a \vee b = \max\{a, b\}$ and $a \wedge b = \min\{a, b\}$. Let $\sign: \mathbb{R} \to \{\pm 1\}$ be the function defined by
$\sign(x) = 1$ if $x \geq 0$ and $\sign(x) = -1$ if $x < 0$. We also extend the definition to vectors; let $\sign : \mathbb{R}^n \to \{\pm 1\}^n$ be the map defined by applying the sign function componentwise. 

We use the notation $\mathbb{R}_+ = (0,\infty)$. For $n \in \mathbb{N}$, we write $[n] = \{1, 2, \dots, n\}$. We use the Bachmann–Landau notation $o(.)$, $O(.)$, $\omega(.)$, $\Omega(.)$, $\Theta(.)$ etc. throughout the paper. For nonnegative sequences $(a_n)_{n\geq 1}$
and $(b_n)_{n\geq 1}$, we write $a_n \lesssim b_n$ to mean
$a_n \leq Cb_n$ for some constant $C > 0$. The notation $\asymp$ is similar, hiding two constants
in upper and lower bounds. Moreover, we write $a_n  \approx b_n$ as a shorthand for $\lim_{n\rightarrow \infty} \frac{a_n}{b_n}= 1$.

For any two vectors $x,y \in \IR^n$, $\ip{x}{y}$ represents the standard inner product in $\IR^n$; we define $\twonorm{x}=(\sum_{i=1}^{n} x_i^2)^{1/2}$, $\Vert x \Vert_1=\sum_{i=1}^{n} |x_i|$, and $\infnorm{x}=\max_{i} |x_i|$. For any matrix $M \in \IR^{n\times n}$, $M_{i\cdot}$ refers to its $i$-th row, which is a row
vector, and $M_{\cdot i}$ refers to its $i$-th column, which is a column vector. The matrix spectral
norm is $\twonorm{M} = \sup_{\twonorm{x}=1} \twonorm{Mx}$, the matrix $2 \rightarrow \infty$ norm is $\twotoinfnorm{M} = \sup_{\twonorm{x}=1} \infnorm{Mx}= \sup_{i} \twonorm{M_{i\cdot}}$, and the the matrix Frobenius norm is $\norm{M}_F=(\sum_{i=1}^{n}\sum_{j=1}^{n} M_{ij}^2)^{1/2}$.

\section{Main Results}\label{sec:results}
Recall that $\sigma^{*}_n \in \{ \pm 1\}^n$ denotes the true community assignment vector. Let $G$ be a hypergraph on $n$ vertices, and let $W = \mathcal{S}(G)$ be its similarity matrix. \cite{Kim2018} proposed an SDP for exact recovery (Algorithm \ref{alg:SDP}). Our first main result states that, in the logarithmic degree regime, the SDP relaxation achieves exact recovery whenever the min-bisection estimator does. 
\begin{theorem}\label{theorem:SDP}
Fix $d \in \{2, 3, \dots\}$ and $\alpha > \beta > 0$ such that $I(d, \alpha, \beta) > 1$. Let $f_n=\log n/\binom{n-1}{d-1}$. Suppose $G \sim \hsbm(d, n, \alpha f_n, \beta f_n)$, and let $W = \mathcal{S}(G)$. Let $\hat{X}$ be the optimal solution to \eqref{eq:SDP} with input $W$. Then $\hat{X} = \sigma^* {\sigma^*}^{\top}$ with probability $1-o(1)$. It follows that
$$ \lim_{n \rightarrow \infty} \prob{ \hat{\sigma}_{\textnormal{\tiny{SDP}}} \in \{ \pm \sigma^* \} } =1.$$
\end{theorem}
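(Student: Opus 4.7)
The natural strategy is the standard dual-certificate (KKT) analysis for SDPs in community detection. I aim to show that $X^{\star} := \sigma^{\ast}(\sigma^{\ast})^{\top}$ is the unique optimal primal solution to~\eqref{eq:SDP} with probability $1-o(1)$, which, together with the eigendecomposition step of Algorithm~\ref{alg:SDP}, immediately yields $\hat{\sigma}_{\textnormal{\tiny{SDP}}} \in \{\pm \sigma^{\ast}\}$ and hence exact recovery. Since $\sigma^{\ast}$ is balanced, $X^{\star}$ is primal feasible. By Lagrangian duality, it suffices to construct a diagonal matrix $D^{\star}$ (multipliers for the $X_{ii}=1$ constraints) and a scalar $\nu^{\star}$ (multiplier for $\langle X, \mathbf{1}\mathbf{1}^{\top}\rangle = 0$) such that the dual slack
\[
S^{\star} \;:=\; D^{\star} + \nu^{\star}\mathbf{1}\mathbf{1}^{\top} - W
\]
satisfies $S^{\star} \succeq 0$, $S^{\star}\sigma^{\ast} = 0$, and $\lambda_{2}(S^{\star}) > 0$. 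Complementary slackness with $X^{\star}=\sigma^{\ast}(\sigma^{\ast})^{\top}$ forces $S^{\star}\sigma^{\ast}=0$, which combined with $\mathbf{1}^{\top}\sigma^{\ast} = 0$ yields the closed form $D^{\star}_{ii} = \sigma^{\ast}_{i}\sum_{j} W_{ij}\sigma^{\ast}_{j}$. The scalar $\nu^{\star}$ will be chosen so that $\mathbb{E}[W] - \nu^{\star}\mathbf{1}\mathbf{1}^{\top}$ reduces (modulo diagonal corrections) to a rank-one matrix along $\sigma^{\ast}$.

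With the certificate specified, the proof reduces to two quantitative ingredients. The first is a sharp lower bound on $\min_{i} D^{\star}_{ii}$. Grouping contributions by hyperedge, $D^{\star}_{ii} = \sum_{e \ni i} \mathbbm{1}\{e \in E\} \cdot \big(\sum_{j \in e \setminus \{i\}} \sigma^{\ast}_{i}\sigma^{\ast}_{j}\big)$ is a sum of \emph{independent} random variables, one per potential hyperedge containing $i$, with score $d-1-2r$ when the edge contains $r$ out-of-community neighbors of $i$. A Chernoff bound in the parameter $t \geq 0$, using that the number of such potential edges is $(1+o(1))\binom{d-1}{r}\binom{n-1}{d-1}/2^{d-1}$, gives
\[
\Pr\!\big[D^{\star}_{ii} \leq 0\big] \;\leq\; \exp\!\bigg(\!-\log n \cdot \frac{1}{2^{d-1}}\Big[\alpha(1 - e^{-(d-1)t}) + \beta \sum_{r=1}^{d-1}\binom{d-1}{r}(1 - e^{-(d-1-2r)t})\Big]\!\bigg).
\]
Optimizing over $t \geq 0$ produces exponent $I(d,\alpha,\beta)$, so a union bound over $i$ combined with $I(d,\alpha,\beta) > 1$ gives $\min_{i} D^{\star}_{ii} \geq \varepsilon \log n$ for some constant $\varepsilon > 0$, with probability $1-o(1)$. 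The second ingredient is the spectral concentration estimate $\|W - \mathbb{E}[W]\| = O(\sqrt{\log n})$, which I would derive via a truncated matrix Bernstein or $\varepsilon$-net argument; although entries of $W$ are correlated through shared hyperedges, each hyperedge contributes to only $\binom{d}{2}$ entries, so the dependency structure is sparse.

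Combining the two ingredients, for any unit vector $x \perp \sigma^{\ast}$,
\[
x^{\top} S^{\star} x \;\geq\; \min_{i} D^{\star}_{ii} - \|W - \mathbb{E}[W]\| - x^{\top}\big(\mathbb{E}[W] - \nu^{\star}\mathbf{1}\mathbf{1}^{\top}\big) x,
\]
where the last term is $O(1)$ by the choice of $\nu^{\star}$. Since $\min_{i} D^{\star}_{ii} = \Theta(\log n)$ dominates both $\sqrt{\log n}$ and constants, $x^{\top} S^{\star} x > 0$, certifying $S^{\star} \succeq 0$ with null space exactly $\operatorname{span}(\sigma^{\ast})$, as required. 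The main obstacle is the large-deviation step: the Chernoff exponent must exactly equal $I(d,\alpha,\beta)$ rather than something weaker. The rate function couples $d$ different edge-type contributions (scores $d-1, d-3, \ldots, -(d-1)$) under a single MGF parameter $t$, so the bookkeeping must be uniform across $r$, and the balanced-labeling conditioning on $\sigma^{\ast}$ introduces $O(1)$ corrections in the edge-type counts that must be absorbed into the $o(1)$ slack of the exponent. I would handle this by coupling to the i.i.d.\ Rademacher model and showing that the Chernoff rate is insensitive to the balancedness perturbation.
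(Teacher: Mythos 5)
Your overall route is the same as the paper's: the dual certificate of Lemma \ref{lem:uniqueness of sdp solution} with the same multipliers $D_{ii}=\sigma^*(i)\sum_j W_{ij}\sigma^*(j)$ as in \eqref{eq:diagonal-matrix}, the same hyperedge-level Chernoff computation with scores $d-1-2r$ and counts $(1+o(1))\binom{d-1}{r}\binom{n-1}{d-1}/2^{d-1}$ yielding the exponent $I(d,\alpha,\beta)$ (this is exactly Lemma \ref{lem:chernoff bound for hsbm}; your threshold $0$ versus $\eps\log n$ is a cosmetic difference absorbed by the slack $I>1$), and the final comparison $x^\top S x \geq \min_i D_{ii}-\|W-\mathbb{E}[W]\|_2 - O(1)$ on $\{x\perp\sigma^*\}$, which matches the paper's use of the decomposition \eqref{eq:W* decomposition} (the paper takes $\nu=1$, you take $\nu^*\approx(p'+q')/2$; both work).

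The one genuine gap is your second ingredient, the claim that $\|W-\mathbb{E}[W]\|_2=O(\sqrt{\log n})$ can be obtained "via a truncated matrix Bernstein or $\varepsilon$-net argument." Matrix Bernstein does not give this: writing $W-\mathbb{E}[W]=\sum_e (A_e-p_e)X_e$ with $X_e$ the clique-pattern matrix of $e$, the variance proxy is $\big\|\sum_e p_e X_e^2\big\|_2=\Theta(\log n)$, so the Bernstein tail forces a deviation of order $\log n$ (with a constant depending on $d,\alpha$), which does \emph{not} beat $\min_i D_{ii}\geq \eps\log n$ since $\eps$ can be arbitrarily small when $I(d,\alpha,\beta)$ is close to $1$. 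A plain $\varepsilon$-net union bound likewise fails in the sparse regime (this is the classical obstruction addressed by the Feige--Ofek light/heavy-pair machinery), and carrying that combinatorial argument through the hyperedge-induced dependencies of $W$ is precisely the content of the result the paper cites, \cite[Theorem 4]{Lee2020} (Lemma \ref{lem:hsbm-spectral norm}); it is not a routine adaptation that can be asserted in a sentence. Note, however, that your certificate argument only needs $\|W-\mathbb{E}[W]\|_2=o(\log n)$ with high probability, and that weaker statement has an easy proof: the symmetrization bound $\mathbb{E}\|W-\mathbb{E}[W]\|_2=O(\sqrt{\log n})$ of Theorem \ref{thm:sp-norm-conc} plus Markov's inequality gives $\|W-\mathbb{E}[W]\|_2\leq f(n)\sqrt{\log n}$ for any $f(n)=\omega(1)$, which suffices. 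So either invoke \cite{Lee2020} as the paper does, or replace your sketched concentration step by such an expectation-plus-Markov argument; as written, that step would not go through.
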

\begin{breakablealgorithm}
\caption{SDP recovery algorithm \cite{Kim2018}}\label{alg:SDP}
\begin{algorithmic}[1]
\Require{An $n \times n$ similarity matrix $W$}
\vspace{0.2cm}
\Ensure{An estimate of community assignments}
\vspace{0.2cm}
\State Solve the following SDP, where $X \in \mathbb{R}^{n \times n}$.
\vspace{-2mm}
\begin{equation}
\begin{aligned}
\max~&\langle W, X \rangle\\
\text{subject to } & X_{ii} = 1 \text{ for all } i \in [n]\\
&\langle X, \mathbf{11}^\top\rangle = 0,\\
&  X \succeq 0.
\end{aligned} \label{eq:SDP}
\end{equation}
\vspace{-2mm}
\State Let $\hat{X}$ be the optimal solution, and let $\hat{X} = \sum_{i=1}^n \lambda_i v_i v_i^\top$ denote the eigendecomposition of $\hat{X}$, where $\lambda_1 \geq \lambda_2 \geq \cdots \geq \lambda_n$.
\State Return $\hat{\sigma}_{ \text{SDP}} = \sign (v_1)$.
\end{algorithmic}
\end{breakablealgorithm}
\noindent Note that the SDP algorithm also works in denser regimes of \eqref{eq:parameter-regimes} \cite{Lee2020}. We also establish that the SDP continues to achieve exact recovery even under a \emph{monotone adversary} model. 
\begin{lemma}\label{lemma:monotone-adversary}
Consider the modified SDP based on \eqref{eq:SDP}, with $W$ replaced by $\widetilde{W}$ such that $\widetilde{W}_{ij} \geq W_{ij}$ if $\sigma^*(i)=\sigma^*(j)$, and $\widetilde{W}_{ij} \leq W_{ij}$ if $\sigma^*(i) \neq \sigma^*(j)$. If $I(d, \alpha, \beta) > 1$, then $X^*:=\sigma^*{\sigma^*}^{\top}$ is the unique optimal solution to the modified SDP.
\end{lemma}

Having analyzed the performance of the SDP relaxation, we now propose a spectral algorithm.
\begin{breakablealgorithm}
\caption{Spectral recovery algorithm}\label{alg:spectral}
\begin{algorithmic}[1]
\Require{An $n \times n$ similarity matrix $W$}
\vspace{.2cm}
\Ensure{An estimate of community assignments}
\vspace{0.2cm}
\State Compute the second eigenpair of $W$, denoted by $(\lambda_2, u_2)$, where $\lambda_1 \geq \lambda_2 \geq \cdots \geq \lambda_n$.
\State Return $\hat{\sigma}_{ \text{spec}} = \sign (u_2)$.
\end{algorithmic}
\end{breakablealgorithm}
We first establish Algorithm \ref{alg:spectral} achieves exact reovery up to the min-bisection threshold.
\begin{theorem}\label{theorem:spectral}
Fix $d \in \{2, 3, \dots\}$ and $\alpha > \beta > 0$ such that $I(d, \alpha, \beta)>1$. Let $f_n=\log n/\binom{n-1}{d-1}$. Suppose $G \sim \hsbm(d, n, \alpha f_n, \beta f_n)$, and let $W = \mathcal{S}(G)$. Let $u_2$ be the second eigenvector of $W$. Then with probability $1-o(1)$, there exist $s \in \{\pm 1\}$ and $\eta = \eta(d, \alpha, \beta) > 0$ such that 
\[ \sqrt{n} \min_{i \in [n]} s \sigma^*(i) u_{2,i} >\eta.\] 
As a result, the estimator $\hat{\sigma}_{\textnormal{\tiny{spec}}}$ produced by Algorithm \ref{alg:spectral} on input $W$ achieves exact recovery.
\end{theorem}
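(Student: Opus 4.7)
The plan is to establish that the second eigenvector $u_2$ of $W$ is close to $s\sigma^*/\sqrt{n}$ in $\ell_\infty$ for some global sign $s \in \{\pm 1\}$; a uniform approximation with error strictly less than $1/\sqrt{n}$ immediately yields sign recovery with margin $\eta/\sqrt{n}$. The strategy is to follow the entrywise eigenvector perturbation framework of \cite{Abbe2020ENTRYWISEEA}, adapted to the (non-independent) setting of hypergraph similarity matrices. First, I would analyze $M := \mathbb{E}[W]$: it takes the form $M_{ij} = a$ when $\sigma^*(i) = \sigma^*(j)$ (for $i \neq j$) and $M_{ij} = b$ otherwise, for explicit values $a > b = \Theta(\log n / n)$. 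Up to a diagonal correction, $M = \tfrac{a+b}{2}\mathbf{1}\mathbf{1}^\top + \tfrac{a-b}{2}\sigma^*{\sigma^*}^\top$, with top two eigenpairs $(\lambda_1^*, u_1^*) \approx (\tfrac{a+b}{2}n, \mathbf{1}/\sqrt{n})$ and $(\lambda_2^*, u_2^*) \approx (\tfrac{a-b}{2}n, \sigma^*/\sqrt{n})$, and a spectral gap of order $\lambda_2^* \asymp \log n$ above the remaining zero eigenvalues. Writing $W = \sum_{e \in \binom{[n]}{d}} X_e J_e$ with independent Bernoulli indicators $X_e$ and deterministic patterns $(J_e)_{ij} = \mathbf{1}[\{i,j\} \subset e]$, matrix Bernstein yields $\|W - M\|_2 \ll \lambda_2^*$, and Davis--Kahan gives $\|u_2 - s u_2^*\|_2 = o(1)$ for some $s$, WLOG taken to be $+1$.

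Next, for entrywise control, I would employ a leave-one-out argument: for each $m \in [n]$, define $W^{(m)}$ by replacing every $X_e$ with $e \ni m$ by its expectation, so that row $m$ of $W$ becomes independent of $W^{(m)}$. Letting $u_2^{(m)}$ be the second eigenvector of $W^{(m)}$, starting from the identity $\lambda_2 u_{2,i} = \langle W_{i\cdot}, u_2 \rangle$, substituting $u_2 \approx u_2^{(i)}$ in the problematic inner products, and bounding the residual gives the key estimate
\[
\|u_2 - W u_2^*/\lambda_2^*\|_\infty = o(1/\sqrt{n}),
\]
in which the pivotal step is a Bernstein bound on $\langle W_{i\cdot} - M_{i\cdot}, u_2^{(i)} \rangle$ enabled by leave-one-out independence. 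Expanding the first-order approximation entrywise,
\[
(W u_2^*/\lambda_2^*)_i = \frac{\sigma^*(i)}{\sqrt{n}} + \frac{1}{\lambda_2^* \sqrt{n}} \sum_{j} (W_{ij} - M_{ij}) \sigma^*(j),
\]
the fluctuation term is governed by a Chernoff-type tail whose exponent is precisely $I(d,\alpha,\beta)$. The hypothesis $I(d,\alpha,\beta) > 1$ is then exactly what lets a union bound over $i \in [n]$ succeed, ensuring the leading $\sigma^*(i)/\sqrt{n}$ survives by a uniform margin of $\eta/\sqrt{n}$.

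The main obstacle is proving the generic entrywise eigenvector bound when the entries of $W$ are \emph{not} independent, unlike in \cite{Abbe2020ENTRYWISEEA}. Although zeroing out all hyperedges through vertex $m$ fully decouples row $m$ from $W^{(m)}$, it does not decouple entries within $W^{(m)}$: any pair $(i,j)$ with $i,j \neq m$ still has correlated entries through hyperedges $\{i,j,k,\ell,\ldots\} \not\ni m$. One must therefore re-derive the entrywise perturbation result under structural conditions tailored to the decomposition $W = \sum_e X_e J_e$, rather than invoking the independent-entry theorem off the shelf; the authors promise exactly such a general-purpose entrywise bound for similarity matrices as a side contribution. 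Moreover, since the threshold hypothesis $I(d,\alpha,\beta) > 1$ leaves no slack at the union bound, both the spectral-norm concentration and the entrywise error must be sharp enough that they do not erode the Chernoff exponent for the fluctuation term.
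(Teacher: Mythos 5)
Your overall route is the same as the paper's: approximate $u_2$ in $\ell_\infty$ by the first-order term $Wu_2^*/\lambda_2^*$ via an entrywise perturbation analysis with a leave-one-out construction adapted to hyperedge dependence, then control $\sigma^*(i)(Wu_2^*)_i \propto \sum_j W_{ij}\sigma^*(i)\sigma^*(j)$ by a Chernoff bound whose exponent involves $I(d,\alpha,\beta)$, and union bound over $i$ using $I(d,\alpha,\beta)>1$. The cosmetic differences (deleting hyperedges through $m$ versus replacing their indicators by expectations; your opening phrase about $u_2$ being close to $\sigma^*/\sqrt n$, which is only true through the intermediate vector $Wu_2^*/\lambda_2^*$) do not change the argument.

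There is, however, one concrete step that fails as written: the claim that matrix Bernstein applied to $W-M=\sum_e (X_e-p_e)J_e$ gives $\|W-M\|_2\ll\lambda_2^*$. In the logarithmic-degree regime the variance proxy is $\bigl\|\sum_e p_e J_e^2\bigr\|_2=\Theta(\log n)$ and the summands have norm $d-1$, so matrix Bernstein only yields $\|W-M\|_2=O(\log n)$ --- the same order as $\lambda_2^*$ and the eigengap $\Delta_2^*$. That is not enough for Davis--Kahan to give $\|u_2-su_2^*\|_2=o(1)$, and, more importantly, the entire entrywise machinery requires $\gamma:=\|W-W^*\|_2/\log n=o(1)$ (it enters both the $\ell_2$ bounds on $u_2-u_2^{(m)}$ and the row-concentration rate through $\varphi(\gamma)$). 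The paper instead invokes the sharp concentration result of Lee, Kim, and Chung (\cite[Theorem 4]{Lee2020}, restated as Lemma \ref{lem:hsbm-spectral norm}), which gives $\|W-W^*\|_2=O(\sqrt{\log n})$ with high probability; this is proved by a Feige--Ofek-type combinatorial argument and does not follow from off-the-shelf matrix concentration. Any bound that is $o(\log n)$ would suffice for the qualitative conclusion, but matrix Bernstein does not even deliver that, so you must either cite the sharp result or supply a genuinely different proof of spectral-norm concentration. With that substitution, the rest of your outline matches the paper's proof.
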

We also show that the spectral algorithm succeeds in all the super-logarithmic degree regimes in \eqref{eq:parameter-regimes}.
\begin{theorem}\label{theorem:spectral-otheregimes}
Fix $d \in \{2, 3, \dots\}$. Let $p_n$ and $q_n$ be parameterized according to \eqref{eq:parameter-regimes} for some $f_n$ and constants $\alpha> \beta>0$. Suppose $G \sim \hsbm(d, n, \alpha f_n, \beta f_n)$, and let $W = \mathcal{S}(G)$. If $f_n=\omega( \log n/n^{d-1})$, then the estimator $\hat{\sigma}_{\textnormal{\tiny{spec}}}$ produced by Algorithm \ref{alg:spectral} on input $W$ achieves exact recovery; i.e. $$ \lim_{n \rightarrow \infty } \prob{\hat{\sigma}_{\textnormal{\tiny{spec}}} \in \{ \pm \sigma^* \}} =1.$$
\end{theorem}
Our proofs of Theorem \ref{theorem:spectral} and \ref{theorem:spectral-otheregimes} crucially rely on \emph{entrywise} bounds on the second eigenvector of the similarity matrix. To this end, we develop entrywise bounds on the eigenvectors of similarity matrices of a generic family of random hypergraphs (Definition \ref{def:general-hypergraph}).  
\begin{definition}[\textbf{General random hypergraph}]\label{def:general-hypergraph}
Let $d \in \{2,3, \dots\}$, $n \in \mathbb{N}$, and $p \in [0,1]^{\binom{[n]}{d}}$. Define $H(d,n,p)$ to be the distribution on $d$-uniform hypergraphs with $n$ vertices, where each edge $e \in \binom{[n]}{d}$ appears in the hypergraph with probability $p_e$, independently.
\end{definition}

We analyze the eigenvectors of $W = \mathcal{S}(G)$, where $G \sim H(d,n,p)$. Let $(\lambda_i, u_i)_{i=1}^n$ denote the eigenpairs of $W$, where $\lambda_1 \geq \lambda_2 \geq \cdots \geq \lambda_n$. Let $W^{*} = \mathbb{E}[W]$, with ordered eigenpairs $(\lambda_i^{*}, u_i^{*})_{i=1}^n$. We use the convention $\lambda_0 = \lambda_0^* = +\infty$ and $\lambda_{n+1} = \lambda_{n+1}^* = -\infty$. We then define the following eigengap quantity:
\[\Delta_k^* = \min\{\lambda_{k-1}^*-\lambda_k^* , \lambda_{k}^*-\lambda_{k+1}^* \}.\]
Our entrywise guarantee requires a spectral separation assumption. The assumption easily holds for similarity matrices of (general) HSBMs, in all the parameter regimes we are interested in. 
\begin{assumption}[\textbf{Spectral separation}]\label{asump:well-separated}
There is a sequence $\{\mu_n\}$ in $(0,\infty)$ such that
$$ \max\{p_e: e \in \mathcal{E}\} \leq \mu_n, \textnormal{ and} \hspace{2mm} n \binom{n-2}{d-2} \mu_n \geq c_0 \log n,$$
for some constant $c_0>0$. Moreover, there is a constant $c_1 \geq 1$ such that 
$$ \frac{1}{c_1}n^{d-1} \mu_n \leq |\lambda_k^*|, |\Delta_k^*| \leq c_1 n^{d-1} \mu_n; \textnormal{ i.e.} \hspace{2mm}|\lambda_k^*|, |\Delta_k^*|=\Theta(n^{d-1}\mu_n).$$
\end{assumption}
\noindent Under this assumption, we state our entrywise guarantee.
\begin{theorem}\label{thm:entrywise-analysis}
Let $k \in \mathbb{N}$ and $d \in \{2,3,\dots \}$ be constants. Let $p \in [0,1]^{\binom{[n]}{d}}$, such that Assumption \ref{asump:well-separated} holds for some $\mu_n$ and constants $c_0, c_1 > 0$. Let $G\sim H(d,n,p)$, and $W=\mathcal{S}(G)$. Then with probability $1-O(n^{-3})$,
$$\min_{s^*\in\{\pm 1\}} \infnorm{u_k -s^* \frac{W{u_k^*}}{\lambda_k^*}}\leq \hspace{2mm}\frac{c \infnorm{u_k^*}}{\log \log n},$$
where $c>0$ is some constant depending only on $d,c_0,$ and $c_1$. 
\end{theorem}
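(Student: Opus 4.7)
The plan is to adapt the leave-one-out methodology of Abbe, Fan, Wang, and Zhong~\cite{Abbe2020ENTRYWISEEA} to the similarity-matrix setting, whose key difficulty relative to the independent-entry case is that entries $W_{ij}$ and $W_{ik}$ are correlated through any hyperedge containing $\{i,j,k\}$. I would begin by collecting standard spectral ingredients. A matrix Bernstein bound, applied to the independent hyperedge indicators (each of which contributes a sparse rank-one block of ones to $W$), gives $\|W-W^{*}\| \lesssim \sqrt{\log n}$ with probability at least $1-O(n^{-3})$ under the sparsity hypothesis $\max_e p_e \leq c_0 \log n / \binom{n-1}{d-1}$. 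Combined with \Cref{asump:well-separated}, the Davis--Kahan theorem yields $\|u_k - s u_k^*\|_2 \lesssim 1/\sqrt{\log n}$ and $|\lambda_k - \lambda_k^*|\lesssim \sqrt{\log n}$, where $s\in\{\pm 1\}$ is the sign aligning $u_k$ with $u_k^*$.

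The central construction is a leave-one-out surrogate. For each $m \in [n]$ I would define $W^{(m)}$ to be the similarity matrix of the hypergraph obtained from $G$ by re-sampling every hyperedge that contains vertex $m$. In contrast to the independent-entry setting, such a hyperedge affects $W_{i_j i_l}$ for every pair $i_j, i_l \in e$, so $W^{(m)}$ may disagree with $W$ on many rows at once. Nevertheless, since only $\Theta(\log n)$ random hyperedges touch $m$, a second matrix Bernstein bound gives $\|W - W^{(m)}\| \lesssim \sqrt{\log n}$, whence Davis--Kahan produces $\|u_k - s_m u_k^{(m)}\|_2 \lesssim 1/\sqrt{\log n}$ for the appropriate sign $s_m$. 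The crucial property is that $u_k^{(m)}$ is measurable with respect to the hyperedges avoiding $m$, hence independent of the $m$-th row of $W$.

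Using $W u_k = \lambda_k u_k$, I would then decompose, for each coordinate $m$,
\begin{align*}
s (u_k)_m - \frac{(W u_k^*)_m}{\lambda_k^*}
&= \frac{1}{\lambda_k}\Bigl(W\bigl(s u_k - s_m u_k^{(m)}\bigr)\Bigr)_m
+ \frac{1}{\lambda_k}\Bigl(W\bigl(s_m u_k^{(m)} - u_k^*\bigr)\Bigr)_m \\
&\quad + \Paren{\frac{1}{\lambda_k}-\frac{1}{\lambda_k^*}}(W u_k^*)_m .
\end{align*}
The third term is immediately controlled by the eigenvalue perturbation and the magnitude of $(Wu_k^*)_m$. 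The first term is handled by Cauchy--Schwarz using a high-probability row-norm estimate $\|W_{m,\cdot}\|_2$ (valid because vertex $m$ has at most $O(\log n)$ incident hyperedges) together with the $\ell_2$ bound $\|s u_k - s_m u_k^{(m)}\|_2 \lesssim 1/\sqrt{\log n}$. The second term is the payoff of the leave-one-out: conditioning on $u_k^{(m)}$ reduces $(W(s_m u_k^{(m)} - u_k^*))_m$ to a sum of bounded random variables, one per hyperedge through $m$, which I would control via a Bernstein-type inequality in terms of $\|s_m u_k^{(m)} - u_k^*\|_2$ and $\|s_m u_k^{(m)} - u_k^*\|_\infty$.

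The main obstacle, and the source of the $\log\log n$ factor, is that the bound produced above for $\|u_k - Wu_k^*/\lambda_k^*\|_\infty$ is implicit: it depends on the very same $\ell_\infty$ deviations $\|s u_k - u_k^*\|_\infty$ and $\|s_m u_k^{(m)} - u_k^*\|_\infty$ that I am trying to control. I would therefore set up a self-improving recursion via the triangle inequality $\|s u_k - u_k^*\|_\infty \leq \|s u_k - Wu_k^*/\lambda_k^*\|_\infty + \|Wu_k^*/\lambda_k^* - u_k^*\|_\infty$ (and an analogous relation for $u_k^{(m)}$), where the residual terms are of the right order by the spectral and row-norm estimates. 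The contraction in each round of the recursion improves the bound by only a logarithmically small factor, so that $\Theta(\log\log n)$ rounds are needed before it stabilises at the target $\|u_k^*\|_\infty/\log\log n$. Making this rigorous requires ensuring that the probability-$(1-O(n^{-3}))$ events controlling the concentration and eigenvector perturbation estimates can be taken uniformly over $m \in [n]$ and over every level of the recursion, while preserving the sharp row-norm estimates in the presence of the dependencies induced by shared hyperedges. This uniform control, together with a careful accounting of how the hyperedge correlations enter the Bernstein step, is the core technical challenge.
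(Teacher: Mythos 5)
Your overall architecture (leave-one-out surrogate independent of row $m$, the three-term decomposition, a row-wise Bernstein bound exploiting that independence) matches the paper's, but two quantitative steps fail as stated. First, matrix Bernstein does \emph{not} give $\twonorm{W-W^*}\lesssim\sqrt{\log n}$ in this regime: the variance proxy is of order $\log n$ and the dimensional factor contributes another $\log n$, so Bernstein only yields $O(\log n)$. The sharp $O(\sqrt{\log n})$ bound is a genuinely nontrivial input (the paper imports it as Lemma \ref{lem:hsbm-spectral norm}, Lee--Kim--Chung's Theorem 4, a Feige--Ofek-type result), and it is essential: with $\twonorm{W-W^*}\leq\gamma\log n$ the final error scales like $\gamma+\varphi(\gamma)$ with $\varphi(\gamma)\asymp 1/\log(1/\gamma)$, so one needs $\gamma\asymp 1/\sqrt{\log n}$ to reach $1/\log\log n$; with the Bernstein-level bound $\gamma\asymp 1$, Davis--Kahan and the whole scheme collapse. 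Second, your treatment of the first term is too lossy. Cauchy--Schwarz with $\twonorm{W_{m\cdot}}\lesssim\sqrt{\log n}$ and $\Vert su-s_mu^{(m)}\Vert_2\lesssim 1/\sqrt{\log n}$, after dividing by $\lambda_k\asymp\log n$, gives only $O(1/\log n)$, which is far larger than the target $\infnorm{u_k^*}/\log\log n\asymp 1/(\sqrt{n}\log\log n)$ in the delocalized case. The paper needs, and proves, the much finer bound $\Vert su-s^{(m)}u^{(m)}\Vert_2\lesssim\infnorm{u}$ (see \eqref{eq:su-smum_2}), obtained from a generalized Davis--Kahan bound with numerator $\twonorm{(W-W^{(m)})u}$, which is in turn controlled via the structural fact that each row $i\neq m$ of $W-W^{(m)}$ has $\ell_1$ norm $(d-1)W_{im}$ (Lemma \ref{lem:combinatorial fact} and \eqref{eq:rows of A-Am}). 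This is exactly the step where the hyperedge-induced dependence must be confronted — the paper flags it as the main departure from Abbe et al. — and it is absent from your proposal.

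Relatedly, the $1/\log\log n$ factor is misdiagnosed: no $\Theta(\log\log n)$-round recursion is needed or used. The factor comes directly from the row-concentration rate — the paper's Lemma \ref{lem:row-conc} bounds $|(W-W^*)_{m\cdot}v|$ by $\infnorm{v}\,\varphi\bigl(\twonorm{v}/(\sqrt{n}\infnorm{v})\bigr)\log n$ with $\varphi(x)\asymp 1/\log(1/x)$ — applied to $v^{(m)}=s^{(m)}u^{(m)}-u^*$, whose $\ell_2$ norm is $\lesssim\gamma\asymp1/\sqrt{\log n}$, giving $\varphi(\gamma)\asymp 1/\log\log n$ in one shot. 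The only ``self-improvement'' required is a single bootstrap inequality $\infnorm{u}\leq c_3\infnorm{u^*}+o(1)\infnorm{u}$, which immediately yields $\infnorm{u}\lesssim\infnorm{u^*}$; since the coefficient on $\infnorm{u}$ is $o(1)$, iterating buys nothing, and in any case your proposed iteration keeps the $\ell_2$ leave-one-out bound frozen at $1/\sqrt{\log n}$, so it cannot repair the gap in the first term. To complete the proof along your lines you would need (i) to invoke the sharp spectral norm concentration rather than matrix Bernstein, (ii) the combinatorial control of $W-W^{(m)}$ and of $\twonorm{(W-W^{(m)})u}$ in terms of $\infnorm{u}$, and (iii) a quantitative row-concentration lemma of the $\varphi$ type, applied once, together with the one-step $\infnorm{u}\lesssim\infnorm{u^*}$ bound.
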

\begin{remark}
We remark that when $d=2$, the similarity matrix of a graph is just its adjacency matrix and our Theorem \ref{thm:entrywise-analysis} recovers the entrywise bounds of \cite{Abbe2020ENTRYWISEEA}. Moreover, $I(2,\alpha,\beta)= (\sqrt{\alpha} -\sqrt{\beta})^2/2$, which is the information-theoretic threshold for exact recovery in the SBM setting \cite{abbe2015exact}. Therefore, our Theorem \ref{theorem:SDP} and Theorem \ref{theorem:spectral}, respectively, recover \cite[Theorem 2]{Hajek2016} and \cite[Theorem 3.2]{Abbe2020ENTRYWISEEA}.
\end{remark}
\section{Analysis of the SDP Relaxation}\label{sec:SDP-outline}
\paragraph*{Analyzing the SDP.}
We use a dual certificate strategy as in \cite{Kim2018}. The dual of \eqref{eq:SDP} is given by
\begin{equation}
\begin{aligned}
\min~& \text{trace}(D) \\
\text{subject to } & D \text{ is } n\times n \text{ diagonal matrix, } \nu \in \IR,\\
& D+\nu \mathbf{11}^{\top}-W \succeq 0.
\end{aligned} \label{eq:dual}
\end{equation}
The form of the dual motivates the following sufficient condition, whose proof we include for completeness (see Appendix \ref{app:SDP-proofs}).
\begin{lemma}\label{lem:uniqueness of sdp solution}
Suppose there is a diagonal matrix $D \in \mathbb{R}^{n \times n}$ and $\nu \in \IR$ such that the following holds. Letting $S \triangleq D +\nu \mathbf{11}^\top-W $, the matrix $S$ is positive semidefinite, its second-smallest eigenvalue $\lambda_{n-1}(S)$ is strictly positive, and $S \sigma^*=0$. Then $X^*:=\sigma^*{\sigma^*}^\top$ is the unique optimal solution to \eqref{eq:SDP}.
\end{lemma}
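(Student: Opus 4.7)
The plan is to run a standard dual-certificate / complementary slackness argument, exploiting the specific algebraic form of $S$ together with the rank-one structure of the candidate primal solution $X^{*}=\sigma^*{\sigma^*}^{\top}$. The first step is to check that $X^{*}$ is primal feasible: since $\sigma^*(i)\in\{\pm 1\}$, the diagonal constraint $X^*_{ii}=1$ is immediate; since $\sigma^*$ is balanced, $\langle X^*,\mathbf{11}^\top\rangle = (\mathbf{1}^\top \sigma^*)^2=0$; and $X^*\succeq 0$ as a rank-one positive-semidefinite matrix. Meanwhile $(D,\nu)$ is dual feasible for \eqref{eq:dual} by hypothesis.

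The core computation is to expand, for any primal-feasible $X$,
\[
\langle W, X\rangle = \langle D+\nu\mathbf{11}^\top - S,\, X\rangle
= \operatorname{trace}(D) + \nu\,\langle \mathbf{11}^\top, X\rangle - \langle S, X\rangle
= \operatorname{trace}(D) - \langle S, X\rangle,
\]
where I used $X_{ii}=1$ in the first summand and $\langle X,\mathbf{11}^\top\rangle=0$ in the second. Since $S\succeq 0$ and $X\succeq 0$, the trace $\langle S,X\rangle \geq 0$, giving the weak-duality bound $\langle W,X\rangle \leq \operatorname{trace}(D)$. Evaluating at $X^*=\sigma^*{\sigma^*}^\top$ and using $S\sigma^*=0$ yields $\langle W,X^*\rangle = \operatorname{trace}(D) - {\sigma^*}^\top S\sigma^* = \operatorname{trace}(D)$, so $X^*$ attains the bound and is therefore optimal.

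For uniqueness, I would argue as follows. Let $\tilde X$ be any optimal primal solution. The optimality condition combined with the identity above forces $\langle S,\tilde X\rangle=0$. Since both $S$ and $\tilde X$ are positive semidefinite, this is equivalent to $S\tilde X = 0$, i.e.\ every column of $\tilde X$ lies in the null space of $S$. Here the eigenvalue assumption is essential: $\lambda_{n-1}(S)>0$ together with $S\sigma^*=0$ means $S$ has rank exactly $n-1$, so $\ker(S)=\operatorname{span}(\sigma^*)$. Consequently $\tilde X = c\,\sigma^*{\sigma^*}^\top$ for some scalar $c\geq 0$, and the diagonal constraint $\tilde X_{ii}=1$ together with $\sigma^*(i)^2=1$ forces $c=1$. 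Hence $\tilde X = X^*$, establishing uniqueness.

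There is no substantive obstacle: the only step that uses more than convex-duality bookkeeping is the passage from $\langle S,\tilde X\rangle =0$ to $S\tilde X=0$ (standard for PSD matrices via eigendecomposing either factor) and the rank-$(n-1)$ conclusion, which is exactly what the hypothesis $\lambda_{n-1}(S)>0$ is designed to supply. The real work of the theorem will be in subsequent sections, where one must construct the pair $(D,\nu)$ verifying the hypotheses of this lemma with high probability under $\mathsf{HSBM}(d,n,\alpha,\beta)$ whenever $I(d,\alpha,\beta)>1$.
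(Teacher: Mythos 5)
Your proposal is correct and follows essentially the same dual-certificate/complementary-slackness route as the paper's proof in Appendix A: weak duality via $\langle S,X\rangle\ge 0$ for feasible $X$, tightness at $X^*$ from $S\sigma^*=0$, and uniqueness by forcing $\langle S,\tilde X\rangle=0$ and using $\lambda_{n-1}(S)>0$ to pin the column space of $\tilde X$ to $\operatorname{span}(\sigma^*)$, with the diagonal constraint fixing the scalar. The only difference is cosmetic (you package the computation as the single identity $\langle W,X\rangle=\operatorname{trace}(D)-\langle S,X\rangle$ rather than the paper's chain of inequalities).
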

To apply Lemma \ref{lem:uniqueness of sdp solution}, we let $D$ be the diagonal matrix whose diagonal entries are specified by
\begin{equation}
D_{ii} = \sum_{j \in [n]} W_{ij} \sigma^{*}(i) \sigma^{*}(j).   \label{eq:diagonal-matrix}
\end{equation}
Setting $\nu = 1$, write $S = D +  \mathbf{11}^\top - W$. By construction, we have $S \sigma^{*} = 0$. It remains to show
\begin{equation}
\mathbb{P}\left(\inf_{x \perp \sigma^* : \Vert x \Vert_2 = 1} x^\top S x > 0 \right) = 1 - o(1). \label{eq:SDP-inf}    
\end{equation}
This is where our proof diverges from \cite{Kim2018}; rather than showing \eqref{eq:SDP-inf}, \cite{Kim2018} proceed through a different sufficient condition.
Using steps similar to the proof of \cite[Theorem 2]{Hajek2016}, we show that for all $x \perp \sigma^*$ such that $\Vert x \Vert_2 = 1$,
\begin{equation}
x^\top Sx \geq \min_{i \in [n]}D_{ii} - \Vert W - W^{*}\Vert_2, \label{eq:Hajek-step}    
\end{equation}
where $W^{*}$ is the expected value of $W$, conditioned on $\sigma^{*}$. It remains to (1) lower-bound $D_{ii}$ for each $i \in [n]$ and (2) upper-bound $\Vert W - W^{*}\Vert_2$.

To lower-bound $D_{ii}$, we condition on $\sigma^{*}$ and establish concentration of $D_{ii}$ around its mean. To see why $D_{ii}$ should be positive and bounded away from zero, it helps to rewrite \eqref{eq:diagonal-matrix} as follows:
\[D_{ii} = \sum_{j \in [n] : \sigma^*(i) = \sigma^*(j)} W_{ij} - \sum_{j \in [n] : \sigma^*(i) \neq \sigma^*(j)} W_{ij}.\]
While the values $\{W_{ij}\}_{j=1}^n$ are dependent, they are functions of \emph{independent} random variables (namely, the hyperedge random variables). After re-expressing $D_{ii}$ in terms of the underlying hyperedge random variables, the proof proceeds by a Chernoff-style argument (Lemma \ref{lem:chernoff bound for hsbm}). Whenever $I(d,\alpha,\beta) >1$, we establish the existence of $\eps > 0$ such that for each $i$, $D_{ii} \geq \eps \log n$ with probability $1 - o(n^{-1})$. A union bound then implies $\min_{i \in [n]} D_{ii} \geq \eps \log n$ with high probability. Next, we need a tail bound on $\Vert W - W^{*}\Vert_2$. While sharp concentration results are known \cite{Lee2020}, we note that we can also bound $\mathbb{E}\left[\Vert W - W^{*}\Vert_2\right]$ using much simpler techniques.
\begin{theorem}[\textbf{Spectral norm expectation}]\label{thm:sp-norm-conc} Let $d \in \{2,3,\dots \}$ be fixed. Let $p \in [0,1]^{\binom{[n]}{d}}$, where $\max_e p_e \leq \nicefrac{c_0 \log n}{\binom{n-1}{d-1}}$ for some constant $c_0>0$. Let $G\sim H(d,n,p)$, and $W=\mathcal{S}(G)$ whose expectation is $W^*$. Then there exists a constant $c\coloneqq c(d,c_0)>0$ such that
$$\Exp{[\twonorm{W-W^*}]} \leq c \sqrt{\log n}.$$
\end{theorem}
Markov's inequality immediately implies the desired tail bound. Returning to \eqref{eq:Hajek-step}, we see that $x^{\top} S x > 0$ simultaneously for all $x$ satisfying $x \perp \sigma^{*}, \Vert x \Vert_2 = 1$, with high probability, concluding \eqref{eq:SDP-inf}.

\paragraph*{Spectral norm concentration.}
We highlight our proof technique for bounding $\mathbb{E}\left[\Vert W - W^{*}\Vert_2\right]$ (Theorem \ref{thm:sp-norm-conc}). Similar bounds are well-known for the spectral norm $\Vert A - \mathbb{E}[A]\Vert_2$ in the case where $A$ is a symmetric matrix with independent, bounded entries and suitably bounded expectation \cite{Feige2005,Lei2015,Hajek2016}.
The first step in our proof is inspired by the symmetrization argument of \cite{Hajek2016}. Let $R$ be a symmetric tensor of order $d$ and dimension $n$ with independent Rademacher entries. Let $G \circ R$ be the hypergraph where each hyperedge is independently assigned to a $+1$ or $-1$ label. Let $\mathcal{S}(G \circ R)$ denote the corresponding similarity matrix, where the $(i,j)$ entry is the sum of $\pm 1$-weighted hyperedges containing $i,j$. We show that
\[\mathbb{E}\left[\Vert W - W^*\Vert_2 \right] \leq 2 \mathbb{E}\left[\Vert\mathcal{S}(G \circ R) \Vert_2\right].\]
A coupling argument then allows us to replace $G$ by $G^{(1)}\sim \text{HSBM}(d, n, p_{\text{max}}, p_{\text{max}})$, where $p_{\max}= \max_e{p_e}$. Unlike the matrix $W - W^{*}$, the matrix $S(G^{(1)} \circ R)$ has entries with identical distributions. However, the entries are dependent.

Our next goal is to create independence. We invoke Jensen's inequality, establishing
\[\mathbb{E}\left[\Vert W - W^*\Vert_2 \right] \leq 2 \mathbb{E}\left[\left \Vert \sum_{m=1}^K \mathcal{S}(G ^{(m)}\circ R^{(m)}) \right\Vert_2\right],\]
where each $G^{(m)}$ is an independent copy of $G^{(1)}$, each $R^{(m)}$ is an independent copy of $R$, and $K = d^2 -d$. Note that $\sum_{m=1}^K \mathcal{S}(G ^{(m)}\circ R^{(m)})$ is a sum of \emph{independent} matrices with \emph{dependent} entries. For any $m \in [K]$, observe that a given hyperedge random variable affects exactly $2 \times \binom{d}{2} = K$ entries of $\mathcal{S}(G ^{(m)}\circ R^{(m)})$. By adding $K$ independent copies, we can rearrange the underlying hyperedge random variables to achieve $\sum_{m=1}^K \mathcal{S}(G ^{(m)}\circ R^{(m)}) = \sum_{k=1}^K C^{(k)},$ where $\sum_{k=1}^K C^{(k)}$ is a sum of \emph{dependent} matrices with \emph{independent} entries. Then \[\mathbb{E}[\Vert W - W^*\Vert_2] \leq 2 \mathbb{E}\left[\left\Vert \sum_{k=1}^K C^{(k)} \right\Vert_2\right] \leq 2 \sum_{k=1}^K \mathbb{E}[\Vert C^{(k)}\Vert_2].\]
The final summation is then straightforwardly bounded using \cite[Theorem 5]{Hajek2016}; see Appendix \ref{appendix:spectral-norm} for a complete proof.

\section{Analysis of the Spectral Algorithm}\label{sec:spectral-outline}
\paragraph*{Correctness of the spectral algorithm.} Recall that $u_2$ denotes the second eigenvector of $W = \mathcal{S}(G)$. Since our algorithm determines communities based on the signs of $u_2$, we need precise bounds on each entry of $u_2$. A natural strategy would be to compare $u_2$ to $u_2^{*}$, since $u_2^* = \frac{1}{\sqrt{n}}\sigma^*$ due to the block structure of $W^*$. Unfortunately, $\Vert u_2 - u_2^*\Vert_{\infty}$ is too large for our purposes, but still $u_2$ recovers communities by sign. To gain intuition for this behavior, write 
\[u_2 - u_2^* = \left(\frac{W u_2^*}{\lambda_2^*}- u_2^{*} \right) + \left(u_2 - \frac{W u_2^*}{\lambda_2^*} \right).\]
\begin{figure}
    \begin{center}
        \includegraphics[scale=0.5]{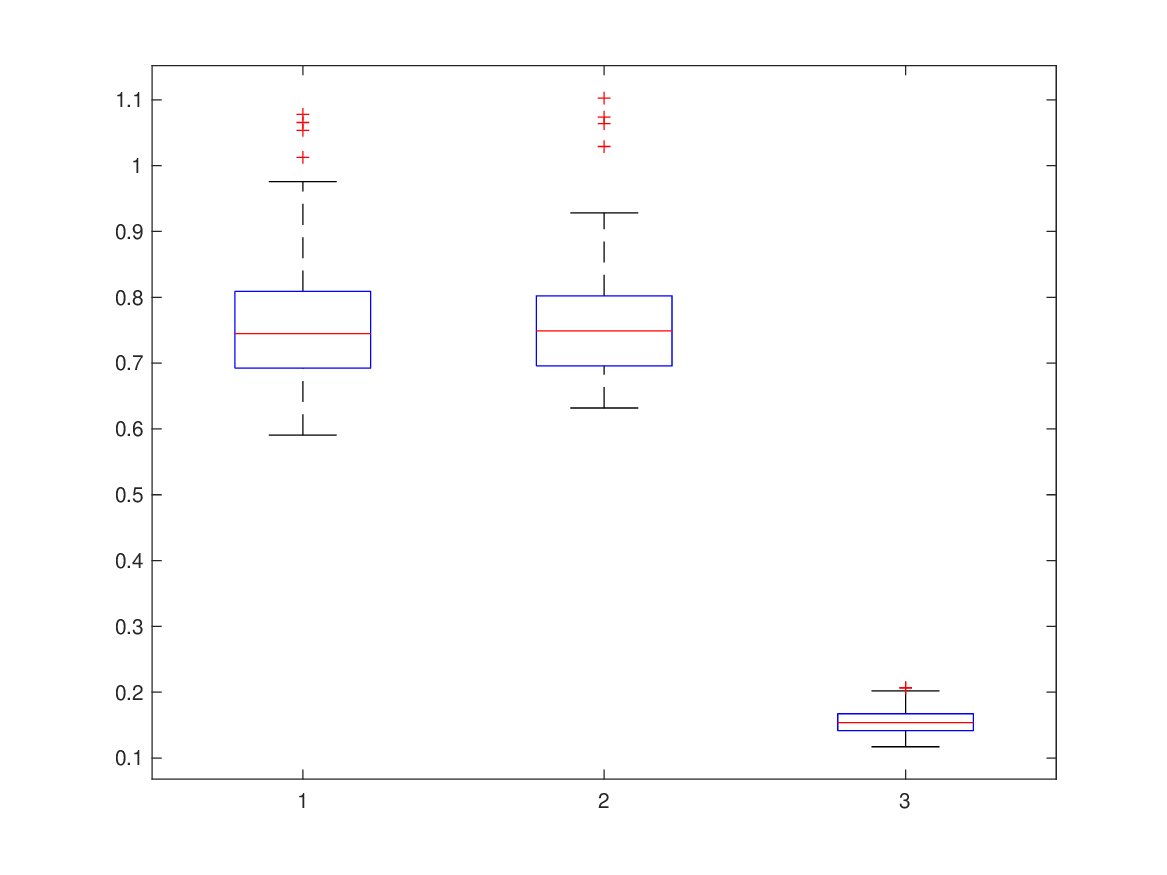}
    \end{center}
    \vspace{-10mm}
    \caption{Consider $ \hsbm(d,n,\alpha f_n,\beta f_n)$ for $d=4$, $n=1000$, $\alpha=50$, $\beta=10$ and $f_n=\log n/\binom{n-1}{d-1}$. The boxplots show three different errors over 100 realizations: (1)$\sqrt{n}\infnorm{ u_2-u_2^*}$, (2)$\sqrt{n}\Vert u_2^*-W u_2^*/\lambda_2^* \Vert_{\infty}$, and (3) $\sqrt{n} \Vert u_2 -W u_2^*/\lambda_2^*\Vert_{\infty} $.}
    \vspace{-5mm}
    \label{fig:boxplots}
\end{figure}
The first term on the right-hand side is the main term, while the second represents a smaller-order term (see Figure \ref{fig:boxplots}). Such behavior was also observed in the SBM setting by \cite{Abbe2020ENTRYWISEEA}.

Our first step is to apply Theorem \ref{thm:entrywise-analysis}, showing that
\begin{equation}
\min_{s^* \in \{\pm 1\}} \left \Vert u_2 - s^*\frac{W u_2^*}{\lambda_2^*}\right \Vert_{\infty} = o(1/\sqrt{n})    \label{eq:asymptotic-entrywise}
\end{equation}
(see Corollary \ref{cor:ewise-for-hsbm}). Therefore, if we can show that the vector $\frac{W u_2^*}{\lambda_2^*}$ has the same signs as $\sigma^*$ (up to a global sign flip), then the same is true for $u_2$. Our goal is then to show that $\sigma^*$ and $Wu_2^*/\lambda_2^*$ have the same signs, i.e. 
$$ \min_{i\in [n]} \sigma^*(i) \left(\frac{W u_2^*}{\lambda_2^*} \right)_i > 0. $$
Fixing the orientation $u_{2}^{*} = \frac{1}{\sqrt{n}} \sigma^*$, we obtain that for $i \in [n]$,
\[\sigma^*(i) \left(\frac{W u_2^*}{\lambda_2^*} \right)_i =  \frac{\sigma^*(i) \sum_{j \in [n]} W_{ij} \sigma^*(j)}{\lambda_2^* \sqrt{n}} =  \frac{D_{ii}}{\lambda_2^* \sqrt{n}}, \]
where $D_{ii}$ is defined in \eqref{eq:diagonal-matrix}. 

In the logarithmic degree regime, we have that $\lambda_2^*=\Theta(\log n)$. Moreover, as in the SDP analysis, we note that for $\epsilon > 0$ sufficiently small, $\min_{i \in [n]} D_{ii} \geq \epsilon \log n$ with high probability, whenever $I(d,\alpha,\beta)>1$. Therefore, the vector $\frac{W u_2^*}{\lambda_2^*}$ has the same signs as $\sigma^*$. Moreover, the entries are of order $1/\sqrt{n}$; in turn, \eqref{eq:asymptotic-entrywise} implies that $u_2$ also has the same signs as $\sigma^*$, up to a global sign flip, implying Theorem \ref{theorem:spectral}. Similarly, when $f_n= \omega(\log n/n^{d-1})$, the we have that $\lambda_2^*=\Theta(n^{d-1} f_n)$. In this case, using a Chernoff-style bound, we establish the existence of $\eps>0$ such that $\min_{i \in [n]} D_{ii} \geq \epsilon \, n^{d-1} f_n$ with high probability for all $\alpha>\beta>0$, giving us Theorem \ref{theorem:spectral-otheregimes}.

\paragraph*{Runtime analysis.} Observe that Step 2 of Algorithm \ref{alg:spectral} requires only $O(n)$ time. The bottleneck is the time required for Step 1 (to compute the second eigenvector of $W$). In order to compute the eigenpairs, one can use the power method on the matrix $W$, which computes the top eigenpair $(\lambda_1,u_1)$ first. The method converges in $O(\log (n)/\delta)$ iterations (see \cite{garber2016faster}), where $\delta = \frac{\lambda_1-\lambda_2}{ \lambda_1}$ is the relative eigengap. In our setting, both $\lambda_1-\lambda_2$ and $\lambda_1$ are $\Theta(n^{d-1} f_n)$. This is because the eigengap $\Delta_1^*=\lambda_1^*-\lambda_2^*$ of $W^*$ is $\Theta(n^{d-1}f_n)$ and $\norm{W-W^*}_2=o(n^{d-1}f_n)$ with high probability using \cite{Lee2020}. We conclude that $\delta=\Theta(1)$ and the power method converges in $O(\log n)$ iterations. In each iteration, it needs to multiply a vector with $W$. The cost of this operation depends on the sparsity of $W$. 

In the logarithmic degree regime, we have $O(n \log n)$ edges in the hypergraph, and thus, $W$ has at most $O(n \log n)$ non-zero entries, which is the effective cost of a matrix-vector multiplication. Therefore, the total time to compute the first eigenpair $(\lambda_1,u_1)$ of $W$ is $O(n \log^2 n)$. To obtain the second eigenvector, we can deflate $W$ by subtracting $\lambda_1  u_1 u_1^\top$. The new relative eigengap is $\frac{\lambda_2-\lambda_3}{\lambda_2}=\Theta(1)$ since the other eigenvalues of $W$ are close to 0.  We emphasize that, even though $W-\lambda_1 u_1 u_1^\top$ may have $n^2$ non-zero entries, a matrix-vector multiplication can still be done in $O(n\log n)$ time by multiplying the vector with $W$ and $\lambda_1 u_1 u_1^\top$ separately and then taking the difference. Thus, the power method requires $O(n \log^2
n)$ time to obtain the second eigenvector.

In super-logarithmic degree regimes, a matrix-vector multiplication may take up to $O(n^2)$ time. Therefore, the total runtime is $O(n^2 \log n)$ in the worst-case.

\paragraph*{Entrwise eigenvector analysis.} \cite{Abbe2020ENTRYWISEEA} introduced a powerful entrywise eigenvector bound, which has been used to show the optimality of spectral algorithms without the need of a clean-up stage \cite{Abbe2020ENTRYWISEEA,Dhara2022a, Dhara2022b, Dhara2022c}. Unfortunately, the entrywise bound \cite[Theorem 2.1]{Abbe2020ENTRYWISEEA} does not apply to $W = \mathcal{S}(G)$, since $W$ violates a certain independence assumption. The independence assumption is critically used in a \emph{leave-one-out} argument; we therefore carefully adapt this step. It also remains to prove a certain \emph{row concentration} property of the matrix $W$.

For simplicity, let $\lambda = \lambda_k$, $\lambda^* = \lambda^*_k$ and $\Delta_k^*=\Delta^*$. For clarity of presentation, we assume the orientation $\langle u, u^{*} \rangle \geq 0$, and make similar simplifying assumptions throughout the outline. Our goal is then to show
$$\infnorm{u -\frac{W{u^*}}{\lambda^*}} \lesssim \hspace{2mm}\frac{ \infnorm{u^*}}{\log \log n}.$$
We first relate $\lambda$ to $\lambda^{*}$. By Weyl's inequality, $|\lambda - \lambda^{*}| \leq \Vert W - W^{*} \Vert_2$. In turn, \cite[Theorem 4]{Lee2020} implies $\Vert W - W^{*}\Vert_2 \leq \gamma \lambda^{*}$ with high probability, for certain $\gamma = \gamma_n = o(1)$. It then follows that  $|\lambda^{-1}-{\lambda^*}^{-1}|  \lesssim \gamma|\lambda^*|^{-1}$. Using this observation along with the triangle inequality, we can show
\begin{align}
\infnorm{u-\frac{Wu^*}{\lambda^*}} &= \infnorm{\frac{Wu}{\lambda} - \frac{Wu^{*}}{\lambda} + \frac{Wu^{*}}{\lambda} -\frac{Wu^*}{\lambda^*}} \hspace{-2mm}\leq \left| \frac{1}{\lambda} -\frac{1}{\lambda^*} \right| \infnorm{Wu^*}+ \frac{1}{|\lambda|} \infnorm{W(u-u^*)} \nonumber \\ 
&\lesssim  \frac{1}{|\lambda^*|}\left(\gamma\infnorm{Wu^*} + \infnorm{W(u-u^*)}\right). \label{eq:entrywise-key-step}
\end{align}
Note that $u^*$ is a deterministic vector. Therefore, in order to bound the term $\Vert W u^{*} \Vert_{\infty}$, we derive a row concentration result (Lemma \ref{lem:row-conc}). For a fixed vector $v \in \mathbb{R}^n$, our row concentration result controls $\Vert W v \Vert_{\infty}$ in terms of both $\Vert v \Vert_{\infty}$ and $\Vert v \Vert_{2}$.

Since $u$ depends on $W$, the second term in \eqref{eq:entrywise-key-step} requires a different strategy than the first. We therefore apply the leave-one-out technique, motivated by other works using a similar strategy \cite{doi:10.1073/pnas.1307845110,doi:10.1073/pnas.1523097113,doi:10.1137/17M1122025,Abbe2020ENTRYWISEEA}. Bounding ~$\Vert W (u -u^{*}) \Vert_{\infty}$ reduces to bounding $\left| \left[W(u - u^{*})\right]_m\right|$ for each $m \in [n]$. To this end, we fix $m \in [n]$ and define a random matrix $W^{(m)}$ which is independent of the $m$-th row and column of $W$. Let $G^{(m)}$ be the hypergraph formed from $G$ by deleting all hyperedges containing $m$, and let $W^{(m)} = \mathcal{S}(G^{(m)})$ be its similarity matrix. Let $\um{m}$ be the $k$-th eigenvector of $W^{(m)}$. 
Applying the triangle and Cauchy--Schwarz inequalities, we obtain
  \begin{align}
      \left|\left[W(u-u^*)\right]_m \right|  & \leq \left|W_{m\cdot}(u -\um{m}) \right| + \left| W_{m\cdot} (\um{m}-u^*) \right| \nonumber\\
      & \leq \twotoinfnorm{W}\Vert u -\um{m} \Vert_2 + | W_{m\cdot} (\um{m}-u^*) |. \label{eq:outline-1}
  \end{align}
 Observe that $W_{m\cdot}$ and $\um{m}-u^*$ are independent by the leave-one-out construction. Therefore, we can bound the second term in \eqref{eq:outline-1} using our row concentration result. In order to bound $\Vert u -\um{m} \Vert_2$, we apply a version of the Davis--Kahan $\sin(\theta)$ theorem \cite{Davis1970TheRO}, which yields
 \[\Vert u -\um{m} \Vert_2 \lesssim \frac{\Vert (W - W^{(m)}) u\Vert_2}{\Delta^*}.\]

Our analysis so far is almost identical to that of \cite{Abbe2020ENTRYWISEEA}; the main difference arises here. Note that the $m$-th row and column $W-\A{m}$ are the same as those of $W$. Since a hyperedge containing the vertex $m$ contributes to other entries of $W$, there will be additional non-zero entries outside of the $m$-th row and column. Thus it is harder to find tight bounds on the quantities that essentially are of main interest: $\Vert W-W^{(m)}\Vert_2$ and $\Vert (W-W^{(m)})u\Vert_2$. This complication arises whenever $d \geq 3$, and is absent from the analysis of \cite{Abbe2020ENTRYWISEEA}, due to their independence assumption. However, we are still able to prove similar probabilistic bounds for these quantities via a series of careful (and non-trivial) lemmas that uses structural properties of the similarity matrix along with the spectral norm concentration of \cite{Lee2020} to yield the final sharp bounds. 
 
 \section{Discussion and future work}\label{sec:discussion}
This paper contributes to a line of work which studies random graph inference problems under restricted information. This paper considers \emph{aggregated} information through the similarity matrix transformation. Other recent works consider \emph{noisy} or \emph{censored} \cite{Abbe2014,Hajek2016b,Dhara2022a,Dhara2022b,Dhara2022c}] information models. While the similarity matrix is a lossy representation of the full hypergraph information, it retains most of the information about the \emph{latent} community structure. That is, the similarity matrix is sufficient for exact recovery in all denser regimes, and even in the logarithmic degree regime, at least up to the threshold $I(d,\alpha,\beta)=1$ which is slightly worse than the threshold given the full information (see Figure \ref{fig:phase-transition}). In Appendix \ref{subapp:sim vs adj}, we further investigate the recovery problem given the adjacency matrix $A \in \{0,1\}^{n \times n}$, where $A_{ij} = 1$ if $i,j$ appear together in some hyperedge. In the logarithmic degree regime, our results suggest that the adjacency matrix preserves much of the latent community structure, while in higher degree regimes the adjacency matrix becomes uninformative.

Our work shows that, in the logarithmic degree regime, the spectral algorithm matches the performance of min-bisection estimator, i.e. it succeeds whenever $I(d,\alpha,\beta)>1$. It is not clear whether the exact recovery given only the similarity matrix is possible when $I(d,\alpha,\beta)<1$ but $I_\text{\small full}(d,\alpha,\beta)>1$. Deriving a precise information-theoretic threshold for exact recovery given the similarity matrix remains an important direction for future work. We believe that $I(d,\alpha,\beta)$ is also the information-theoretic limit, which would make the spectral and the SDP algorithms optimal.  
 
Finally, the spectral algorithm is advantageous even when the full information is given, due to its computational efficiency compared to existing approaches. To our knowledge, this constitutes the first algorithm with a nearly linear runtime for exact recovery in the HSBM.

We include some directions for future work.
\begin{enumerate}
    \item What is the sharp information-theoretic threshold for exact recovery given only the similarity matrix in the logarithmic degree regime?
    \item We show that the SDP is robust to a monotone adversary who operates on the similarity matrix. What can we say about an adversary who instead operates on the hypergraph directly, by adding intra-community edges and removing inter-community edges? 
    \item Can we find a simple proof for the sharp concentration $\Vert W - W^*\Vert_2$, by leveraging our bound on $\mathbb{E}\left[\Vert W - W^*\Vert_2\right]$ and using a concentration argument? While Talagrand's inequality leads to a sharp bound in the $d = 2$ case \cite{Hajek2016}, it becomes vacuous in the $d \geq 3$ case.
    \item What is the precise recovery threshold given the adjacency matrix $A$, in the logarithmic degree regime? Does the same spectral strategy (of recovering communities based on the signs of the entries of the second eigenvector of $A$) achieve the information-theoretic limit? 
\end{enumerate}

\paragraph*{Acknowledgements} J.G. was partially supported by NSF CCF-2154100. N.J. was partially supported by NSF CAREER Award 1652491. The authors thank the anonymous reviewers for their helpful feedback. N.J. would like to deeply thank Ludovic Stephan for an insightful discussion on the MLE and the min-bisection estimator.

\bibliographystyle{alpha}
\bibliography{general}

\newpage
\appendix

 \section{Similarity Matrix vs. Adjacency Matrix Information Models}\label{subapp:sim vs adj}
In this section, we study the exact recovery problem given only the adjacency matrix $A \in \{0,1\}^{n \times n}$, where $A_{ij} = 1$ whenever $i,j \in [n]$ appear together in some hyperedge in $G$. For example, an academic collaboration network can be modeled as a hypergraph, where each hyperedge represents the authors of a paper. However, academic collaboration networks typically only record pairwise information, describing whether two researchers have co-authored a paper together--motivating the need to study this problem. 

\paragraph*{\textbf{Logarithmic degree regime.}}We first consider $f_n = \log n/\binom{n-1}{d-1}$. We conjecture that the exact recovery problem given $A$ exhibits a sharp recovery threshold, analogously to the exact recovery problem given $W$. Additionally, we expect the spectral algorithm that recovers communities based on the second eigenvector of $A$ to be optimal. In Figure \ref{fig:sim vs adj_plot}, we present our empirical findings from simulations.

\begin{figure}[ht]
    \centering
    \includegraphics[width=90mm, height=80mm]{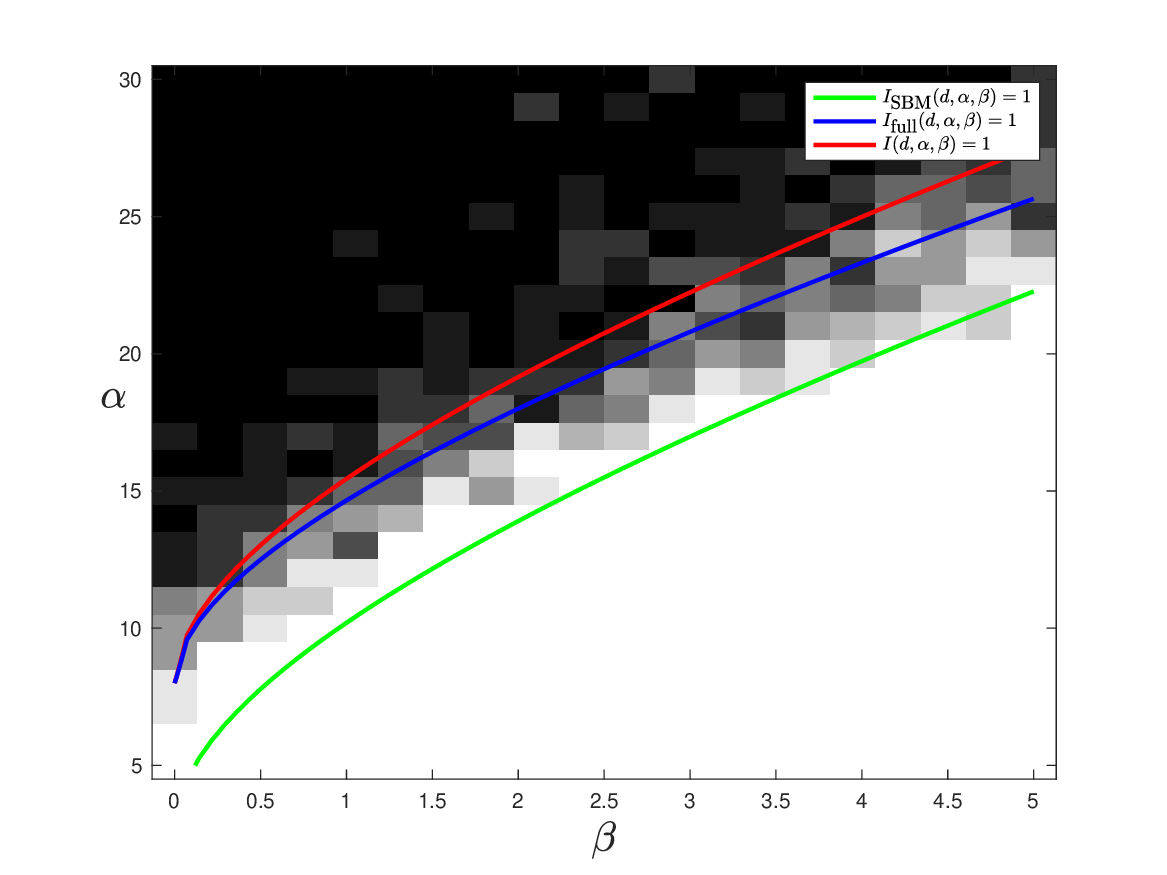}
    \caption{Visualizing the heat map of success of spectral algorithm on $A$ alongside $I(d,\alpha,\beta)>1$.}
    \label{fig:sim vs adj_plot}
\end{figure}

Fix $d=4$ and $n=500$. Let $G \sim \hsbm(d, n, \alpha f_n, \beta f_n)$ for different values of $(\alpha, \beta)$ and $W=\cS(G)$. Then, the adjacent matrix $A = \min\{W, \mathbf{11}^\top\}$. We run the spectral algorithm for various values $\alpha>\beta$. We report the proportion of success, namely $\hat{\sigma} \in \{ \pm \sigma^*\}$, out of 30 independent runs. Darker pixels represent higher chances of success in the heat map. We juxtapose this with the thresholds $I(d,\alpha,\beta)=1$ and $I_{\textnormal{full}}(d,\alpha,\beta)=1$. We find all three of them to be relatively close to each other, which suggests that the matrix $A$ still continues to retain much of the information about the ground truth community structure.

Since $A$ is an adjacency matrix, it is also natural to compare it to an SBM $G'$ which produces an adjacency matrix $A'$ with the same marginal edge probabilities. Of course, $A$ has a complicated dependency structure, while the entries of $A'$ are independent, conditioned on the community structure. We plot the recovery threshold for the SBM in green. More precisely, we transform the parameters $(\alpha, \beta)$ to corresponding SBM parameters $(\alpha',\beta')$ by solving the following. The intra-community probability 
$$p_n'=\frac{\alpha' \log n}{n} = 1 - \left(1 - \alpha f_n \right)^{\binom{n/2-2}{d-2}} \left(1 - \beta f_n \right)^{\binom{n-2}{d-2} - \binom{n/2-2}{d-2}},$$ 
and inter-community probability $$q_n'=\frac{\beta' \log n}{n} = 1 - \left(1 - \beta f_n \right)^{\binom{n-2}{d-2}}.$$
This gives us $\alpha'$ and $\beta'$ as functions of $\alpha,\beta$ and $d$. The line plots $I_{\textnormal{SBM}}(d,\alpha,\beta)=(\sqrt{\alpha'} - \sqrt{\beta'})^2/2 = 1$ on the $(\alpha, \beta)$ plane. This line falls below the recovery threshold corresponding to the full hypergraph information, which implies that the threshold corresponding to recovery from $A$ is strictly higher than the threshold corresponding to recovery from $A'$. In other words, even though $A$ and $A'$ have the same marginal entry distributions, the dependency structure in $A$ makes recovery strictly harder.

From a memory and runtime standpoint, recovery using $A$ or $W$ is asymptotically equivalent in the logarithmic degree regime, since with high probability each entry of $W$ is upper bounded by $4d = \Theta(1)$ (see Section \ref{appendix:spectral-norm}, Equation \ref{eq:F-c}). On the other hand, there are regimes in which the similarity matrix approach succeeds while the adjacency matrix approach fails, which we discuss below. 

\paragraph*{\textbf{Denser regimes.}} Fix $d$ and $\alpha>\beta>0$. Consider regimes such that $f_n=\omega(\log n/ n^{d-2})$ and $G\sim (d,n,\alpha f_n,\beta f_n)$. For each pair $i,j \in [n]$, the number of edges in involving $(i,j)$ grows with $n$. Therefore, with high probability, the adjacency matrix $A$ is simply the matrix of all ones (up to the diagonal). More precisely,  $\prob{A= (\mathbf{11}^\top -\mathbf{I}) \,} = 1-o(1)$, rendering the adjacency matrix uninformative.

In these regimes, the adjacency matrix information model (or its variant where $A=\min\{W,c\mathbf{11}^\top\}$ for any constant $c>0$) requires only up to $O(n^2)$ space to maintain, but it does not preserve any information. On the other hand, community recovery given the similarity matrix is still possible (Theorem \ref{theorem:spectral-otheregimes}).

\section{Proofs from Section \ref{sec:SDP-outline}: Analysis of the SDP Relaxation}\label{app:SDP-proofs}
We begin by noting a simple corollary of our spectral norm concentration theorem (Theorem \ref{thm:sp-norm-conc}) in the context of the HSBM.
\begin{corollary}\label{cor:hsbm-sp-norm-log-regime}
    Fix $d\in\{2,3,\dots\}$ and $\alpha>\beta>0$. Let $f_n=\log n/\binom{n-1}{d-1}$ and $G \sim \hsbm(d,n,\alpha f_n, \beta f_n)$. Let $W=\mathcal{S}(G)$ and $W^*=\Exp[W \mid \sigma^*]$. Then
    $$\prob{\norm{W-W^*}_2 \leq \log^{3/4} n} \geq 1-o(1).$$
\end{corollary}
\begin{proof}
   By Theorem \ref{thm:sp-norm-conc}, $\Exp[\norm{W-W^*}_2] \leq c \sqrt{\log n}$, for some $c>0$ that depends on $\alpha$ and $d$. Therefore, using Markov's inequality,
   $$\prob{\norm{W-W^*}_2 > \log^{3/4} n } \leq \, \frac{\Exp[\norm{W-W^*}_2]}{\log^{3/4}n} \, \leq \,\frac{c\sqrt{\log n}}{\log^{3/4}n}= \, o(1).\qed$$
   
\end{proof}
We now show Lemma \ref{lem:uniqueness of sdp solution}, which characterizes sufficient conditions under which $X^*=\sigma^* {\sigma^*}^{\top}$ is the unique solution of the SDP \eqref{eq:SDP},  using a dual certificate strategy. 
\begin{proof}[Proof of Lemma \ref{lem:uniqueness of sdp solution}]
We first show that $X^*=\sigma^* {\sigma^*}^\top$ is an optimal solution. For any $X$ satisfying the constraints in \eqref{eq:SDP},
\begin{align*}
    \ip{W}{X}& \leq \ip{W}{X}+\ip{S}{X} \tag{since $S,X \succeq 0$ }\\
   & = \ip{W}{X}+\ip{D+\nu \mathbf{11}^\top -W}{X} = \ip{D}{X} \tag{since $X$ is primal feasible}\\
   &= \ip{D}{X^*} \tag{as $X^*_{ii}=X_{ii}=1$ for all $i \in [n]$}\\
   &= \ip{W+S-\nu \mathbf{11}^\top}{X^*} = \ip{W}{X^*}+\ip{S}{X^*} = \ip{W}{X^*}. \tag{since $ \ip{S}{X^*}= (\sigma^*)^\top S\sigma^*=0$}
\end{align*}
Therefore it only remains to establish the uniqueness of $X^*$. Consider an optimal solution $\Tilde{X}$. Then
\begin{align*}
    \ip{S}{\Tilde{X}}&= \ip{D+\nu \mathbf{11}^\top-W}{\Tilde{X}} = \ip{D-W}{\Tilde{X}} \tag{since $\ip{\mathbf{11}^\top}{\Tilde{X}}=0$ }\\
    & = \ip{D-W}{X^*} \tag{as $\ip{W}{X^*}=\ip{W}{\Tilde{X}}$ and $X^*_{ii}=\Tilde{X}_{ii}=1$}\\
    &= \ip{S}{X^*}= 0. \tag{using $S\sigma^*=0$}
\end{align*}
Since $\Tilde{X} \succeq 0$ and $S \succeq 0$ and $\lambda_{n-1}(S)>0$, we obtain that $\Tilde{X}$ is also a rank-1 matrix and hence it must be a multiple of $\sigma^* {\sigma^*}^\top$. Moreover, as $\Tilde{X}_{ii}=1$ for all $i \in [n]$, it must be that $\Tilde{X}=X^*=\sigma^* {\sigma^*}^\top$.
\end{proof}

We now focus on showing that the choice of $D$ mentioned in \eqref{eq:diagonal-matrix} and $\nu=1$ satisfy the conditions in Lemma \ref{lem:uniqueness of sdp solution} with high probability whenever $I(d,\alpha,\beta)>1$ to show Theorem \ref{theorem:SDP}. Towards this, we prove a lemma that plays an important role in proving the theorem. Roughly speaking, it provides a probabilistic lower bound on $D_{ii}$ defined in \eqref{eq:diagonal-matrix} for any $i\in[n]$. 
\begin{lemma}\label{lem:chernoff bound for hsbm} 
Let $d \in \{2, 3, \dots\}$, and $\alpha > \beta > 0$, such that $I(d, \alpha, \beta) > 1$. Let $f_n=\log n /\binom{n-1}{d-1}$ and $W=\mathcal{S}(G)$ where $G\sim \hsbm(d,n,\alpha f_n,\beta f_n)$. Then there exists a constant $\eps:=\eps(d,\alpha,\beta)>0$ such that for any fixed $i\in [n]$, with probability at least $1-o(n^{-1})$,
$$ \sum_{j\in [n]} W_{ij} \sigma^*(i) \sigma^*(j) \geq \eps \log n.$$
\end{lemma}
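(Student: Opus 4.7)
The plan is to express the left-hand side as a sum of independent Bernoulli random variables (the hyperedge indicators) weighted by signed coefficients, and then apply a Chernoff bound whose rate function matches $I(d,\alpha,\beta;t)$ in the definition of the threshold.

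First, I would rewrite the statistic in terms of hyperedges. Let $X_e = \mathbb{I}[e \in E]$, so that for any $j\ne i$ we have $W_{ij} = \sum_{e\ni\{i,j\}} X_e$. Swapping the order of summation,
\begin{equation*}
S_i := \sum_{j \in [n]} W_{ij}\,\sigma^*(i)\sigma^*(j) = \sum_{e \ni i} X_e \cdot \sigma^*(i)\sum_{j \in e\setminus\{i\}} \sigma^*(j) = \sum_{e \ni i} X_e\, w_e,
\end{equation*}
where, if $e\ni i$ contains $r=r(e)$ vertices of the community opposite to $i$, then $w_e = (d-1-r) - r = d-1-2r$. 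The $X_e$'s are mutually independent with parameter $p_e = \alpha\log n/\binom{n-1}{d-1}$ when $r=0$ (intra-community hyperedge through $i$) and $p_e = \beta\log n/\binom{n-1}{d-1}$ when $r\in\{1,\dots,d-1\}$.

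Next, for $t>0$ I would apply the standard exponential Markov bound together with $1+x \leq e^x$:
\begin{equation*}
\Pr\bigl(S_i \leq \eps\log n\bigr) \;\leq\; e^{t\eps\log n}\prod_{e\ni i}\bigl(1 + p_e(e^{-tw_e}-1)\bigr) \;\leq\; \exp\!\left(t\eps\log n \,+\, \sum_{e\ni i} p_e(e^{-tw_e}-1)\right).
\end{equation*}
Grouping hyperedges by $r$, the number of $e\ni i$ with $r=0$ is $\binom{n/2-1}{d-1}$ and for $r\ge 1$ it is $\binom{n/2}{r}\binom{n/2-1}{d-1-r}$. Dividing by $\binom{n-1}{d-1}$, these ratios tend to $1/2^{d-1}$ and $\binom{d-1}{r}/2^{d-1}$ respectively as $n\to\infty$. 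Substituting into the inner sum yields
\begin{equation*}
\sum_{e\ni i} p_e(e^{-tw_e}-1) \;=\; -\,I(d,\alpha,\beta;t)\,\log n\,(1+o(1)),
\end{equation*}
where $I(d,\alpha,\beta;t)$ is precisely the function maximized in the definition of $I(d,\alpha,\beta)$.

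Finally, since $I(d,\alpha,\beta;0)=0$ with positive derivative $\frac{d-1}{2^{d-1}}(\alpha-\beta)>0$ at $t=0$, and $I(d,\alpha,\beta)>1$, I may choose a constant $t^*>0$ with $I(d,\alpha,\beta;t^*)>1$. Then I pick $\eps:=\eps(d,\alpha,\beta)>0$ small enough that $t^*\eps < I(d,\alpha,\beta;t^*)-1$. Plugging $t=t^*$ into the Chernoff bound gives
\begin{equation*}
\Pr\bigl(S_i \leq \eps\log n\bigr) \;\leq\; \exp\!\bigl((t^*\eps - I(d,\alpha,\beta;t^*))\log n\,(1+o(1))\bigr) \;\leq\; n^{-(1+\delta)}
\end{equation*}
for some $\delta>0$ and all sufficiently large $n$, which is $o(n^{-1})$ as required. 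The main technical step is the asymptotic counting in the middle paragraph and recognizing it as the rate function $I(d,\alpha,\beta;t)$; everything else is a textbook large-deviation tilt. The hypothesis $I(d,\alpha,\beta)>1$ enters precisely as the slack that lets the Chernoff exponent beat $-\log n$.
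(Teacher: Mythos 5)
Your proposal is correct and follows essentially the same route as the paper's proof: rewriting $\sum_j W_{ij}\sigma^*(i)\sigma^*(j)$ as a signed sum of independent hyperedge indicators grouped by the number $r$ of opposite-community vertices, applying an exponential Markov (Chernoff) bound, and using the asymptotic counts $N_r/\binom{n-1}{d-1}\to\binom{d-1}{r}/2^{d-1}$ so the exponent becomes $t\eps\log n - \psi(t)\log n(1+o(1))$ with $\psi$ the function maximized in $I(d,\alpha,\beta)$. The only cosmetic difference is that you pick any $t^*>0$ with $\psi(t^*)>1$ rather than the argmax, which changes nothing of substance.
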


\begin{proof}
Fix $i \in [n]$, and let $X \triangleq  \sum_{j\in [n]} W_{ij} \sigma^*(i) \sigma^*(j)$. Let $\mathcal{E}\coloneqq \binom{[n]}{d}$ be the set of possible edges. For each $e \in \mathcal{E}$, let $A_e$ be the indicator that edge $e$ is present. Let $\Em{i}:=\{ e\in \mathcal{E} : i \in e \}$ represent the set of potential edges incident on $i$.

For any edge $e\in \Em{i} $, let $n_i(e):=| \{ j\in e \setminus \{i\}: \sigma^*(i) \neq \sigma^*(j) \} |$ be the number of vertices in $e$ that belong to the opposite community as $i$. We can rewrite $X$ as follows
\begin{align*}
X &= \sum_{e \in \mathcal{E}^{(i)}}  \sum_{j \in e, j \neq i} \sigma^*(i) \sigma^*(j) A_e\\
&= \sum_{e \in \mathcal{E}^{(i)}}\left((d-1 - n_i(e))-n_i(e) \right) A_e\\
&= \sum_{e \in \mathcal{E}^{(i)}}\left( d - 1- 2n_i(e)  \right) A_e.
\end{align*}
Next, observe that for $r \in \{0, 1, \dots, d-1\}$, the set $\{e \in \mathcal{E}^{(i)} : n_i(e) = r\}$ has cardinality 
\[N_r :=  \binom{n/2}{r} \binom{n/2-1}{d-1 -r}.\]
Let $\{Y_r\}_{r=0}^{d-1}$ be independent random variables, where $Y_r \sim \text{Bin}(N_r, q_r)$, with 
\[
q_r = \begin{cases}
      \nicefrac{\alpha \log n}{\binom{n-1}{d-1}}, & \text {if } r=0;\\
      \nicefrac{\beta \log n}{ \binom{n-1}{d-1}}, & \text{if } 1 \leq r \leq d-1.
\end{cases}
\]
We then further rewrite $X$ as follows:
\begin{align*}
X &= \sum_{r=0}^{d-1} \sum_{e \in \mathcal{E}^{(i)}} \mathbbm{1}\{n_i(e) = r\}\left(d - 1 - 2n_i(e) \right) A_e,
\end{align*}
so that $X$ is equal to $\sum_{r=0}^{d-1} (d-1-2r) Y_r$ in distribution.

Let $h_r = d-1-2r$. Fix $\epsilon \in \mathbb{R}$ and $t \geq 0$. Exponentiating and applying Markov's inequality,
\begin{align}
    \prob{X \leq \eps \log n}  &\leq \prob{e^{-tX} \geq e^{-t \eps \log n}} \nonumber \\
    & \leq \frac{\Exp{[e^{-tX}]}}{e^{-t\eps \log n}} \nonumber \\
    &= e^{t\eps \log n} \Exp{[e^{-t\sum_{r=0}^{d-1} h_r Y_r}]} \nonumber  \\
    &= e^{t\eps \log n} \prod_{r=0}^{d-1} \Exp{[e^{-t h_r Y_r}]}  \nonumber \\
    & =  e^{t \eps \log n} \prod_{r=0}^{d-1} \left(1-q_r (1- e^{-t h_r} )\right)^{N_r} \nonumber \\
   & \leq \text{exp} \left(t \eps \log n - \sum_{r=0}^{d-1} N_r q_r (1-e^{-h_rt})  \right) \label{eq:Chernoff}.
 \end{align}
Here, the second equality is due to independence of the $Y_r$ random variables, and the final step uses $1-x \leq e^{-x}$. Next,
\begin{align*}
N_r &= \binom{n/2}{r} \binom{n/2-1}{d-1 -r}=\frac{\binom{n/2}{r} \binom{n/2-1}{d-1 -r}}{\binom{n-1}{d-1}} \binom{n-1}{d-1}\\
&=(1+o(1)) \frac{(n/2)^r}{r\,!} \cdot \frac{(n/2)^{d-1-r}}{(d-1-r)\,!} \cdot \frac{(d-1)!}{n^{d-1}} \cdot \binom{n-1}{d-1}\\
&=(1+o(1)) \frac{1}{2^{d-1}} \binom{d-1}{r} \binom{n-1}{d-1}.
\end{align*}
Substituting into \eqref{eq:Chernoff}, we obtain
\begin{align*}
\mathbb{P}\left(X \leq \epsilon \log n\right) &\leq \exp \left( t \eps \log n - \frac{(1+o(1))}{2^{d-1}} \bigg(  \alpha (1-e^{-(d-1)t}) +\sum_{r=1}^{d-1}  \beta \binom{d-1}{r}(1-e^{-(d-1-2r)t})\bigg)\log n\right).
\end{align*}
Let $t = t^*(d,\alpha,\beta)$, where
$$t^*(d, \alpha, \beta)=\arg \max_{t\geq 0} \hspace{2mm} \frac{1}{2^{d-1}}\left(  \alpha (1-e^{-(d-1)t}) +\sum_{r=1}^{d-1}  \beta \binom{d-1}{r} (1-e^{-(d-1-2r)t})\right) := \arg \max_{t \geq 0} \psi(t).$$
We then obtain
\begin{align*}
   \prob{X \leq \eps \log n} & \leq \text{exp} \big(t^* \eps \log n - I(d,\alpha,\beta) \log n + o(\log n)  \big) \\
   &\leq n^{-I(d,\alpha,\beta)+ t^* \eps + o(1)}.
\end{align*}
Note that $t^* \neq 0$ as 
\[\lim_{t \rightarrow 0^{+}} \psi^{\prime} (t) =  \frac{1}{2^{d-1}}\left( \alpha(d-1) + \beta \sum_{r=1}^{d-1} \binom{d-1}{r}(d-1-2r) \right) = (\alpha-\beta)(d-1)/2^{d-1}>0.\]Furthermore, since $I(d,\alpha,\beta)>1$, one can choose $\eps=\eps(d,\alpha,\beta)>0$ sufficiently small such that
$$\prob{ \sum_{j\in [n]} W_{ij} \sigma^*(i) \sigma^*(j) \leq \eps \log n} = o(n^{-1}). \qed$$

\end{proof}
Finally, we make some important observations about the structure of $W^*$. Observe that $W^*$ has a \emph{block} structure (up to the diagonal entries). In particular, $W^*$ is a zero diagonal symmetric matrix whose non-diagonal entries are given by 
\begin{equation}
 W^*_{ij} =
\begin{cases}
    p' \triangleq \binom{n/2-2}{d-2} \alpha f_n +  \left( \binom{n-2}{d-2}-\binom{n/2-2}{d-2} \right) \beta f_n, & \text{if } \sigma^*(i)=\sigma^*(j);\\ 
    q' \triangleq \binom{n-2}{d-2} \beta f_n,             & \text{if } \sigma^*(i) \neq \sigma^*(j).
\end{cases}   \label{eq:block-structure}
\end{equation}
Observe that $W^*$ can be decomposed as
\begin{equation}\label{eq:W* decomposition}
    W^*=\left( \frac{p'+q'}{2} \right) \mathbf{11}^\top + \left( \frac{p'-q'}{2} \right) \sigma^* {\sigma^*}^\top - p' \mathbf{I}.
\end{equation}

\begin{proof}[Proof of Theorem \ref{theorem:SDP}]
The proof uses ideas from the proof of \cite[Theorem 2]{Hajek2016}.
Let $\nu =1$ and $S \triangleq D + \nu \mathbf{11}^\top - W = D + \mathbf{11}^\top - W$. The goal is to show that $S$ satisfies the conditions mentioned in Lemma \ref{lem:uniqueness of sdp solution} with high probability whenever $I(d,\alpha,\beta)>1$. Observe that, by definition of $D$ in \eqref{eq:diagonal-matrix}, for any $i\in[n]$, we have $D_{ii} \sigma^{*}(i) =\sum_{j\in [n]} W_{ij} \sigma^*(j)$; i.e. $D\sigma^*=W \sigma^*$. Therefore, using the fact that $\ip{\mathbf{1}}{\sigma^*}=0$, we get
$$S\sigma^*=  D\sigma^* + \mathbf{11}^\top \sigma^* -W\sigma^* = 0.$$

Therefore, it remains to show that
$$\prob{ \left\{\inf_{x \perp \sigma^{*} : \Vert x \Vert_2 = 1} x^\top S x > 0 \right\}} \geq 1-o(1).$$
For any $x \perp \sigma^*$ such that $\twonorm{x}=1$, 
\begin{align*}
    x^\top S x &= x^\top D x+  x^\top \mathbf{11}^\top x - x^\top (W-W^*) x - x^\top W^* x\\
    &= x^\top D x +  (\mathbf{1}^\top x)^2 - x^\top (W-W^*) x - \left(\frac{p'-q'}{2} \right)(x^\top \sigma^*)^2  - \left( \frac{p'+q'}{2} \right) (\mathbf{1}^\top x)^2 + p' \tag{using \eqref{eq:W* decomposition} to substitute $W^*$}\\
    &=x^\top D x + \left(1- (p'+q')/2 \right) (\mathbf{1}^\top x)^2 - x^\top (W-W^*) x + p' \tag{since $x \perp \sigma^*$}\\
        & \geq x^\top D x + p' - x^\top (W-W^*) x   \tag{since $p',q'=\Theta(\log n /n)$ are vanishing}\\
        & \geq \min_{i\in [n]} D_{ii} - \twonorm{W-W^*}.   \tag{by the definition of $\twonorm{.}$ for matrices and the fact $p' \geq 0$}
\end{align*}
We now use Lemma \ref{lem:chernoff bound for hsbm} and take a union bound over $i$ to obtain $\min_{i\in [n]} D_{ii} \geq \eps \log n$ with probability $1-o(1)$. Moreover, applying Corollary \ref{cor:hsbm-sp-norm-log-regime}, $\twonorm{W-W^*}\leq \log^{3/4} n$  with probability $1-o(1)$. Therefore, one can conclude that $x^\top S x \geq \eps \log n -\log^{3/4} n>0$ for any $x$ such that $\twonorm{x}=1$ and $x\perp \sigma^*$, completing the proof.
\end{proof}
We additionally show that the SDP is robust to a monotone adversary (Lemma \ref{lemma:monotone-adversary}). Here, we consider an adversary who can increase the value of $W_{ij}$ for any $(i,j)$ in the same community, and decrease the value of $W_{ij}$ for any $(i,j)$ in opposite communities. The robustness of SDPs to monotone adversaries is well-known (see e.g. \cite{Feige2000, Feige2001}). While the monotone adversary appears to provide helpful information, spectral algorithms generally fail under such a semirandom model.

\begin{proof}[Proof of Lemma \ref{lemma:monotone-adversary}]
Let $X$ be a feasible solution to the modified SDP, and let $X^*$ be the unique optimal solution to \eqref{eq:SDP}, which is guaranteed by Theorem \ref{theorem:SDP}. Due to uniqueness of $X^{*}$, we have $\langle W, X \rangle < \langle W, X^*\rangle$. Since $X\succeq 0$, we can write 
\begin{align*}
X = \sum_{l=1}^n \lambda_l v_l v_l^\top
\end{align*}
as its eigendecomposition, where $\lambda_l \geq 0$ for all $l$. Then by the Cauchy--Schwarz inequality,
\begin{align*}
X_{ij}^2 &= \left(\sum_{l=1}^n \lambda_l v_{l,i} v_{l,j}\right)^2\\
&\leq \left(\sum_{l=1}^n \lambda_l v_{l,i}^2 \right) \left(\sum_{l=1}^n \lambda_l v_{l,j}^2 \right)\\
&= X_{ii} \cdot X_{jj}\\
&= 1.
\end{align*}
Therefore, $|X_{ij}| \leq 1$ for all $i,j$, which implies \[\ip{\widetilde{W}-W}{X} \leq \sum_{i,j \in [n]} |\widetilde{W}_{ij}-W_{ij}| = \ip{\widetilde{W}-W}{X^*}.\] Consequently, $$\ip{\widetilde{W}}{X}= \ip{W}{X}+ \ip{\widetilde{W}-W}{X} < \ip{W}{X^*}+ \ip{\widetilde{W}-W}{X^*} = \ip{\widetilde{W}}{X^*},$$ establishing the unique optimality of $X^*$.
\end{proof}
One could also consider another natural monotone adversary that operates directly on the underlying hypergraph instead of on its similarity matrix. More specifically, the adversary is allowed to add intra-community edges and delete cross-community edges. It is not immediately clear whether an SDP algorithm would be robust to such an adversary.

\section{Proofs from Section \ref{sec:spectral-outline}: Analysis of the Spectral Algorithm}\label{sec:spectral}
In Section \ref{subapp:spectral analysis log regime}, we prove that the spectral algorithm succeeds up to the min-bisection threshold in the logarithmic degree regime (Theorem \ref{theorem:spectral}). In Section \ref{subapp:spectral recovery in dense regime}, we establish its correctness in denser regimes (Theorem \ref{theorem:spectral-otheregimes}). Finally, in Section \ref{subapp:ewise analysis}, we present the proof of our general entrywise eigenvector bound (Theorem \ref{thm:entrywise-analysis}).   

\subsection{Proving Theorem \ref{theorem:spectral}}\label{subapp:spectral analysis log regime}
We first derive a corollary of Theorem \ref{thm:entrywise-analysis}, specific to HSBMs.
\begin{corollary}\label{cor:ewise-for-hsbm}
Fix $d\in\{2,3,\dots\}$. Choose any $\alpha>\beta>0$ and $f_n$ according to \eqref{eq:parameter-regimes}. Let $G\sim \textnormal{HSBM}(d,n,\alpha f_n,\beta f_n)$ and $W=\cS(G)$. If $f_n=\Omega(\log n/n^{d-1})$, then with probability at least $1-O(n^{-3})$,
$$ \min_{s^* \in \{\pm 1\}} \infnorm{ u_2 - s^* \frac{W u_2^*}{\lambda_2^*}} \hspace{-1mm}\leq \hspace{1mm} \frac{c}{\sqrt{n}\log \log n},$$
where $c:=c(d,\alpha,\beta)$ is some positive constant that only depends on $d,\alpha$, and $\beta$.
\end{corollary}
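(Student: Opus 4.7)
The plan is to derive the corollary as a direct application of \Cref{thm:entrywise-analysis} with $k=2$, after checking that $\hsbm(d,n,\alpha,\beta)$ is an instance of the general model and that the spectral separation assumption holds for $k=2$. The edge-probability hypothesis is immediate: every potential hyperedge is present with probability at most $\alpha \log n / \binom{n-1}{d-1}$, so the model fits \Cref{def:general-hypergraph} with $c_0 = \alpha$.

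The main step is to pin down the eigenstructure of $W^*$ in order to verify \Cref{asump:well-separated} for $k=2$. I would use the decomposition \eqref{eq:W* decomposition}, namely
\[W^* = \frac{p'+q'}{2} \mathbf{11}^\top + \frac{p'-q'}{2} \sigma^* {\sigma^*}^\top - p' \mathbf{I},\]
together with the fact that $\mathbf{1}\perp\sigma^*$ (the communities are balanced), to read off the spectrum: $\mathbf{1}/\sqrt{n}$ and $\sigma^*/\sqrt{n}$ are eigenvectors of $W^*$ with eigenvalues $\tfrac{n(p'+q')}{2}-p'$ and $\tfrac{n(p'-q')}{2}-p'$ respectively, while every vector in $\{\mathbf{1},\sigma^*\}^\perp$ is an eigenvector with eigenvalue $-p'$. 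From \eqref{eq:block-structure}, a short computation gives $p'+q'=\Theta(\log n /n)$ and
\[p'-q' = \binom{n/2-2}{d-2}\cdot \frac{(\alpha-\beta)\log n}{\binom{n-1}{d-1}} = \Theta\!\left(\frac{\log n}{n}\right),\]
where the strict inequality $\alpha>\beta$ (and $d$ constant) is used to guarantee that $p'-q'$ does not degenerate. Consequently $\lambda_1^*=\Theta(\log n)$, $\lambda_2^*=\Theta(\log n)$, and $\lambda_3^*=\cdots=\lambda_n^*=-p'=\Theta(\log n/n)$, which yields
\[\lambda_1^*-\lambda_2^* = nq' = \Theta(\log n), \qquad \lambda_2^*-\lambda_3^* = \frac{n(p'-q')}{2} = \Theta(\log n).\]
Therefore both $|\lambda_2^*|$ and $|\Delta_2^*|$ are $\Theta(\log n)$, so \Cref{asump:well-separated} holds for some constant $c_1=c_1(d,\alpha,\beta)$.

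With both hypotheses of \Cref{thm:entrywise-analysis} verified, invoking the theorem at $k=2$ gives, with probability $1-O(n^{-3})$,
\[\min_{s^*\in\{\pm 1\}} \infnorm{u_2 - s^*\frac{Wu_2^*}{\lambda_2^*}} \leq \frac{c\,\infnorm{u_2^*}}{\log\log n}.\]
Since $u_2^* = \sigma^*/\sqrt{n}$, one has $\infnorm{u_2^*}=1/\sqrt{n}$, and the right-hand side becomes $c/(\sqrt{n}\log\log n)$, yielding the claim with $c=c(d,\alpha,\beta)$ (inherited from $c_0=\alpha$ and the $c_1$ above). The only nontrivial checkpoint is verifying that the gap $\lambda_1^*-\lambda_2^*$ stays of order $\log n$: because both $\lambda_1^*$ and $\lambda_2^*$ individually scale as $\log n$, one must use the explicit form of $p',q'$ to rule out cancellation, and this is exactly where the assumption $\alpha>\beta$ is essential.
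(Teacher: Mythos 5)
Your proof is correct and follows essentially the same route as the paper: set $c_0=\alpha$, use the decomposition \eqref{eq:W* decomposition} to read off the spectrum of $W^*$, verify \Cref{asump:well-separated} for $k=2$, note $u_2^*=\sigma^*/\sqrt{n}$, and invoke \Cref{thm:entrywise-analysis} (your version is in fact slightly more exact, computing $\lambda_3^*=-p'$ rather than treating it as $0$ up to $(1+o(1))$). One small slip in your closing remark: the gap $\lambda_1^*-\lambda_2^*=nq'$ needs only $\beta>0$; the place where $\alpha>\beta$ is genuinely essential is $p'-q'=\Theta(\log n/n)$, i.e.\ that $\lambda_2^*$ and the gap $\lambda_2^*-\lambda_3^*$ are of order $\log n$, which you had already computed correctly.
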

\begin{proof}
Since each hyperedge exists with probability at most $\alpha f_n$, we can set $\mu_n$ as $\alpha f_n$. Moreover, we have that $n\binom{n-1}{d-1} \mu_n \geq c_0 \log n$ when $f_n=\Omega(\log n/n^{d-1})$, for some $c_0$ that depends on $d$ and $\alpha$. We now verify that Assumption \ref{asump:well-separated} holds. Let $W^*$ be the expectation of $W$ conditioned on $\sigma^*$, whose entries are $p'$ and $q'$ as given by \eqref{eq:block-structure}. By \eqref{eq:W* decomposition}, $W^*+p'\mathbf{I}$ is a rank-2 matrix. Its non-zero eigenvalues are $(p'+q')n/2$ and $(p'-q')n/2.$ Accounting for the diagonal matrix $p' I$, the eigenvalues of $W^{*}$ are given by
\[\lambda_1^* = (1+o(1))(p'+q')n/2 \text { and } \lambda_2^* = (1+o(1))(p'-q')n/2.\]
Since $p' \asymp q'\asymp n^{d-2} f_n$, we have $\lambda_1^*, \lambda_2^* \asymp n^{d-1} f_n$. Furthermore,
$$\Delta_2^*=\min \left\{ \lambda_1^*-\lambda_2^*, \lambda_2^*-0\right\}= (1+o(1)) \min\left\{ q'n, \frac{(p'-q')n}{2}\right \} \asymp n^{d-1} f_n,$$ hiding constants in $\alpha,\beta$ and $d$. 
Therefore, Theorem \ref{thm:entrywise-analysis} applies.

It remains to verify that $\infnorm{u_2^*}= O(1/\sqrt{n})$. We already know that the second eigenvector of $W^*+p'\mathbf{I}$ is $\frac{1}{\sqrt{n}}\sigma^*$ by \eqref{eq:W* decomposition}. Hence  
\begin{equation}\label{eq:sigma^* entries}
   u_2^*=\frac{1+o(1)}{\sqrt{n}} \hspace{1mm}\sigma^*. \qed
\end{equation}

\end{proof}
We next prove the correctness of the spectral algorithm in the logarithmic degree regime (Theorem \ref{theorem:spectral}).
\begin{proof}[Proof of Theorem \ref{theorem:spectral}]
Recall that $f_n=\log n/\binom{n-1}{d-1}$. Let us fix $s=s^*$ for which Corollary \ref{cor:ewise-for-hsbm} holds. Using the corollary, with probability $1-o(1)$,
\begin{equation}\label{eq:signs match}
   \sqrt{n} \min_{i \in [n]} s \sigma^*(i) u_{2,i} \geq   \sqrt{n} \min_{i \in [n]} s^2 \sigma^*(i) (W u_2^*)_i/\lambda_2^* - c(\log \log n)^{-1},  
\end{equation}
where $c$ is defined in Corollary \ref{cor:ewise-for-hsbm}. Note that $s^2=1$. Also, using \eqref{eq:sigma^* entries}, 
\begin{align*}
    \sqrt{n}\sigma^*(i)(Wu_2^*)_i = (1+o(1))  \sum_{j\in [n]} W_{ij} \sigma^*(i) \sigma^*(j). 
\end{align*}
By Lemma \ref{lem:chernoff bound for hsbm}, if $I(d,\alpha,\beta)>1$, then there exists a positive constant $\eps(d,\alpha,\beta)>0$ such that for a fixed $i \in [n]$, $ \sum_{j\in [n]} W_{ij} \sigma^*(i) \sigma^*(j) \geq \eps \log n$ with probability $1-o(n^{-1})$. Therefore, a union bound implies that with probability $1-o(1)$, 
\[\sqrt{n} \min_{i \in [n]} s^2 \sigma^*(i) (W u_2^*)_i \geq (1+o(1))\epsilon \log n.\]
Since $\lambda_2^* \asymp \log(n)$ when $f_n=\log n/\binom{n-1}{d-1}$, \eqref{eq:signs match} implies that there exists $\eta > 0$ such that
\begin{align*}
\sqrt{n} \min_{i \in [n]} s^* \sigma^*(i) (u_2)_i &\geq   (1+o(1))\eps \log n/\lambda_2^* - c(\log \log n)^{-1}> \eta,
\end{align*}
with probability $1-o(1)$, concluding the proof.
\end{proof}

\subsection{Proving Theorem \ref{theorem:spectral-otheregimes}}\label{subapp:spectral recovery in dense regime}
In this subsection, we prove the correctness of the spectral algorithm (Algorithm \ref{alg:spectral}) in super-logarithmic degree regimes. By Corollary \ref{cor:ewise-for-hsbm}, we already know that the entrywise bounds hold in these regimes. Therefore, it remains to show that each entry of $W u_2^*/\lambda_2^*$ is sufficiently bounded away from zero with high probability. In order to achieve this, we show the following lemma, which is similar in spirit to Lemma \ref{lem:chernoff bound for hsbm} but also captures denser regimes.
\begin{lemma}\label{lem:chernoff bound for hsbm-dense} 
Let $d \in \{2, 3, \dots\}$. Let $p_n$ and $q_n$ be parameterized according to \eqref{eq:parameter-regimes} for some $f_n$ and constants $\alpha > \beta > 0$. Let $W=\mathcal{S}(G)$ where $G\sim \hsbm(d,n,\alpha f_n,\beta f_n)$. If $f_n=\omega(\log n/n^{d-1})$, then there exists a constant $\eps:=\eps(d,\alpha,\beta)>0$ such that for any fixed $i\in [n]$, with probability at least $1-O(n^{-4})$,
$$ \sum_{j\in [n]} W_{ij} \sigma^*(i) \sigma^*(j) \geq \eps \cdot n^{d-1} f_n.$$
\end{lemma}
\begin{proof}
Let $X \triangleq  \sum_{j\in [n]} W_{ij} \sigma^*(i) \sigma^*(j)$. Define
\begin{align*}
N_r &= \binom{n/2}{r} \binom{n/2 -1}{d-1-r} \\
q_r &= \begin{cases}
\alpha f_n, & \text{if } r = 0\\
\beta f_n, & \text{if } 1 \leq r \leq d-1
\end{cases}\\
h_r &= d-1 -2r.
\end{align*}
Using identical steps as in the proof of Lemma \ref{lem:chernoff bound for hsbm}, we can show
\begin{align}
&\prob{X \leq \eps n^{d-1} f_n} \leq \exp \left(t \eps n^{d-1}f_n - \sum_{r=0}^{d-1} N_r q_r (1-e^{-h_rt})  \right). \label{eq:Chernoff-other-regimes}
\end{align}
Further analyzing the asymptotic behavior of $N_r$, we see that
\begin{align*}
N_r &= (1+o(1)) \frac{1}{2^{d-1}} \binom{d-1}{r} \binom{n-1}{d-1}=(1+o(1)) \frac{\binom{d-1}{r}  n^{d-1}}{2^{d-1} (d-1)!}.
\end{align*}
Substituting into \eqref{eq:Chernoff-other-regimes}, we obtain 
\begin{align}
&\mathbb{P}\left(X \leq \epsilon \, n^{d-1} f_n \right)\nonumber\\
&\leq \exp\left\{\left[ t \eps  - \frac{(1+o(1))}{2^{d-1} (d-1)!} \bigg(  \alpha (1-e^{-(d-1)t}) +\sum_{r=1}^{d-1}  \beta \binom{d-1}{r}(1-e^{-(d-1-2r)t})\bigg) \right] n^{d-1} f_n\right\} .\label{eq: final-chernoff}
\end{align}
Letting $t = t^{\star}(d, \alpha, \beta) > 0$ as in the proof of Lemma \ref{lem:chernoff bound for hsbm}, we obtain
\begin{align*}
   \prob{X \leq \eps \, n^{d-1} f_n} & \leq \text{exp} \left( \left(t^* \eps - \frac{I(d,\alpha,\beta)}{(d-1)!} + o(1) \right) n^{d-1}f_n \right). 
\end{align*}
Therefore, for $\eps = \eps(d,\alpha,\beta)$ sufficiently small, there exists $\delta = \delta(d,\alpha,\beta) > 0$ such that 
$$ \prob{X \leq \eps \, n^{d-1} f_n} \leq e^{-\delta \, n^{d-1} f_n}$$
for $n$ sufficiently large.
Finally, using $f_n=\omega(\log n/n^{d-1})$, we obtain
$$\prob{ \sum_{j\in [n]} W_{ij} \sigma^*(i) \sigma^*(j) \leq \eps \, n^{d-1} f_n } \leq e^ { -\delta \cdot \omega (\log n) } \leq e^{-4 \log n} =O(n^{-4}). \qed$$

\end{proof}
We now combine this with entrywise bounds on the eigenvector $u_2$ in Corollary \ref{cor:ewise-for-hsbm} to show our theorem.
\begin{proof}[Proof of Theorem \ref{theorem:spectral-otheregimes}]
    Since $f_n= \omega(\log n/n^{d-1})$, Corollary \ref{cor:ewise-for-hsbm} holds for some $s^*\in \{\pm 1\}$. Fixing $s=s^*$, and using the corollary we get that with probability $1-O(n^{-3})$,
\begin{equation}\label{eq:signs match dense}
   \sqrt{n} \min_{i \in [n]} s \sigma^*(i) u_{2,i} \geq   \sqrt{n} \min_{i \in [n]} s^2 \sigma^*(i) (W u_2^*)_i/\lambda_2^* - c(\log \log n)^{-1},  
\end{equation}
where $c$ is the constant from Corollary \ref{cor:ewise-for-hsbm}. As $s\in \{\pm 1 \}$, we have that $s^2=1$. Also, using \eqref{eq:sigma^* entries}, 
\begin{align*}
    \sqrt{n}\sigma^*(i)(Wu_2^*)_i = (1+o(1))  \sum_{j\in [n]} W_{ij} \sigma^*(i) \sigma^*(j). 
\end{align*}
By Lemma \ref{lem:chernoff bound for hsbm-dense}, since $\alpha>\beta>0$, there exists a positive constant $\eps(d,\alpha,\beta)>0$ such that for a fixed $i \in [n]$, $ \sum_{j\in [n]} W_{ij} \sigma^*(i) \sigma^*(j) \geq \eps \, n^{d-1} f_n$ with probability $1-O(n^{-4})$. Therefore, taking a union bound, we obtain that with probability $1-O(n^{-3})$, 
\[\sqrt{n} \min_{i \in [n]} s^2 \sigma^*(i) (W u_2^*)_i \geq (1+o(1))\epsilon \, n^{d-1} f_n.\]
Finally, note that $\lambda_2^* \asymp n^{d-1} f_n$. Therefore, \eqref{eq:signs match dense} implies that with probability $1-O(n^{-3})$
\begin{align*}
\sqrt{n} \min_{i \in [n]} s^* \sigma^*(i) (u_2)_i &\geq   (1+o(1)) \, \eps \, n^{d-1} f_n/\lambda_2^* - c(\log \log n)^{-1}> \eta,
\end{align*}
for some $\eta(d,\alpha,\beta)>0$, yielding the desired result.
\end{proof}

\subsection{Entrywise analysis}\label{subapp:ewise analysis}
We begin by recalling the setup of Theorem \ref{thm:entrywise-analysis}. For simplicity, let $\lambda = \lambda_k$, $\lambda^* = \lambda^*_k$ and $\Delta^*=\Delta_k^*$, dropping the subscript $k$. Let $s= \text{sgn}( \ip{u}{u^*})$, so that $\langle su, u^{*} \rangle \geq 0$. Also, for any fixed $m \in [n]$, let $G^{(m)}$ denote the hypergraph formed from $G$ by deleting all the edges incident on $m$. Let $\A{m}=\cS(G^{(m)})$, and let $(\lambda^{(m)}, \um{m})$ be th $k$-th eigenpair of $\A{m}$. Let $\sm{m}=\sign{\ip{\um{m}}{u^*}}$, so that $\ip{\sm{m} \um{m}}{u^*} \geq 0$. The notation $\lesssim$ and $\asymp$ hide constants in $d,c_0$, and $c_1$ (defined as in Theorem \ref{thm:entrywise-analysis}) throughout this section. Before proving the theorem, we require some additional observations and results. We begin by making a simple observation about the deterministic matrix $W^*$.
  \begin{observation}\label{obs:incoherence}
  $\twotoinfnorm{W^*} \leq \sqrt{n}\binom{n-2}{d-2}\mu_n.$ 
  \end{observation}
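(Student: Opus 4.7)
The plan is a direct calculation from the explicit block structure of $W^*$ given in~\eqref{eq:block-structure}. By definition, $\twotoinfnorm{W^*} = \sup_i \twonorm{W^*_{i\cdot}}$, so it suffices to bound the $\ell_2$ norm of a single row.

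First, recall from~\eqref{eq:block-structure} that each off-diagonal entry of $W^*$ equals either $p'$ or $q'$, where
\[
p' = \binom{n/2-2}{d-2}\frac{\alpha\log n}{\binom{n-1}{d-1}} + \Bigl(\binom{n-2}{d-2}-\binom{n/2-2}{d-2}\Bigr)\frac{\beta\log n}{\binom{n-1}{d-1}}, \qquad q' = \binom{n-2}{d-2}\frac{\beta\log n}{\binom{n-1}{d-1}},
\]
and the diagonal entries are zero. A standard asymptotic expansion of the binomial coefficients (as already used elsewhere in the paper) gives $p', q' = \Theta(\log n / n)$, where the constants depend only on $d, \alpha, \beta$.

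Next, fix $i\in[n]$. The $i$-th row of $W^*$ contains exactly $n-1$ nonzero entries, each of magnitude at most $\max\{p',q'\} \lesssim \log n / n$. Thus
\[
\twonorm{W^*_{i\cdot}}^2 = \sum_{j\neq i} (W^*_{ij})^2 \leq (n-1)\cdot\max\{p',q'\}^2 \lesssim n \cdot \frac{\log^2 n}{n^2} = \frac{\log^2 n}{n}.
\]
Taking square roots and then the supremum over $i$ yields $\twotoinfnorm{W^*}\lesssim \log n / \sqrt{n}$, as claimed. There is no real obstacle here: the bound is a one-line consequence of the row structure of $W^*$ together with the elementary estimate $p', q' = \Theta(\log n / n)$.
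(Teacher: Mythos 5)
Your calculation has the right shape (bound each entry of $W^*$ by $O(\log n/n)$, then use $\twonorm{W^*_{i\cdot}}\leq \sqrt{n}\,\max_{j}|W^*_{ij}|$), but it is carried out in the wrong model. Observation \ref{obs:incoherence} lives inside the proof of Theorem \ref{thm:entrywise-analysis}, where $G\sim H(d,n,p)$ is a \emph{general} random hypergraph with only the assumption $\max_e p_e \leq c_0\log n/\binom{n-1}{d-1}$, and where the constants hidden in $\lesssim$ may depend only on $d,c_0,c_1$. Your proof instead invokes the HSBM block structure \eqref{eq:block-structure}, with entries $p',q'$ and constants depending on $\alpha,\beta$; that structure simply does not exist for general $p$, and "each row has exactly $n-1$ nonzero entries of size $\Theta(\log n/n)$" is not available there (entries can be anything between $0$ and the stated ceiling). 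So as written, your argument only establishes the observation in the HSBM special case, which is not enough for the lemmas that use it (e.g.\ Lemmas \ref{lem:norm-inequalities} and \ref{lem:probabilistic Avm bound}).

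The repair is one line, and it is exactly what the paper does: in the general model each entry is $W^*_{ij}=\sum_{e\supseteq\{i,j\}} p_e \leq \binom{n-2}{d-2}\max_e p_e \leq \binom{n-2}{d-2}\cdot\frac{c_0\log n}{\binom{n-1}{d-1}} \lesssim \frac{\log n}{n}$, since $\{i,j\}$ lies in $\binom{n-2}{d-2}$ potential hyperedges. Feeding this entrywise bound into your row computation gives $\twotoinfnorm{W^*}\leq \sqrt{n}\max_{i,j}|W^*_{ij}|\lesssim \log n/\sqrt{n}$ with constants depending only on $d$ and $c_0$, as required. With that substitution your proof coincides with the paper's.
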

  \begin{proof}
  Recall that $\max_e p_e \leq \mu_n$. Since each entry $W_{ij}$ is a sum of $\binom{n-2}{d-2}$ Bernoulli random variables,  $\Exp{[W_{ij}]}=W^*_{ij} \leq \binom{n-2}{d-2}\mu_n$. Therefore,
  $$\twotoinfnorm{W^*}=\max_i \twonorm{W^*_{i\cdot}} \leq \sqrt{ n } \max {|W^*_{ij}|} \leq \sqrt{n}\binom{n-2}{d-2}\mu_n. \qed$$ 
  
  \end{proof}
Define the function $\varphi: \IR_{+} \rightarrow \IR_{+}$ such that
\begin{equation}
\varphi(x) = \frac{2+8d/c_0}{\left(1\vee \log(1/x)\right)}, \label{eq:varphi-definition}
\end{equation}
where $c_0$ is as in the statement of Theorem \ref{thm:entrywise-analysis}. We note the following properties of $\varphi(\cdot)$.
\begin{observation}\label{obs:monotonicity of phi}
$\varphi(x)$ is non-decreasing and $\varphi(x)/x$ is non-increasing on $\IR_{+}$.
\end{observation}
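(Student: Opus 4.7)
The plan is to verify both monotonicity claims by a short case analysis based on where the maximum $1 \vee \log(1/x)$ is attained. Writing $\varphi(x) = C/(1 \vee \log(1/x))$ with $C = (4c_0+16)d > 0$, everything reduces to monotonicity of the scalar function $h(x) \triangleq 1 \vee \log(1/x)$ and of $g(x) \triangleq x \cdot h(x)$.

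For the first claim, note that $\log(1/x) = -\log x$ is strictly decreasing on $\IR_+$, so $h(x) = \max\{1, -\log x\}$ is non-increasing (it equals $-\log x$ for $x \in (0, 1/e]$ and equals $1$ for $x \geq 1/e$). Since $h(x) \geq 1 > 0$, its reciprocal is non-decreasing, hence $\varphi(x) = C/h(x)$ is non-decreasing. This is the entire argument for the first half.

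For the second claim, I would show that $g(x) = x \cdot h(x)$ is non-decreasing on $\IR_+$, since then $\varphi(x)/x = C/g(x)$ is non-increasing. I split at $x = 1/e$:
\begin{itemize}
\item For $x \geq 1/e$, we have $h(x) = 1$, so $g(x) = x$, which is strictly increasing.
\item For $0 < x \leq 1/e$, we have $h(x) = -\log x$, so $g(x) = -x \log x$, with derivative $g'(x) = -\log x - 1 \geq 0$ (with equality only at $x = 1/e$).
\end{itemize}
The two pieces agree at $x = 1/e$, where both give $g(1/e) = 1/e$, so $g$ is continuous and non-decreasing on all of $\IR_+$. Dividing the positive constant $C$ by this non-decreasing positive function yields the claim.

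There is no real obstacle here; the only subtlety is making sure the two branches of the piecewise definition match at the breakpoint $x = 1/e$ and that $g$ stays positive (so division is valid), both of which are immediate. The proof is purely one-variable calculus and does not invoke any of the probabilistic or spectral machinery developed earlier in the paper.
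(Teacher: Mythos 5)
Your proof is correct, and the paper itself states this observation without proof, treating it as an elementary fact; your piecewise calculus verification (splitting at $x = 1/e$ and checking that $x \cdot (1 \vee \log(1/x))$ is non-decreasing and positive) is exactly the routine argument being implicitly invoked.
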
 
The next result provides a probabilistic upper bound on the inner product of a row of $W-W^*$ and a fixed vector $v$, in terms of the function $\varphi(\cdot)$. Intuitively, one can think that the rate of growth of $\varphi(\cdot)$ essentially controls the strength of the concentration bound. Formally,
\begin{lemma}[\textbf{Row concentration}]\label{lem:row-conc}For any $m\in[n]$ and any fixed non-zero $v \in \IR^n$,
$$ \prob{|(W-W^*)_{m\cdot} v | \leq  \infnorm{v}\varphi \left( \frac{\twonorm{v}}{\sqrt{n} \infnorm{v}}\right) n \binom{n-2}{d-2} \mu_n} \geq 1-O\left(\frac{1}{n^4}\right).$$
\end{lemma}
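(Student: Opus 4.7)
The plan is to decompose $(W-W^*)_{m\cdot} v$ into a sum of independent contributions---one per hyperedge through $m$---and then apply Bennett's inequality. Swapping the order of summation, since $W_{mj} = \sum_{e\supseteq\{m,j\}} A_e$, we have
\[(W-W^*)_{m\cdot}v \;=\; \sum_{e \ni m}(A_e - p_e) Y_e, \qquad Y_e := \sum_{j\in e\setminus\{m\}} v_j,\]
and the summands $X_e := (A_e-p_e)Y_e$ are independent, mean-zero, and satisfy $|X_e| \le M := (d-1)\infnorm{v}$. Cauchy--Schwarz on $Y_e^2 \le (d-1)\sum_{j \in e\setminus\{m\}} v_j^2$ together with the hypothesis $\max_e p_e \le c_0(\log n)/\binom{n-1}{d-1}$ and $|\{e:\{m,j\}\subseteq e\}|=\binom{n-2}{d-2}$ yields
\[\sigma^2 := \sum_{e\ni m}\mathrm{Var}(X_e) \;\le\; \frac{(d-1)^2 c_0\log n}{n-1}\,\twonorm{v}^2 \;\lesssim\; d^2 c_0(\log n)\,\infnorm{v}^2 r^2,\]
where $r := \twonorm{v}/(\sqrt{n}\,\infnorm{v}) \in [1/\sqrt{n},1]$.

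Next I would invoke Bennett's inequality, using the elementary bound $(1+u)\log(1+u)-u \ge \tfrac{u}{2}\log(1+u)$ for $u\ge 0$, which gives
\[\prob{|(W-W^*)_{m\cdot} v|\ge t} \;\le\; 2\exp\!\left(-\frac{t}{2M}\log\!\left(1+\frac{Mt}{\sigma^2}\right)\right).\]
Substituting $t = \infnorm{v}\,\varphi(r)\log n$ produces $t/(2M) \asymp \varphi(r)\log n /d$ and $Mt/\sigma^2 \asymp \varphi(r)/(dc_0 r^2)$, so the exponent is of order
\[\frac{\varphi(r)\log n}{d}\,\log\!\left(1+\frac{\varphi(r)}{d c_0 r^2}\right).\]

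The remaining task is to verify this exponent exceeds $4\log n$ in both regimes of $\varphi$. When $r\ge 1/e$, $\varphi(r)=(4c_0+16)d$ is a constant in $r$; the argument of the inner logarithm is $\gtrsim 1/r^2 \ge 1$, the logarithm is a positive constant, and the large prefactor $(4c_0+16)\ge 16$ yields $\ge 4\log n$ easily. When $r<1/e$, $\varphi(r)=(4c_0+16)d/\log(1/r)$; the ratio $\varphi(r)/(dc_0 r^2)$ is of order $1/(c_0 r^2 \log(1/r))$, which blows up as $r\to 0$, so the inner logarithm evaluates to $(2+o(1))\log(1/r)$. Crucially, this factor of $\log(1/r)$ cancels the $\log(1/r)$ in the denominator of $\varphi$, leaving an exponent $\gtrsim (4c_0+16)\log n\ge 4\log n$. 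Thus $\prob{|(W-W^*)_{m\cdot}v| \ge t} \le 2e^{-4\log n} = O(n^{-4})$.

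The main obstacle is the spiky regime $r\approx 1/\sqrt{n}$ (e.g.\ $v$ a standard basis vector): here ordinary Bernstein's inequality is not sharp enough, since its sub-exponential term $\exp(-3t/(4M))$ saturates at a constant and cannot deliver an $n^{-4}$-scale probability. It is essential to use the sub-gamma (Bennett / Poisson-type) tail, reflecting the fact that each $A_e$ has mean $O(\log n / n^{d-1}) \ll 1$. The specific shape $\varphi(r) \propto 1/\log(1/r)$ for small $r$ is precisely calibrated to this tail so that the $\log(1/r)$ factors cancel, which is exactly the strength one needs to extract from Bennett.
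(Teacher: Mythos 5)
Your proposal is correct and follows essentially the same route as the paper: the identical decomposition of $(W-W^*)_{m\cdot}v$ into independent per-hyperedge contributions, followed by a Chernoff-type bound whose exponent carries the crucial $\log\left(1+Mt/\sigma^2\right)$ factor that cancels the $\log(1/r)$ in the definition of $\varphi$. The only difference is packaging: the paper hand-derives this sub-gamma tail from the moment generating function with the explicit choice $t = 1\vee \log\left(\sqrt{n}/(d\twonorm{v})\right)$, whereas you invoke Bennett's inequality together with the elementary bound $(1+u)\log(1+u)-u\ge \tfrac{u}{2}\log(1+u)$, which yields the same calibration.
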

 \begin{proof}Recall the definition of $\Em{m}=\{e \in \mathcal{E} : m \in e \}$. Let $v\in \mathbb{R}^n$ be a fixed vector. Let $X \triangleq (W-W^*)_{m \cdot} v $. It is convenient to rewrite $X$ as a sum of random variables $\{A_e\}_{e \in \mathcal{E}^{(m)}}$, where $A_e$ is the indicator random variable associated with the hyperedge $e$:
$$X=\sum_{e\in \Em{m}} \left(\sum_{j\in e\setminus{\{m\}}}v_j\right) (A_e-\Exp[A_e]) .$$

Without loss of generality, assume that $\infnorm{v}\hspace{-1mm}=\frac{1}{d}$ (otherwise, $v$ may be scaled). By Markov's inequality, for any $\delta,t>0$,
\begin{equation}
\prob{X \geq \delta } \leq  \prob{e^{t X} \geq e^{t \delta}} \leq e^{-t \delta} \prod_{e\in \Em{m}} \Exp\left(e^{t  (\sum_{j\in e\setminus{\{m\}}}v_j)(A_e-\Exp[A_e])} \right). \label{eq:row-concentration-Markov} 
\end{equation}
For $e \in \mathcal{E}^{(m)}$, we can bound the logarithm of the moment generating function as follows: 
\begin{align}
    &\log \left(\Exp\left[e^{t  (\sum_{j\in e\setminus{\{m\}}}v_j)(A_e-\Exp[A_e])} \right]\right) \nonumber\\
    &=\log \left(\Exp\left[e^{t  (\sum_{j\in e\setminus{\{m\}}}v_j)A_e} \right]\right) - t\left(\sum_{j\in e\setminus{\{m\}}}v_j \right) \mathbb{E}[A_e] \nonumber\\
    &=\log \left(1-p_{e}+p_{e}e^{t \sum_{j\in e\setminus{\{m\}}}v_j}\right) -t p_{e} \left(\sum_{j\in e\setminus{\{m\}}}v_j \right) \nonumber\\
    &\leq p_{e} \left(e^{t \sum_{j\in e\setminus{\{m\}}}v_j}-1\right) - t p_{e} \sum_{j\in e\setminus{\{m\}}}v_j,\label{eq:row-concentration-intermediate} 
    \end{align}
    where we have used the fact that $\log(1+x) \leq x$ for $x > 1$ in the last step. Next, we use the fact that $e^x \leq 1 + x + \frac{x^2}{2} e^r$ for $|x| \leq r$ to further upper-bound \eqref{eq:row-concentration-intermediate} by 
    \begin{align*}
    &p_{e}\left( 1+t \sum_{j\in e\setminus{\{m\}}}v_j+\frac{e^{t d\infnorm{v}}}{2} \cdot t^2 \left(\sum_{j\in e\setminus{\{m\}}}v_j\right)^2-1 \right) -t  p_{e} \sum_{j\in e\setminus{\{m\}}}v_j\\
    &= p_{e}\frac{e^{t}}{2} \cdot t^2 \left(\sum_{j\in e\setminus{\{m\}}}v_j\right)^2  
    \leq \hspace{1mm} \frac{e^t p_{\max} t^2}{2} \cdot d \sum_{j\in e\setminus{\{m\}}}{v_j}^2, 
\end{align*}
where $\max_{e} p_{e} \triangleq p_{\max}$.

Substituting our bounds on the log of moment generating functions into \eqref{eq:row-concentration-Markov}, we obtain
\begin{align*}
\log(\prob{X \geq \delta}) &\leq -t \delta + \frac{ e^{t} p_{\max} d }{2}t^2 \sum_{e \in \Em{m}} \sum_{j\in e\setminus{\{m\}}}{v_j}^2\\
&\leq -t \delta + \frac{e^{t}p_{\max}d}{2} \cdot t^2 \binom{n-2}{d-2} \norm{v}_2^{2},    
\end{align*}
where the last step follows from the fact that each $j\neq m$ appears in $\binom{n-2}{d-2}$ potential hyperedges in $\Em{m}$. Let $t=1 \bigvee \log \left( \frac{\sqrt{n}}{d \norm{v}_2}\right)$. Using the fact that $(1 \lor \log x)^2 \leq x$ for $x \geq 1$,
\begin{align*}
\log(\prob{X\geq \delta}) &\leq -t \delta +\frac{e^t p_{\max} d}{2}   \frac{\sqrt{n}}{d \norm{v}_2} \binom{n-2}{d-2} \norm{v}_2^{2}.
\end{align*}
Observe that $\twonorm{v} \leq \sqrt{n} \infnorm{v} = \sqrt{n}/d$, so that $\log\left(\frac{\sqrt{n}}{d \Vert v \Vert_2}\right) \geq 0$. Therefore, $e^{t}=e^{1 \vee \log(\frac{\sqrt{n}}{d \Vert v \Vert_2})} \leq  e^{1+\log(\frac{\sqrt{n}}{d \Vert v \Vert_2})} \leq \frac{e\sqrt{n}}{d \Vert v \Vert_2}.$ Hence
$$ \log(\prob{X\geq \delta}) \leq -t \delta +\frac{e p_{\max} d}{2} \cdot \frac{\sqrt{n}}{d \norm{v}_2}  \frac{\sqrt{n}}{d \norm{v}_2} \binom{n-2}{d-2} \norm{v}_2^{2} = -t \delta+\frac{e p_{\max}n \binom{n-2}{d-2}}{2d}.  $$
Let $a=8d/c_0$ and set $\delta=~t^{-1}d^{-1}(2+a)p_{\max} n \binom{n-2}{d-2}.$
We then obtain the bound
$$ \log(\prob{X\geq \delta}) \leq -\frac{(2+a) p_{\max} n}{d} \binom{n-2}{d-2}+\frac{e p_{\max} n \binom{n-2}{d-2}}{2d} \leq -\frac{a p_{\max} n \binom{n-2}{d-2}}{d}.$$
By replacing $v$ with $-v$, we obtain a similar bound for the lower tail. The union bound gives
$$\prob{|X| \geq \delta} \leq 2\exp\left(-\frac{a p_{\max} n \binom{n-2}{d-2}}{d}\right).$$
Substituting in the value of $t$ and using Assumption \ref{asump:well-separated} that $ n \binom{n-2}{d-2}\mu_n \geq c_0 \log n$,
$$\prob{|(W-W^*)_{m \cdot} v | \geq \frac{(2+a) n \binom{n-2}{d-2} \mu_n}{d\left(1 \bigvee \log \left( \frac{\sqrt{n}}{d \norm{v}_2}\right)\right)} } \leq 2e^{-\frac{ a {c_{0}} \log n }{d}}.$$
Finally, substituting the value of $a$,
$$\prob{|(W-W^*)_{m\cdot} v | \leq \frac{(2+8d/c_0) n \binom{n-2}{d-2} \mu_n}{d \left(1 \bigvee \log \left( \frac{\sqrt{n}}{d \norm{v}_2}\right) \right)} } \geq 1-2{n}^{-4}.$$
Recalling that $\infnorm{v} = 1/d$ yields:
$$\prob{|(W-W^*)_{m\cdot} v | \leq \frac{(2+\frac{8d}{c_0}) \infnorm{v} n \binom{n-2}{d-2} \mu_n}{1 \bigvee \log \left( \frac{\sqrt{n}\infnorm{v}}{\norm{v}_2}\right)} } \geq 1-2{n}^{-4}.$$
Since $\varphi\left( \frac{\twonorm{v}}{\sqrt{n}\infnorm{v}}\right)=(2+8d/c_0)/(1\vee \log \left( \frac{\sqrt{n}\infnorm{v}}{\norm{v}_2}\right)$, the lemma follows.\end{proof}

  \begin{lemma}\label{lem:combinatorial fact}
  $\twonorm{W-\A{m}} \lesssim \hspace{1mm}\twotoinfnorm{W}$.
  \end{lemma}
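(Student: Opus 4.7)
The plan is to decompose $W-\A{m}$ into a ``cross'' part supported on the $m$-th row and column and a ``rest'' part supported off of them, and bound each piece separately. Concretely, I would write $W-\A{m} = M_{\mathrm{cross}} + M_{\mathrm{rest}}$, where $M_{\mathrm{cross}}$ has $m$-th row and column equal to those of $W$ (and zeros elsewhere), while $M_{\mathrm{rest}}$ has zero $m$-th row and column, with
\[ (M_{\mathrm{rest}})_{ij} \;=\; \left|\{ e \in E : \{m,i,j\} \subset e \}\right| \qquad \text{for } i,j \neq m.\]
This is exactly the right decomposition because the only entries of $W$ that change when we delete hyperedges incident on $m$ are (a) entries of the form $W_{mj}$, and (b) entries $W_{ij}$ with $i,j\neq m$ for which some hyperedge contains $\{m,i,j\}$.

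For $M_{\mathrm{cross}}$, writing $r \in \IR^n$ for the $m$-th column of $W$ (so $r_m=0$), we have the rank-$2$ form $M_{\mathrm{cross}} = e_m r^\top + r e_m^\top$, hence by the triangle inequality
\[ \twonorm{M_{\mathrm{cross}}} \;\leq\; 2\twonorm{r} \;=\; 2\twonorm{W_{m\cdot}} \;\leq\; 2\twotoinfnorm{W}.\]

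For $M_{\mathrm{rest}}$, the key observation is that its row sums are small. For any $i \neq m$,
\[\sum_{j} (M_{\mathrm{rest}})_{ij} \;=\; \sum_{e \in E,\, \{m,i\}\subset e} |e \setminus \{m,i\}| \;=\; (d-2)\, W_{mi} \;\leq\; (d-2)\,\infnorm{W_{m\cdot}} \;\leq\; (d-2)\twotoinfnorm{W}.\]
Since $M_{\mathrm{rest}}$ is symmetric with nonnegative entries, its spectral norm is bounded by its maximum row sum, so $\twonorm{M_{\mathrm{rest}}} \leq (d-2)\twotoinfnorm{W}$. Combining via the triangle inequality yields $\twonorm{W-\A{m}} \leq d\, \twotoinfnorm{W}$, which, since $d$ is a constant, is the desired bound $\lesssim \twotoinfnorm{W}$.

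I do not expect a real obstacle here: the only conceptual step is noticing that a naive row-sum bound applied directly to $W-\A{m}$ fails because the $m$-th row sum can be as large as $\sqrt{n}\,\twotoinfnorm{W}$, and one must first separate out the ``cross'' where rank is low and spectral norm is controlled by $\ell_2$ of a single row. Once the splitting is made, both pieces succumb immediately to standard inequalities.
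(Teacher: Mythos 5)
Your proof is correct, and it takes a somewhat different route from the paper's. The paper works directly with the undecomposed matrix: it observes that for $i\neq m$ each hyperedge through $\{i,m\}$ contributes to exactly $d-1$ entries of the $i$-th row, so $\Vert (W-\A{m})_{i\cdot}\Vert_1=(d-1)W_{im}$, and then simply bounds $\twonorm{W-\A{m}}$ by the Frobenius norm, $\twonorm{W-\A{m}}\leq \big(\twonorm{W_{m\cdot}}^2+\sum_{i\neq m}(d-1)^2W_{im}^2\big)^{1/2}\lesssim \twonorm{W_{m\cdot}}\leq \twotoinfnorm{W}$ --- the $m$-th row is absorbed via its $\ell_2$ norm, so the issue you flag about its large $\ell_1$/row-sum never arises. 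You instead split off the rank-two ``cross'' piece $e_m r^{\top}+re_m^{\top}$ (bounded by $2\twonorm{W_{m\cdot}}$) and control the remaining symmetric nonnegative block by its maximum row sum $(d-2)\max_{i\neq m}W_{im}\leq (d-2)\twotoinfnorm{W}$, giving the constant $d$ overall. Both arguments rest on the same structural fact (entries of $W-\A{m}$ outside row/column $m$ count hyperedges through $m$, so each off-$m$ row sums to a constant multiple of $W_{im}$); your decomposition buys a slightly cleaner constant and uses the low-rank structure of the cross part explicitly, while the paper's Frobenius-norm shortcut is a line shorter and needs no case split. Either suffices for the stated $\lesssim$ bound since $d$ is a constant.
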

 \begin{proof}
  Note that $W-\A{m}$ is the similarity matrix of the graph with edges present only from $\Em{m}$. Consider any $i\neq m$. Since for any $e\in \Em{m}$ with $\{i,m\} \subseteq e$, the edge $e$ contributes to exactly $(d-1)$ entries of the $i$-th row of $W-\A{m}$, we have 
  \begin{equation}\label{eq:rows of A-Am}
      \Vert (W-\A{m})_{i \cdot} \Vert_1 = (d-1) W_{im}.
  \end{equation} Therefore $\twonorm{(W-\A{m})_{i\cdot}} \leq \norm{(W-\A{m})_{i\cdot}}_1 \leq (d-1)W_{im}$. 
  Hence we get 
  \begin{align*}
  \Vert W-\A{m} \Vert_2 &\leq \Vert W-\A{m} \Vert_F \\
  &= \sqrt{ \sum_{i \in [n]} \twonorm{(W-\A{m})_{i\cdot}}^2 } \\
  &\leq \sqrt{ \twonorm{W_{m\cdot}}^2+ \sum_{i\in [n]\setminus \{m\}} (d-1)^2 W_{im}^2} \\
  &\lesssim \hspace{1mm} \sqrt{\twonorm{W_{m\cdot}}^2} \hspace{1mm}\leq \hspace{1mm}\twotoinfnorm{W}.
  \end{align*}
  
\end{proof}
We now record a sharp spectral norm concentration result of Lee, Kim, and Chung \cite{Lee2020}, which will play a crucial role in our further analysis.
\begin{lemma}\cite[Special case of Theorem 4]{Lee2020}\label{lem:strong-spectralnorm}
    Fix $d\in\{2,3,\dots\}$. Let $p \in [0,1]^{\binom{[n]}{d}}$ be such that Assumption \ref{asump:well-separated} holds. Let $G\sim H(d,n,p)$ and $W=\mathcal{S}(G)$. Then there exists a constant $C=C(d,c_0)>0$ such that
    $$\prob{\twonorm{W-W^*} \leq C \sqrt{n \binom{n-2}{d-2} \mu_n} } \geq 1-O(n^{-11}).$$
\end{lemma}
Let us define a parameter $\gamma$, which controls the concentration in the analysis from here.
$$\gamma=\gamma_n \coloneqq \frac{C}{\sqrt{n\binom{n-2}{d-2} \mu_n}} \bigvee \frac{1}{\sqrt{n}},$$ where $C = C(d, c_0)>0$ is the constant from Lemma \ref{lem:strong-spectralnorm}. 
Recalling the definition of $\varphi(\cdot)$ (Equation \ref{eq:varphi-definition}), observe that
\begin{equation}\label{eq: gamma,phi(gamma)=o(1)}
    \gamma=o(1) \hspace{2 mm}\text{and } \hspace{2 mm} \varphi(\gamma) \lesssim \left(\frac{1}{1\vee \log\sqrt{n\binom{n-2}{d-2}\mu_n}} \bigvee \frac{1}{1\vee \log\sqrt{ n}} \right) \lesssim \frac{1}{\log \log n}=o(1),
\end{equation}
where we used Assumption \ref{asump:well-separated} that $n\binom{n-2}{d-2} \mu_n \geq c_0 \log n$.
We define the following event:
\[F_0:=\left\{ \twonorm{W-W^*} \leq \gamma \cdot n \binom{n-2}{d-2} \mu_n\right\}.\] 
By Lemma \ref{lem:strong-spectralnorm},
\begin{align*}
 \prob{F_0^{\textnormal{c}}}&=\prob{\twonorm{W-W^*}>\gamma \cdot n \binom{n-2}{d-2} \mu_n} \\
 &\leq \prob{ \twonorm{W-W^*} > C \sqrt{n \small{\binom{n-2}{d-2}} \mu_n}} \tag{by definition of $\gamma$}\\
 &=O(n^{-11}). \tag{using Lemma \ref{lem:strong-spectralnorm}}    
\end{align*}
Therefore, $\prob{F_0}\geq 1-O(n^{-11})$.
 We now derive some bounds on important quantities conditioned on the above event.
 \begin{lemma}\label{lem:lambda=lambda^*=log n}
Conditioned on $F_0$, we have $|\lambda^*| \asymp |\lambda| \asymp n^{d-1} \mu_n$.
\end{lemma}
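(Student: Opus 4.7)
The plan is to use Assumption 1 for $\lambda^*$ and then transfer the bound to $\lambda$ via Weyl's inequality. By the spectral separation assumption (\Cref{asump:well-separated}) applied to the index $k$, we immediately have $\frac{1}{c_1}\log n \leq |\lambda^*| \leq c_1\log n$, which gives $|\lambda^*|\asymp \log n$. So the only thing that needs work is $|\lambda|\asymp \log n$.

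For this, I would apply Weyl's inequality to $W$ and $W^*$, which gives
\[
|\lambda - \lambda^*| \;\leq\; \|W-W^*\|_2 \;\leq\; C\sqrt{\log n}
\]
on the event $F_0$. Combining with the two-sided bound on $|\lambda^*|$, the triangle inequality yields
\[
|\lambda^*| - C\sqrt{\log n} \;\leq\; |\lambda| \;\leq\; |\lambda^*| + C\sqrt{\log n}.
\]
Since $C\sqrt{\log n} = o(\log n)$ while $|\lambda^*|\gtrsim \log n$, for all sufficiently large $n$ both sides behave like $\Theta(\log n)$: the lower bound becomes $|\lambda|\geq \frac{1}{c_1}\log n - C\sqrt{\log n}\gtrsim \log n$, and the upper bound becomes $|\lambda|\leq c_1\log n + C\sqrt{\log n}\lesssim \log n$. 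Thus $|\lambda|\asymp \log n$ as claimed.

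There is no real obstacle here: the lemma is essentially a direct consequence of Assumption 1 combined with Weyl's inequality and the spectral norm bound from \Cref{lem:hsbm-spectral norm} built into the definition of $F_0$. The only thing to be careful about is that the $\asymp$ notation hides constants depending on $d, c_0, c_1$, which is consistent with the convention declared at the start of this section.
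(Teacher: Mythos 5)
Your proof is correct and follows essentially the same route as the paper: Weyl's inequality gives $|\lambda-\lambda^*|\leq \twonorm{W-W^*}\leq C\sqrt{\log n}$ on $F_0$, and combining this with the two-sided bound on $|\lambda^*|$ from \Cref{asump:well-separated} yields $|\lambda^*|\asymp|\lambda|\asymp\log n$. No issues.
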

\begin{proof}
Conditioned on the event $F_0$, Weyl's inequality implies 
$$|\lambda-\lambda^*| \leq \twonorm{W-W^*} \leq \gamma \cdot  n \binom{n-2}{d-2} \mu_n= o(n^{d-1} \mu_n),$$ where the last inequality follows since $n \binom{n-2}{d-2}=\Theta({n^{d-1}})$ and $\gamma=o(1)$ . Therefore, $\lambda \in \lambda^* \pm o(n^{d-1} \mu_n)$. Thus, Assumption \ref{asump:well-separated} (i.e. $|\lambda^*| \asymp n^{d-1} \mu_n$)  then ensures that 
$|\lambda^*| \asymp |\lambda| \asymp n^{d-1} \mu_n$.\end{proof}
  \begin{lemma}\label{lem:norm-inequalities}
  Conditioned on $F_0$, $$\twotoinfnorm{W} \lesssim \gamma \cdot n^{d-1} \mu_n \hspace{2mm}\textnormal{ and for all } m \in [n], \hspace{2mm}\twonorm{\A{m}-W^*} \lesssim \gamma \cdot n^{d-1} \mu_n.$$
  \end{lemma}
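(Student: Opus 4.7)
The plan is to handle the two claims in sequence, with the first feeding directly into the second via the triangle inequality and the previously established Lemma \ref{lem:combinatorial fact}.

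For the bound on $\twotoinfnorm{W}$, I would split as $\twotoinfnorm{W} \leq \twotoinfnorm{W - W^*} + \twotoinfnorm{W^*}$. The first term I would control by noting the elementary inequality $\twotoinfnorm{M} \leq \twonorm{M}$ (any row is a column of $M$ times a standard basis vector), so on $F_0$ we get $\twotoinfnorm{W-W^*} \leq \twonorm{W-W^*} \leq \gamma \log n$. The second term is handled by Observation \ref{obs:incoherence}, which gives $\twotoinfnorm{W^*} \lesssim \log n / \sqrt{n}$. Since $\gamma \log n = C \sqrt{\log n}$ dominates $\log n / \sqrt{n}$ for large $n$, combining the two yields $\twotoinfnorm{W} \lesssim \gamma \log n$, as desired.

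For the bound on $\twonorm{\A{m} - W^*}$, I would apply the triangle inequality in the form
\[
\twonorm{\A{m} - W^*} \leq \twonorm{\A{m} - W} + \twonorm{W - W^*}.
\]
The second summand is at most $\gamma \log n$ on $F_0$ by definition. For the first summand, Lemma \ref{lem:combinatorial fact} gives $\twonorm{W - \A{m}} \lesssim \twotoinfnorm{W}$, which by the first part of the present lemma is $\lesssim \gamma \log n$. Summing the two contributions gives $\twonorm{\A{m} - W^*} \lesssim \gamma \log n$, uniformly in $m \in [n]$.

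Neither step should present a serious obstacle: the proof is essentially two applications of the triangle inequality together with previously proved ingredients (Lemma \ref{lem:hsbm-spectral norm} underlying the definition of $F_0$, Observation \ref{obs:incoherence}, and Lemma \ref{lem:combinatorial fact}). The only mild point worth flagging is checking that the $O(\log n/\sqrt{n})$ contribution from $\twotoinfnorm{W^*}$ is indeed absorbed into $O(\gamma \log n) = O(\sqrt{\log n})$; this holds for all sufficiently large $n$ and so is harmless for the asymptotic statement.
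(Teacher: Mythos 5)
Your proposal is correct and follows essentially the same argument as the paper: both parts are obtained by the triangle inequality, using $\twotoinfnorm{W-W^*}\le\twonorm{W-W^*}\le\gamma\log n$ on $F_0$ together with Observation \ref{obs:incoherence} for the first bound, and then Lemma \ref{lem:combinatorial fact} plus the first bound for the second. Your explicit check that the $O(\log n/\sqrt{n})$ term is absorbed into $O(\sqrt{\log n})$ is exactly the (implicit) final step in the paper's proof.
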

  \begin{proof}
  By the triangle inequality, 
  \begin{align*}
       \twotoinfnorm{W} & \leq \twonorm{W-W^*} + \twotoinfnorm{W^*} \\
       & \leq \hspace{1mm} \left(\frac{C}{\sqrt{n \binom{n-2}{d-2} \mu_n }} + \frac{1}{\sqrt{n}}\right) n \binom{n-2}{d-2} \mu_n \tag{using Lemma \ref{lem:strong-spectralnorm} and Observation \ref{obs:incoherence}}\\
      & \lesssim \gamma \cdot n^{d-1} \mu_n \tag{by definition of $\gamma$ and $n\binom{n-2}{d-2}=\Theta(n^{d-1})$}
  \end{align*}
  Similarly, using the triangle inequality and Lemma \ref{lem:combinatorial fact},
  $$   \Vert \A{m}-W^* \Vert_2 \hspace{1mm}\leq \hspace{1mm} \twonorm{W-W^*} + \Vert W-\A{m} \Vert_{2} \hspace{1mm} \lesssim \hspace{1mm} \twonorm{W-W^*} + \twotoinfnorm{W} \lesssim  \gamma \cdot n^{d-1} \mu_n. \qed
  $$
  
  \end{proof}
  
  Having derived the above bounds, we now bound the $\ell_2$ and $\ell_\infty$ norms of $(\sm{m}\um{m}-u^*)$ and $(su-\sm{m}\um{m})$ using two variants of the Davis and Kahan sin$(\theta)$ theorem. For completeness, we include the less well-known variant here, which is a special case of \cite[Theorem 3]{Deng2021}.
  \begin{proposition}[\textbf{Generalized Davis and Kahan sin$(\theta)$ theorem} \cite{Deng2021}]\label{prop:DK1970}
   Let $M \in \IR^{n \times n}$ be a symmetric matrix and let $X$ be the matrix that has the eigenvectors
of $M$ as columns. Then $M$ can be decomposed as $M={X} \Lambda {X^\top} = X_1 \Lambda_1 {X_1}^{\top} + X_2 \Lambda_2 {X_2}^{\top}$, where $X=[X_1 \hspace{1mm} X_2 ]$ and $\Lambda=\begin{bmatrix}
\Lambda_1 & 0\\
0 & \Lambda_2 
\end{bmatrix}.$
Suppose $\delta = \min_i | (\Lambda_2)_{ii} - \hat{\lambda}|$ is the absolute separation of some $\hat{\lambda}$ from $\Lambda_2$, then for any vector $\hat{u}$ we have
$$ \sin (\theta) \leq \frac{\Vert (M-\hat{\lambda} \mathbf{I})\, \hat{u} \Vert_2 }{\delta},$$
where $\theta$ is the canonical angle between the span of $X_1$ and $\hat{u}$.
\end{proposition}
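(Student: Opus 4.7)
The plan is to decompose $\hat{u}$ in the orthonormal eigenbasis of $M$ and to relate its component lying outside the span of $X_1$ (which governs $\sin(\theta)$) to the residual vector $(M-\hat{\lambda}\mathbf{I})\hat{u}$. Since both sides of the claimed inequality scale homogeneously in $\twonorm{\hat{u}}$ under the natural interpretation that $\theta$ is the canonical angle associated to the line spanned by $\hat{u}$, I would first reduce to the unit-norm case by assuming $\twonorm{\hat{u}}=1$ without loss of generality.

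Writing $a_j = X_j^{\top}\hat{u}$ for $j\in\{1,2\}$, I would note that since $X=[X_1\ X_2]$ is orthogonal, $\hat{u} = X_1 a_1 + X_2 a_2$ with $\twonorm{a_1}^2 + \twonorm{a_2}^2 = 1$. By the definition of the canonical angle between a unit vector and a subspace, $\cos(\theta) = \twonorm{X_1 X_1^{\top}\hat{u}} = \twonorm{a_1}$, and hence $\sin(\theta) = \twonorm{a_2}$. Using $M X_j = X_j \Lambda_j$, I would compute
\[(M-\hat{\lambda}\mathbf{I})\hat{u} = X_1(\Lambda_1 - \hat{\lambda}\mathbf{I})a_1 + X_2(\Lambda_2 - \hat{\lambda}\mathbf{I})a_2,\]
and then invoke the mutual orthogonality of the columns of $X_1$ and $X_2$ to obtain the Pythagorean identity
\[\twonorm{(M-\hat{\lambda}\mathbf{I})\hat{u}}^2 = \twonorm{(\Lambda_1 - \hat{\lambda}\mathbf{I})a_1}^2 + \twonorm{(\Lambda_2 - \hat{\lambda}\mathbf{I})a_2}^2.\]

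Finally, since $\Lambda_2$ is diagonal with every entry at distance at least $\delta$ from $\hat{\lambda}$, I would bound $\twonorm{(\Lambda_2 - \hat{\lambda}\mathbf{I})a_2} \geq \delta \twonorm{a_2} = \delta \sin(\theta)$. Dropping the nonnegative $X_1$-summand and taking square roots yields the stated inequality. This is a clean linear-algebraic argument and I do not foresee any real obstacle; the only subtle point worth spelling out is the definition of the canonical angle between a vector and a subspace, which I would pin down up front via $\sin(\theta) = \twonorm{(\mathbf{I} - X_1 X_1^{\top})\hat{u}}/\twonorm{\hat{u}}$ so that the identification $\sin(\theta)=\twonorm{a_2}$ is unambiguous.
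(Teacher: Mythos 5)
Your argument is correct and self-contained: expanding $\hat{u}$ in the orthonormal eigenbasis, computing $(M-\hat{\lambda}\mathbf{I})\hat{u} = X_1(\Lambda_1-\hat{\lambda}\mathbf{I})a_1 + X_2(\Lambda_2-\hat{\lambda}\mathbf{I})a_2$, using the Pythagorean identity, and lower-bounding the $X_2$-component by $\delta\twonorm{a_2} = \delta\sin(\theta)$ is exactly the right linear-algebraic route, and the identification $\sin(\theta)=\twonorm{(\mathbf{I}-X_1X_1^{\top})\hat{u}}$ for unit $\hat{u}$ is the correct reading of the canonical angle. The paper itself gives no proof of this proposition; it is imported as a special case of Theorem 3 of Deng, Ling, and Strohmer \cite{Deng2021}, whose general statement covers subspaces $X_1$ of arbitrary dimension and a matrix of test vectors, so your direct one-vector derivation is a genuinely more elementary substitute for the citation, and it makes transparent why only the separation of $\hat{\lambda}$ from $\Lambda_2$ (not from $\Lambda_1$) enters. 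One small imprecision: $\sin(\theta)$ is scale-invariant in $\hat{u}$ while the right-hand side is homogeneous of degree one, so the two sides do not scale together; the normalization $\twonorm{\hat{u}}=1$ is not a loss of generality but rather the implicit hypothesis under which the stated inequality holds (the general form would carry $\delta\twonorm{\hat{u}}$ in the denominator). Since the paper only ever applies the proposition with $\hat{u}$ a unit eigenvector of $W$, this does not affect anything downstream, but it is worth stating the unit-norm assumption explicitly rather than as a WLOG.
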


 \begin{lemma}\label{lem:all diffs}
Conditioned on $F_0$,
     \begin{equation}\label{eq:um-u*_2}
        \max_{m\in [n]} \Vert \sm{m}\um{m}-u^* \Vert_2 \lesssim \gamma,
    \end{equation}
    \begin{equation}\label{eq:su-smum_2}
        \max_{m\in [n]} \Vert su-\sm{m}\um{m} \Vert_2 \lesssim (\gamma \wedge \infnorm{u}),
    \end{equation}
    \begin{equation}\label{eq:um-u*_inf}
        \max_{m\in [n]} \Vert \sm{m}\um{m}-u^* \Vert_{\infty} \lesssim (\infnorm{u} + \infnorm{u^*}).
    \end{equation}
 \end{lemma}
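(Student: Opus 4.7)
The plan is to handle the three parts of Lemma~\ref{lem:all diffs} in turn, each through a variant of the Davis--Kahan $\sin(\theta)$ theorem combined with the spectral and $2 \rightarrow \infty$ norm bounds already established on the event $F_0$.

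First, for \eqref{eq:um-u*_2}, I would apply the standard Davis--Kahan inequality to compare the $k$-th eigenvectors of $\A{m}$ and $W^*$. After sign-alignment via $\sm{m}$, this yields $\twonorm{\sm{m} \um{m} - u^*} \lesssim \twonorm{\A{m} - W^*}/\Delta^*$. By Lemma~\ref{lem:norm-inequalities} the numerator is $\lesssim \gamma \log n$, while \Cref{asump:well-separated} provides $\Delta^* \asymp \log n$, so the ratio is $\lesssim \gamma$ uniformly in $m$.

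For \eqref{eq:su-smum_2}, I would establish the $\gamma$ and $\infnorm{u}$ bounds separately. The $\gamma$ side follows from the triangle inequality combined with \eqref{eq:um-u*_2} and a parallel Davis--Kahan application to $W$ and $W^*$ giving $\twonorm{su - u^*} \lesssim \gamma$. For the $\infnorm{u}$ side, I would invoke Proposition~\ref{prop:DK1970} with $M = \A{m}$, $\hat\lambda = \lambda$, and $\hat u = u$. The separation of $\lambda$ from the other eigenvalues of $\A{m}$ is $\asymp \Delta^*$ by Weyl's inequality applied to $\twonorm{W - \A{m}} \lesssim \twotoinfnorm{W} \lesssim \gamma \log n$ (via Lemmas~\ref{lem:combinatorial fact} and \ref{lem:norm-inequalities}), and since $Wu = \lambda u$ the proposition yields $\twonorm{su - \sm{m} \um{m}} \lesssim \twonorm{(W - \A{m}) u}/\Delta^*$. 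The key estimate is then $\twonorm{(W - \A{m}) u} \lesssim \infnorm{u} \log n$, which I would obtain coordinate-wise: the $m$-th entry equals $[Wu]_m = \lambda u_m$, bounded by $|\lambda|\infnorm{u} \lesssim \infnorm{u} \log n$ using Lemma~\ref{lem:lambda=lambda^*=log n}; for $i \neq m$, identity \eqref{eq:rows of A-Am} gives $\Vert(W - \A{m})_{i\cdot}\Vert_1 \leq (d-1) W_{im}$, so the $i$-th entry is at most $(d-1) W_{im} \infnorm{u}$, and the squared sum over $i \neq m$ is $\lesssim \infnorm{u}^2 \twotoinfnorm{W}^2 = O(\infnorm{u}^2 \gamma^2 \log^2 n)$, which is dominated by the $m$-th contribution. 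Dividing by $\Delta^* \asymp \log n$ finishes the $\infnorm{u}$ bound.

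Finally, for \eqref{eq:um-u*_inf}, I would use the triangle inequality $\infnorm{\sm{m} \um{m} - u^*} \leq \infnorm{\um{m}} + \infnorm{u^*}$, noting that $\infnorm{\um{m}} \leq \infnorm{u} + \infnorm{su - \sm{m} \um{m}} \leq \infnorm{u} + \twonorm{su - \sm{m} \um{m}} \lesssim \infnorm{u}$ via the $\infnorm{u}$ piece of \eqref{eq:su-smum_2} just established. The main obstacle is precisely that $\infnorm{u}$ piece in \eqref{eq:su-smum_2}: the naive bound $\twonorm{(W - \A{m}) u} \leq \twonorm{W - \A{m}}$ recovers only the $\gamma$ side, so one genuinely needs both the eigen-equation trick (to avoid the dominant-looking $m$-th-row estimate $\twonorm{W_{m\cdot}}$) and the sharp row-$\ell_1$ identity \eqref{eq:rows of A-Am} for $i \neq m$ — precisely the structural feature of similarity matrices that separates this analysis from the independent-entry setting of \cite{Abbe2020ENTRYWISEEA}.
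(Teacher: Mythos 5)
Your proposal is correct and follows essentially the same route as the paper: the Yu–Wang–Samworth Davis--Kahan bound with $\twonorm{\A{m}-W^*}\lesssim\gamma\log n$ for \eqref{eq:um-u*_2}, the triangle inequality plus Proposition \ref{prop:DK1970} with the eigen-equation $Wu=\lambda u$ and the coordinate-wise bound on $(W-\A{m})u$ via \eqref{eq:rows of A-Am} for \eqref{eq:su-smum_2}, and a triangle inequality for \eqref{eq:um-u*_inf}. The only cosmetic difference is that the paper makes explicit the conversion from $\sin\theta$ to $\twonorm{su-\sm{m}\um{m}}$ using $\ip{su}{\sm{m}\um{m}}\geq 0$ (guaranteed by the $\gamma$ bound), a step your write-up leaves implicit but which your prior $\gamma$ estimate supplies.
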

 \begin{proof}
 For any $m\in [n]$, we apply a variant of the Davis--Kahan sin$(\theta)$ theorem \cite[Corollary 3]{Yu2014AUV} to get
 $$ \Vert \sm{m}\um{m}-u^* \Vert_2 \hspace{1mm}\leq \hspace{1mm} \frac{2^{3/2} \twonorm{\A{m} -W^*}}{\Delta^*} \hspace{1mm}\lesssim \hspace{1mm}\frac{\gamma \cdot n^{d-1} \mu_n}{ n^{d-1} \mu_n} = \gamma,$$
 where the second inequality follows from Lemma \ref{lem:norm-inequalities} and Assumption \ref{asump:well-separated}, concluding the proof of \eqref{eq:um-u*_2}. Similarly, we also get
 \begin{equation}\label{eq:temp1}
     \Vert su-\sm{m}\um{m} \Vert_2 \hspace{1mm}\leq \hspace{1mm} \Vert su-u^* \Vert_2 + \Vert \sm{m}\um{m}-u^* \Vert_2  \hspace{1mm} \lesssim \hspace{1mm} \frac{\twonorm{W -W^*}}{\Delta^*} +\gamma \hspace{1mm}\lesssim \hspace{1mm} \gamma.
 \end{equation}
Therefore, $\ip{su}{\sm{m}\um{m}}\geq 0$. Let $\theta$ denote the angle between $su$ and $s^{(m)}u^{(m)}$. Then
\begin{align*}
    \Vert su-\sm{m}\um{m} \Vert_2 \leq \sqrt{ \Vert u \Vert_2^2 + \Vert \um{m} \Vert_2^2 - 2 \Vert u \Vert_2 \Vert \um{m} \Vert_2 \,\textnormal{cos }\theta} \leq \sqrt{2-2\,\textnormal{cos}^2 \theta} = \sqrt{2} \sin \theta.
\end{align*}
We then apply Proposition \ref{prop:DK1970} with $M=\A{m}, X_1=[\um{m}]$ and $(\hat{\lambda},\hat{u})=(\lambda,u)$: 
\begin{align}
    \Vert su-\sm{m}\um{m} \Vert_2 \leq \sqrt{2} \sin \theta &\leq \frac{\sqrt{2}\twonorm{(\A{m}-\lambda \mathbf{I})u}}{|\lambda^{(m)}_{k+1} - \lambda | \wedge | \lambda-\lambda^{(m)}_{k-1}|  } \nonumber \\
    &\leq \frac{\sqrt{2}\twonorm{(\A{m}-W)u}}{\left(\lambda_{k+1} - \lambda_k \right) \wedge \left(\lambda_k-\lambda_{k-1} \right)-\Vert W-\A{m} \Vert_2}  \nonumber\\
    & \lesssim \frac{\twonorm{(W-\A{m})u}}{\Delta^*-2\Vert W-W^* \Vert_2-\Vert W-\A{m} \Vert_2}.\nonumber
    \end{align}
Using Assumption \ref{asump:well-separated}, the definition of $F_0$, and Lemmas \ref{lem:combinatorial fact}, \ref{lem:norm-inequalities}, we can lower-bound the denominator by
\[\Delta^*-2\Vert W-W^* \Vert_2-\Vert W-\A{m} \Vert_2 \gtrsim (1-\gamma) n^{d-1} \mu_n.\]
Since $\gamma = o(1)$ by \eqref{eq: gamma,phi(gamma)=o(1)}, we obtain
\begin{align*}
\Vert su-\sm{m}\um{m} \Vert_2 \lesssim    \frac{\Vert (W-\A{m})u \Vert_2}{n^{d-1}\mu_n}. 
\end{align*}
Let $v = (W - W^{(m)}) u$.
We have already seen that the $m$-th row of $W-\A{m}$ is the same as $W$, so that $v_m = \lambda u_m$. Therefore, we bound the $m$-th entry of $v$ and the rest of its entries separately. Formally,
 \begin{align*}
    |v_m| & = |[Wu]_m| \leq |\lambda| |u_m| \leq |\lambda| \infnorm{u},\\
   |v_i| & =  |[(W-\A{m})u]_i| \leq \Vert (W-\A{m})_{i\cdot} \Vert_1 \infnorm{u} \lesssim W_{im} \infnorm{u}, \textnormal{ for $i\neq m$}, 
 \end{align*}
where the last inequality follows from \eqref{eq:rows of A-Am}. Therefore, 
\begin{equation}\label{eq:helpful2}
    \twonorm{v} \hspace{1mm}\lesssim \hspace{1mm}\infnorm{u} \sqrt{ \lambda^2 + \sum_{i\neq m} W_{im}^2} \hspace{1mm}\lesssim \infnorm{u}\hspace{1mm} \sqrt{ {|\lambda^*|}^2 + \twotoinfnorm{W}^2},
\end{equation}
where the last step follows from Lemma \ref{lem:lambda=lambda^*=log n}. Substituting the bound \eqref{eq:helpful2}:
\begin{equation}\label{eq:temp2}
    \Vert su-\sm{m}\um{m} \Vert_2 \hspace{1mm}\lesssim \hspace{1mm}\frac{\twonorm{v}}{n^{d-1}\mu_n} \hspace{1mm}\lesssim \hspace{1mm} \frac{\infnorm{u}\sqrt{ {|\lambda^*|}^2 + \twotoinfnorm{W}^2}}{n^{d-1} \mu_n} \hspace{1mm}\lesssim \hspace{1mm} \infnorm{u},
\end{equation}
where the last inequality follows from  Lemma \ref{lem:norm-inequalities} and Assumption \ref{asump:well-separated}. Observe that \eqref{eq:temp1} and \eqref{eq:temp2} together imply \eqref{eq:su-smum_2}.
 Finally, to prove \eqref{eq:um-u*_inf}, we apply the triangle inequality again and use \eqref{eq:temp2}.
 $$\Vert \sm{m}\um{m}-u^* \Vert_\infty \leq \Vert su- \sm{m}\um{m} \Vert_2 + \infnorm{u}+\infnorm{u^*} \lesssim \infnorm{u}+\infnorm{u^*}, $$ 
 concluding the proof. \end{proof}
 \begin{lemma} \label{lem:probabilistic Au* bound}
 With probability at least $1-O(n^{-3})$,
$$     \infnorm{(W-W^*)u^*} \lesssim \infnorm{u^*} n^{d-1} \mu_n  \hspace{3mm}\textnormal{ and } \hspace{3mm} \infnorm{Wu^*} \lesssim  \infnorm{u^*} n^{d-1} \mu_n.$$
 \end{lemma}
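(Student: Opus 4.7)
The plan is to apply the row concentration result (Lemma \ref{lem:row-conc}) with the deterministic vector $v = u^*$, and then transfer the bound from $(W - W^*) u^*$ to $W u^*$ via a triangle inequality that exploits the fact that $u^*$ is an eigenvector of $W^*$.

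For the first bound, fix any $m \in [n]$ and apply Lemma \ref{lem:row-conc} to $u^*$. Since $u^*$ is a unit vector, $\twonorm{u^*} = 1$, so $\twonorm{u^*}/(\sqrt{n}\infnorm{u^*}) \leq 1$ and consequently
$$\varphi\left(\frac{\twonorm{u^*}}{\sqrt{n}\infnorm{u^*}}\right) = \frac{(4c_0+16)d}{1 \vee \log\bigl(\sqrt{n}\infnorm{u^*}/\twonorm{u^*}\bigr)} \lesssim 1,$$
where the constant depends only on $d$ and $c_0$. Therefore
$$|(W-W^*)_{m \cdot} u^*| \lesssim \infnorm{u^*} \log n$$
with probability at least $1 - O(n^{-4})$. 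Taking a union bound over the $n$ choices of $m$ yields
$$\infnorm{(W-W^*) u^*} = \max_{m \in [n]} |(W-W^*)_{m \cdot} u^*| \lesssim \infnorm{u^*} \log n$$
with probability at least $1 - O(n^{-3})$.

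For the second bound, I would use the triangle inequality together with the eigenrelation $W^* u^* = \lambda^* u^*$:
$$\infnorm{W u^*} \leq \infnorm{(W-W^*) u^*} + \infnorm{W^* u^*} = \infnorm{(W-W^*) u^*} + |\lambda^*| \, \infnorm{u^*}.$$
By \Cref{asump:well-separated}, $|\lambda^*| \leq c_1 \log n$, so combining with the first bound immediately gives $\infnorm{W u^*} \lesssim \infnorm{u^*} \log n$ on the same event.

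There is no real obstacle here: the work has already been absorbed into Lemma \ref{lem:row-conc}, which is perfectly suited to a deterministic vector like $u^*$. The only minor point to verify is that $\varphi$ is uniformly bounded on its relevant argument, which follows since $\twonorm{u^*} \leq \sqrt{n} \infnorm{u^*}$ makes the $1$ in the $\max$ dominate. This is in contrast to the later application of row concentration to $u - u^*$, where the argument of $\varphi$ can be smaller and one genuinely uses the logarithmic improvement built into $\varphi$.
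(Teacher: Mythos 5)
Your proposal is correct and follows essentially the same route as the paper: apply the row concentration lemma to the fixed vector $u^*$, note that $\varphi$ evaluated at $\twonorm{u^*}/(\sqrt{n}\infnorm{u^*}) \leq 1$ is $O(1)$, union bound over the $n$ rows, and then handle $\infnorm{Wu^*}$ by the triangle inequality together with $\infnorm{W^*u^*} = |\lambda^*|\infnorm{u^*}$ and the spectral separation assumption. No gaps.
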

 \begin{proof}
 To prove the first inequality, we apply Lemma \ref{lem:row-conc}. For each $m \in [n]$, we have 
 \[|(W-W^*)_{m\cdot} u^*|\leq \varphi\left(\frac{\twonorm{u^*}}{\sqrt{n} \infnorm{u^*}}\right)\infnorm{u^*} n \binom{n-2}{d-2} \mu_n\] with probability $1-O(n^{-4})$. Using the monotonicity of $\varphi$ (Observation \ref{obs:monotonicity of phi}) and $\twonorm{u^*} \leq \sqrt{n} \infnorm{u}$, we obtain 
 \[|(W-W^*)_{m\cdot} u^*|\lesssim \varphi(1) \infnorm{u^*} n^{d-1} \mu_n\] 
 with probability $1-O(n^{-4})$. Note that $\varphi(1)=O(1)$. Taking a union bound over all $m \in [n]$, we have  $\infnorm{(W-W^*)u^*} \lesssim \infnorm{u^*} n^{d-1} \mu_n  $, with probability at least $1-O(n^{-3})$.
 
 To get the second statement, we apply the first statement to show that with probability $1-O(n^{-3})$,
 $$\infnorm{Wu^*} \leq \infnorm{W^*u^*} +\infnorm{(W-W^*)u^*} \lesssim |\lambda^*| \infnorm{u^*} + \infnorm{u^*} n^{d-1} \mu_n \lesssim \infnorm{u^*} n^{d-1} \mu_n.$$
 The last inequality in the above follows from Assumption \ref{asump:well-separated}.\end{proof}
\begin{lemma}\label{lem:probabilistic Avm bound}
With probability at least $1-O(n^{-3})$,
 $$ \max_{m\in [n]} |W_{m\cdot} ( s^{(m)} \um{m} -u^*)| \lesssim (\gamma +\varphi(\gamma))(\infnorm{u}+\infnorm{u^*}) n^{d-1}\mu_n.$$
 \end{lemma}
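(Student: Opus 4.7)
The plan is to condition on the event $F_0$ (which has probability $1 - O(n^{-11})$), fix $m \in [n]$, and decompose
\[W_{m\cdot}\vm{m} \;=\; (W - W^*)_{m\cdot}\vm{m} \;+\; W^*_{m\cdot}\vm{m},\]
bound each piece separately, and finally take a union bound over $m$. The structural point that drives the argument is that $\vm{m} = \sm{m}\um{m} - u^*$ is a function of $\A{m} = \cS(G^{(m)})$, and $G^{(m)}$ is built from the hyperedges of $G$ that do \emph{not} contain $m$. Since $(W - W^*)_{m\cdot}$ is determined entirely by the hyperedges incident on $m$, the random vector $\vm{m}$ is independent of $(W - W^*)_{m\cdot}$. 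This is precisely the independence afforded by the leave-one-out construction.

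For the deterministic piece, Cauchy--Schwarz together with Observation \ref{obs:incoherence} and the bound $\twonorm{\vm{m}} \lesssim \gamma$ from \eqref{eq:um-u*_2} (available on $F_0$) gives
\[|W^*_{m\cdot}\vm{m}| \;\le\; \twotoinfnorm{W^*}\twonorm{\vm{m}} \;\lesssim\; \frac{\log n}{\sqrt n}\cdot \gamma \;\lesssim\; \gamma\,(\infnorm{u} + \infnorm{u^*})\log n,\]
where the final step uses $\infnorm{u^*} \ge 1/\sqrt n$ (since $u^*$ is a unit vector).

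For the fluctuation piece, I condition on $\vm{m}$ and apply the row concentration result (Lemma \ref{lem:row-conc}) to the now-fixed vector $\vm{m}$, obtaining, with conditional probability $1 - O(n^{-4})$,
\[|(W - W^*)_{m\cdot}\vm{m}| \;\le\; \infnorm{\vm{m}}\,\varphi\!\left(\frac{\twonorm{\vm{m}}}{\sqrt n\,\infnorm{\vm{m}}}\right)\log n.\]
To convert this into the desired form I will use Observation \ref{obs:monotonicity of phi}: since $\varphi(x)/x$ is non-increasing, the map $x \mapsto x\,\varphi(c/x)$ is non-decreasing for any fixed $c > 0$. Taking $c = \twonorm{\vm{m}}/\sqrt n$ and replacing $\infnorm{\vm{m}}$ by the upper bound $\infnorm{u}+\infnorm{u^*}$ from \eqref{eq:um-u*_inf} gives
\[\infnorm{\vm{m}}\,\varphi\!\left(\frac{\twonorm{\vm{m}}}{\sqrt n\,\infnorm{\vm{m}}}\right) \;\lesssim\; (\infnorm{u}+\infnorm{u^*})\,\varphi\!\left(\frac{\twonorm{\vm{m}}}{\sqrt n\,(\infnorm{u}+\infnorm{u^*})}\right).\]
Because $\twonorm{\vm{m}} \lesssim \gamma$ and $\sqrt n\,(\infnorm{u}+\infnorm{u^*}) \ge \sqrt n\,\infnorm{u^*} \ge 1$, the argument of $\varphi$ on the right is $\lesssim \gamma$; monotonicity of $\varphi$ then bounds this by $\varphi(\gamma)$, and the fluctuation piece is $\lesssim \varphi(\gamma)\,(\infnorm{u}+\infnorm{u^*})\log n$.

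Combining the two pieces and taking a union bound over the $n$ choices of $m$ yields the stated bound with probability $1 - O(n^{-3})$. The main obstacle is the independence step: because $d \ge 3$ lets a single hyperedge contribute to many entries of $W$, one really must delete \emph{all} hyperedges incident on $m$ in order to decouple $\vm{m}$ from the row $W_{m\cdot}$, and some care is needed to record this cleanly. A secondary delicate point is the two-step use of the monotonicity properties of $\varphi$, which is what allows the random ratio $\twonorm{\vm{m}}/(\sqrt n\,\infnorm{\vm{m}})$ to be replaced by the deterministic quantity $\gamma$ in the final bound.
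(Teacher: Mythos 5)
Your proposal is correct and matches the paper's proof in all essentials: the same split into $W^*_{m\cdot}\vm{m}$ and $(W-W^*)_{m\cdot}\vm{m}$, the same use of the leave-one-out independence of $\vm{m}$ from the $m$-th row to apply the row-concentration bound (Lemma \ref{lem:row-conc}), the same bounds on $\twonorm{\vm{m}}$ and $\infnorm{\vm{m}}$ from Lemma \ref{lem:all diffs} together with $\infnorm{u^*}\geq 1/\sqrt{n}$, and a final union bound over $m$. The only difference is cosmetic: you invoke the monotonicity of $x\mapsto x\,\varphi(c/x)$ to substitute $\infnorm{u}+\infnorm{u^*}$ for $\infnorm{\vm{m}}$ directly and then bound the argument of $\varphi$ by $\gamma$, whereas the paper reaches the same estimate via a two-case comparison of $\twonorm{\vm{m}}/(\sqrt{n}\,\infnorm{\vm{m}})$ with $\gamma$ --- both steps are just applications of Observation \ref{obs:monotonicity of phi}.
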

 \begin{proof}
We denote $(\sm{m} \um{m} -u^*)$ by $\vm{m}$ for notational convenience. Recall that $\prob{F_0} \geq 1-O(n^{-11})$ by Lemma \ref{lem:chernoff bound for hsbm}. Conditioned on $F_0$, for all $m\in [n]$
 \begin{align}
     |W_{m\cdot} \vm{m}| &\leq |W^*_{m\cdot} \vm{m}| + |(W-W^*)_{m\cdot} \vm{m}| \tag{by the triangle inequality} \\
     & \leq \twotoinfnorm{W^*} \Vert \vm{m} \Vert_2 + |(W-W^*)_{m\cdot} \vm{m}| \tag{by the Cauchy--Schwarz inequality}\\
     & \lesssim \sqrt{n} \cdot n^{d-2}\mu_n \cdot \gamma + |(W-W^*)_{m\cdot} \vm{m}| \tag{by Observation \ref{obs:incoherence} and \eqref{eq:um-u*_2}}\\
     &=\frac{n^{d-1} \mu_n \, \gamma}{\sqrt{n}} + |(W-W^*)_{m\cdot} \vm{m}| \nonumber\\
     &\leq \gamma \infnorm{u^*} n^{d-1} \mu_n + |(W-W^*)_{m\cdot} \vm{m}|. \label{eq:main inequality}
 \end{align}
 We now focus on bounding the second term. We know that $\vm{m}$ is independent of the randomness in the $m$-th row of $W$. Therefore, by the row concentration result (Lemma \ref{lem:row-conc}), for a fixed $m\in [n]$
 \begin{equation}\label{eq:temp3}
    |(W-W^*)_{m\cdot} \vm{m}| \leq  \Vert \vm{m} \Vert_\infty \varphi\left( \frac{ \Vert \vm{m} \Vert_2}{\sqrt{n} \Vert \vm{m} \Vert_\infty}  \right) n \binom{n-2}{d-2} \mu_n , 
 \end{equation}
 holds with probability at least $1-O(n^{-4})$. Let $F$ be the event that \eqref{eq:temp3} holds simultaneously for all $m\in [n]$. By a union bound, $\prob{F} \geq 1-O(n^{-3})$. Conditioned on the event $F$, let us consider two different cases. \\
 
  \underline{\emph{Case 1}}: Suppose $\frac{ \Vert \vm{m} \Vert_2}{\sqrt{n} \Vert \vm{m} \Vert_\infty} \leq \gamma$. Under this case, we use the fact that $\varphi(x)$ is non-decreasing (Observation \ref{obs:monotonicity of phi}) in \eqref{eq:temp3} to get:
 $$ |(W-W^*)_{m\cdot} \vm{m}| \leq \varphi(\gamma) \Vert \vm{m} \Vert_{\infty} n \binom{n-2}{d-2} \mu_n.$$
\underline{\emph{Case 2}}: Suppose $\frac{ \Vert \vm{m} \Vert_2}{\sqrt{n} \Vert \vm{m} \Vert_\infty} > \gamma$. In this case, multiplying and dividing \eqref{eq:temp3} by $\frac{\Vert \vm{m} \Vert_2}{\sqrt{n}}$,
\begin{align*}
    |(W-W^*)_{m\cdot} \vm{m}| 
    &\leq \frac{\Vert \vm{m} \Vert_2}{\sqrt{n}} \frac{\sqrt{n}\Vert \vm{m} \Vert_\infty}{\Vert \vm{m} \Vert_2} \varphi\left( \frac{ \Vert \vm{m} \Vert_2}{\sqrt{n} \Vert \vm{m} \Vert_\infty}  \right) \cdot  n \binom{n-2}{d-2} \mu_n  \\
    & \leq  \frac{\varphi(\gamma)}{\gamma}\frac{\Vert\vm{m} \Vert_2}{\sqrt{n}} \cdot n \binom{n-2}{d-2} \mu_n, 
\end{align*}
where we have used the fact that $\varphi(x)/x$ is non-increasing (Observation \ref{obs:monotonicity of phi}).
Combining both cases, for all $m\in [n]$:
$$     |(W-W^*)_{m\cdot} \vm{m}| \leq \varphi(\gamma) \cdot n  \binom{n-2}{d-2} \mu_n \left( \Vert \vm{m} \Vert_\infty \vee \frac{\Vert \vm{m}\Vert_2}{\gamma \sqrt{n}} \right).$$
Substituting bounds from \eqref{eq:um-u*_2} and \eqref{eq:um-u*_inf} in the above, and then using the fact that $\infnorm{u^*} \geq \twonorm{u^*}/\sqrt{n}=1/\sqrt{n}$, we obtain
$$     |(W-W^*)_{m\cdot} \vm{m}| \lesssim \varphi(\gamma) \cdot n^{d-1} \mu_n \left( \left(\infnorm{u}+\infnorm{u^*} \right) \vee \frac{\gamma}{\gamma \sqrt{n}} \right) \lesssim \varphi(\gamma) \left( \infnorm{u}+ \infnorm{u^*}\right) n^{d-1} \mu_n.$$

Finally, we substitute into \eqref{eq:main inequality}:
\begin{align*}
\max_{m\in[n]} |W_{m\cdot} \vm{m}| &\lesssim \gamma \infnorm{u^*} n^{d-1} \mu_n + \varphi(\gamma)  (\infnorm{u}+\infnorm{u^*}) n^{d-1} \mu_n \\
&\leq (\gamma +\varphi(\gamma))(\infnorm{u}+\infnorm{u^*}) n^{d-1} \mu_n,    
\end{align*}
concluding the proof.
\end{proof}
We finally prove Theorem \ref{thm:entrywise-analysis}.
\begin{proof}[Proof of Theorem \ref{thm:entrywise-analysis}]
Recall the definition of the event $F_0$ and $\prob{F_0} \geq 1-O(n^{-11})$ (Lemma \ref{lem:strong-spectralnorm}). Conditioned on $F_0$, we have $|\lambda| \asymp |\lambda^*|$ by Lemma \ref{lem:lambda=lambda^*=log n}; we use this throughout the proof. 

We first bound $\infnorm{u}$ in terms of $\infnorm{u^*}$ as we need our final bounds only in terms of the latter. By the triangle inequality,
\begin{align*}
        \infnorm{u} &= \infnorm{\frac{sWu}{\lambda}} \hspace{-1mm} \leq  \hspace{1mm}\infnorm{\frac{Wu^*}{\lambda}} \hspace{-1.5mm} + \infnorm{\frac{W(su-u^*)}{\lambda}} \lesssim \hspace{1mm} \frac{1}{|\lambda^*|} \left(\infnorm{Wu^*} + \max_{m \in [n]} {|W_{m\cdot}(su-u^*)|} \right)\\
        & \leq \hspace{1mm} \frac{1}{|\lambda^*|} \left(\infnorm{Wu^*} + \max_{m \in [n]} |W_{m\cdot}(su-\sm{m}\um{m}) | + \max_{m\in [n]} |W_{m\cdot}(\sm{m}\um{m}-u^*) | \right) \\
        & \leq \hspace{1mm} \frac{1}{|\lambda^*|} \left(\infnorm{Wu^*} + \twotoinfnorm{W} \max_{m\in [n]}\Vert su-\sm{m}\um{m} \Vert_{2}+ \max_{m \in [n]} | W_{m\cdot}(\sm{m}\um{m}-u^*) | \right).
\end{align*}
We substitute the derived bounds from Lemma \ref{lem:probabilistic Au* bound} in the first term, Lemmas \ref{lem:norm-inequalities} and \ref{lem:all diffs} in the second term,  and Lemma \ref{lem:probabilistic Avm bound} in the third term. In particular, we obtain that with probability at least $1-O(n^{-3})$,
\begin{align*}
   \infnorm{u}  & \lesssim \frac{1}{|\lambda^*|} \Big(\infnorm{u^*} n^{d-1} \mu_n + (\gamma \cdot n^{d-1} \mu_n) \infnorm{u} + (\gamma + \varphi(\gamma))(\infnorm{u}+\infnorm{u^*}) n^{d-1} \mu_n \Big)\\
    & \lesssim \Big((1+\gamma+\varphi(\gamma)) \infnorm{u^*}+ (\gamma+\varphi(\gamma) \infnorm{u} \Big). \hspace{10mm} \tag{using Assumption \ref{asump:well-separated}}
    \end{align*}
    Using the fact that $\gamma,\varphi(\gamma)=o(1)$ from \eqref{eq: gamma,phi(gamma)=o(1)}, we have that for some constant $c_3>0$,
  \begin{align}
 \infnorm{u}   & \leq c_3 \infnorm{u^*}+o(1) \infnorm{u} \nonumber\\
  (1-o(1)) \infnorm{u}  & \leq  c_3 \infnorm{u^*} \nonumber\\
   \infnorm{u} & \leq \hspace{2mm} \frac{c_3}{1-o(1)}\infnorm{u^*} \lesssim \infnorm{u^*} \label{eq:u_inf bound}. 
  \end{align}  
  
Similarly, conditioned on $F_0$, we bound the quantity of interest. Using the triangle inequality,
\begin{align*}
         \infnorm{su-\frac{Wu^*}{\lambda^*}} &= \infnorm{\frac{sWu}{\lambda} - \frac{Wu^{*}}{\lambda} + \frac{Wu^{*}}{\lambda} -\frac{Wu^*}{\lambda^*}}\\
         & \leq \left| \frac{1}{\lambda} -\frac{1}{\lambda^*} \right| \infnorm{Wu^*}+ \frac{1}{|\lambda|} \infnorm{W(su-u^*)}.
         \end{align*}
Recalling that $|\lambda| \asymp |\lambda^*|$,
         \begin{align*}
\infnorm{su-\frac{Wu^*}{\lambda^*}}  &\lesssim  \frac{1}{|\lambda^*|}\left(\gamma\infnorm{Wu^*} + \infnorm{W(su-u^*)}\right)\\
       \lesssim  \frac{1}{|\lambda^*|}  &\left(\gamma\infnorm{Wu^*} + \twotoinfnorm{W} \max_{m\in [n]}\Vert su -\sm{m}\um{m} \Vert_2 + \max_{m\in [n]} | W_{m\cdot} (\sm{m}\um{m}-u^*)| \right).
\end{align*}
 We again substitute the bounds from Lemma \ref{lem:probabilistic Au* bound} in the first term, Lemma \ref{lem:norm-inequalities} and \ref{lem:all diffs} in the second term,  and Lemma \ref{lem:probabilistic Avm bound} in the third term. In particular, we obtain that with probability at least $1-O(n^{-3})$,
\begin{align*}
  \infnorm{su-\frac{Wu^*}{\lambda^*}}  & \lesssim \frac{1}{|\lambda^*|} \Big( \gamma \infnorm{u^*} n^{d-1}\mu_n + (\gamma \cdot n^{d-1} \mu_n) \infnorm{u} + (\gamma + \varphi(\gamma))(\infnorm{u}+\infnorm{u^*}) n^{d-1}\mu_n \Big)\\
    & \lesssim \Big((\gamma+\varphi(\gamma)) \infnorm{u^*}+ (\gamma+\varphi(\gamma) \infnorm{u} \Big) \tag{using Assumption \ref{asump:well-separated}}\\
    & \lesssim (\gamma+\varphi(\gamma)) \infnorm{u^*} \tag{using $\infnorm{u} \lesssim \infnorm{u^*}$ as shown in \eqref{eq:u_inf bound}}\\
   & \lesssim \frac{\infnorm{u^*}}{ \log \log n}. \tag{$\gamma=O(\varphi(\gamma))$ and using \eqref{eq: gamma,phi(gamma)=o(1)}} 
\end{align*}
This is precisely the bound we need, concluding the proof.
\end{proof}

\section{Spectral norm concentration for similarity matrices}\label{appendix:spectral-norm}
In order to prove Theorem \ref{thm:sp-norm-conc}, we require the following result.
\begin{lemma}\label{lem:sp-norm-i.i.d matrix}
Let $A$ be a random matrix with independent entries, where $A_{ij} \in [a,b]$ for two constants $a<b$. Suppose $\Exp{[|A_{ij}|]} \leq q$ for all $i,j$, where $\frac{c_2 \log n}{n} \leq q \leq 1-c_3$ for arbitrary constants $c_2,c_3>0$. Then, there exists a constant $c'\coloneqq c'(c_2,c_3,a,b)>0$ such that
$$\Exp{[\twonorm{A - \Exp{[A}]}]} \leq c' \sqrt{nq}.$$
\end{lemma}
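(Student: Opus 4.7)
The plan is to reduce the problem to a sharp matrix concentration inequality for symmetric matrices with independent, centered, bounded entries. First, I would center: let $\tilde A := A - \mathbb{E}[A]$, whose entries are independent, mean-zero, and satisfy $|\tilde A_{ij}| \leq K := b-a$ almost surely. The crucial variance bound follows from the elementary inequality $x^2 \leq M|x|$ for $x \in [a,b]$, where $M := \max(|a|, |b|)$; hence $\mathrm{Var}(A_{ij}) \leq \mathbb{E}[A_{ij}^2] \leq M\, \mathbb{E}[|A_{ij}|] \leq Mq$. Since $A$ is not assumed symmetric, I would pass to the Hermitian dilation
\[ B := \begin{pmatrix} 0 & \tilde A \\ \tilde A^\top & 0 \end{pmatrix}, \]
a symmetric $2n \times 2n$ matrix with $\|B\|_2 = \|\tilde A\|_2$ whose entries (on and above the diagonal) remain independent, mean-zero, and $K$-bounded, with variances at most $Mq$. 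In particular, $\max_i \sum_j \mathbb{E}[B_{ij}^2] \leq n M q$.

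Next I would invoke a sharp spectral-norm bound for such matrices, e.g.\ the Bandeira--van Handel inequality: for every $\varepsilon > 0$ there is $C(\varepsilon) > 0$ with
\[ \mathbb{E}[\|B\|_2] \leq 2(1+\varepsilon)\sqrt{nMq} + C(\varepsilon)\, K \sqrt{\log(2n)}. \]
This yields $\mathbb{E}[\|A - \mathbb{E}[A]\|_2] \lesssim \sqrt{nq} + \sqrt{\log n}$, with implicit constants depending only on $a,b$.

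The final step is to absorb the $\sqrt{\log n}$ term using the hypothesis $q \geq c_2 \log n/n$: this gives $\sqrt{\log n} \leq \sqrt{nq/c_2}$, so $\mathbb{E}[\|A - \mathbb{E}[A]\|_2] \leq c' \sqrt{nq}$ for some $c' = c'(c_2,a,b)$ (the upper bound $q \leq 1-c_3$ is not strictly used for this direction and can be trivially folded into the constant). The main obstacle is the sharpness requirement on the matrix concentration bound: both matrix Bernstein and the noncommutative Khintchine inequality would yield an additional $\sqrt{\log n}$ factor on the leading term, giving only $\sqrt{nq\log n}$, which is too weak to be absorbed by the lower bound on $q$. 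The Bandeira--van Handel bound (or, equivalently, a suitably sharp moment-method/trace computation in the spirit of F\"uredi--Koml\'os, extended to non-identically-distributed bounded entries) is the natural tool at this level of sharpness.
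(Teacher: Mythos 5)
Your proof is correct, but it takes a genuinely different route from the paper at the key step. Both arguments start with the same Hermitian dilation to reduce to a symmetric matrix whose spectral norm dominates $\twonorm{A-\Exp[A]}$. From there the paper does \emph{not} center and invoke a variance-based inequality; instead it splits the dilated (uncentered) matrix into its entrywise positive and negative parts $B^{+}-B^{-}$, each a nonnegative symmetric matrix with independent bounded entries and entrywise means at most $q$, and applies the Seginer/Feige--Ofek-type bound of Hajek, Wu, and Xu (\cite[Theorem 5]{Hajek2016}) to each part after rescaling the entries into $[0,1]$; this keeps the hypothesis in first-moment form throughout. You instead convert the first-moment hypothesis into a variance bound via the elementary inequality $x^{2}\leq M\lvert x\rvert$ on $[a,b]$, so that $\Exp[\tilde A_{ij}^{2}]\leq Mq$, and then apply the Bandeira--van Handel inequality $\Exp\twonorm{B}\leq 2(1+\varepsilon)\sigma + C(\varepsilon)\sigma_{*}\sqrt{\log(2n)}$, absorbing the $\sqrt{\log n}$ term using $q\geq c_{2}\log n/n$ exactly as you say; your observation that matrix Bernstein or noncommutative Khintchine would lose a $\sqrt{\log n}$ on the leading term and be insufficient is also accurate. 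What each approach buys: yours is self-contained modulo one powerful off-the-shelf inequality, avoids the positive/negative-part decomposition, and even shows the hypothesis $q\leq 1-c_{3}$ is not needed (the lemma permits $c'$ to depend on $c_{3}$, so this is harmless); the paper's version leans on a result already standard in the SBM literature and stays entirely within first-moment conditions, at the mild cost of the extra decomposition and two applications of the cited theorem.
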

\begin{proof} We use ideas from \cite[Lemma 4.5]{Dhara2022c}, who showed a similar result for zero-diagonal, symmetric matrices with independent entries. We first construct a symmetric matrix $B$ using $A$ to reduce it to the symmetric case.
Let $$B=\begin{bmatrix}
0 & A \\
A^\top & 0
\end{bmatrix}.$$
Fix a vector $x \in \IR^{n}$ such that $\twonorm{x}=1$, and consider the vector $y \in \IR^{2n}$ such that $y= \begin{bmatrix}
\mathbf{0}_n \\
x
\end{bmatrix}$. Observe that $\twonorm{y}=1.$ Moreover,
$$ (B-\Exp[B])y = \begin{bmatrix} 
(A-\Exp[A])x\\
\mathbf{0}_n
\end{bmatrix}.$$

Therefore,
$$\twonorm{(B-\Exp[B])y} = \twonorm{(A-\Exp[A]) x},$$
so that \[\Vert(A - \mathbb{E}[A])x \Vert_2 = \Vert(B - \mathbb{E}[B])y \Vert_2 \leq \Vert B - \mathbb{E}[B] \Vert_2.\] Since $x$ was arbitrary, we have shown $ \twonorm{A-\Exp{[A]}} \leq  \twonorm{B-\Exp{[B]}}$, and therefore
\begin{equation}
 \Exp[ \twonorm{A-\Exp{[A]}} ] \leq \Exp[\twonorm{B-\Exp{[B]}}].  \label{eq:AB}
\end{equation}
Thus, it suffices to bound $\Exp[\twonorm{B-\Exp{[B]}}]$. Let $B^+ = \max\{B, 0\}$, where the maximum is taken entrywise. Similarly, let $B^- = -\min\{B, 0\}$. Then we can write $B = B^+ - B^-$. By the triangle inequality,
\begin{equation}
\mathbb{E}\left[\Vert B - \mathbb{E}[B] \Vert_2 \right] \leq \mathbb{E}\left[\Vert B^+ - \mathbb{E}[B^+] \Vert_2 \right]  + \mathbb{E}\left[\Vert B^- - \mathbb{E}[B^-] \Vert_2 \right].  \label{eq:B+-}
\end{equation}
Observe that $B^+$ and $B^-$ are nonnegative, zero-diagonal, symmetric matrices with independent entries. Also, for all $i,j$,
$$ \max\{ \mathbb{E}[B^+_{ij}], \mathbb{E}[B^-_{ij}] \} \leq \Exp[|B_{ij}|] \leq \max \Exp[|A_{ij}|] \leq q.$$ 
If $b\leq 0$, then $ \twonorm{B^{+}-\Exp[B^+]}=0$. Otherwise it follows from \cite[Theorem 5]{Hajek2016} that there exists $c^{+}>0$ such that 
$$ \mathbb{E}\left[ \frac{1}{b} (\Vert B^+ - \mathbb{E}[B^+]\Vert_2) \right] \leq c^{+} \sqrt{\frac{nq}{b}}.$$
Similarly, if $a \geq 0$, then $ \twonorm{B^{-}-\Exp[B^-]}=0$. Otherwise, we again use \cite[Theorem 5]{Hajek2016} to conclude that there exists a constant $c^{-}>0$ such that
\[\mathbb{E}\left[ \frac{1}{|a|}(\Vert B^- - \mathbb{E}[B^-]\Vert_2) \right] \leq c^{-}\sqrt{\frac{nq}{|a|}}.\]
Combining these with \eqref{eq:AB} and \eqref{eq:B+-} we get
$$\Exp[\twonorm{A-\Exp{[A]}}] \leq c'\sqrt{nq} ,$$
where $c'= c^{+} \sqrt{\max\{b,0\}} + c^{-} \sqrt{|\min\{a,0\}|}$.
\end{proof}

\begin{proof}[Proof of Theorem \ref{thm:sp-norm-conc}]
The symbols $\lesssim$ and $ \asymp$ hide constants in $d,c_0$ throughout the proof. Our goal is to bound $\Exp{\left[\twonorm{\mathcal{S}(G)-\Exp{[\mathcal{S}(G)]}}\right]}$. Let $G'$ be an independent copy of $G$. Observe that for a fixed matrix $X$, the function $f(Y) = \Vert X - Y\Vert_2$ is convex. By Jensen's inequality,
\begin{align*}
\mathbb{E}\left[ \Vert \cS(G) - \mathbb{E}[\cS(G)] \Vert_2 \right] &= \mathbb{E}\left[ \Vert \mathcal{S}(G) - \mathbb{E}[\mathcal{S}(G')] \Vert_2 \right] \leq  \mathbb{E}\left[ \Vert \mathcal{S}(G) - \mathcal{S}(G') \Vert_2 \right].
\end{align*}
We can extend the definition of $\mathcal{S}$ so that $\mathcal{S}(G - G') = \mathcal{S}(G) - \mathcal{S}(G')$; i.e. $G - G'$ is a ``hypergraph'' with edges labeled by $\{1, 0,-1\}$.

Let $R$ be a symmetric tensor of order $d$ and dimension $n$ with independent Rademacher entries; i.e. the entries $\{R(i_1, i_2, \dots, i_d) : i_1 \leq i_2 \leq \cdots \leq i_d\}$ are mutually independent. Let $\circ$ denote the edge-wise product. Since $G - G'$ has the same distribution as $(G - G') \circ R$, we obtain
\begin{equation}\label{eq: sp-norm temp1}
\mathbb{E}\left[ \Vert \cS(G) - \mathbb{E}[\cS(G)] \Vert_2 \right] \leq  \mathbb{E}\left[ \Vert \mathcal{S}(G - G')  \Vert_2 \right] = \mathbb{E}\left[ \Vert \mathcal{S}((G - G') \circ R)  \Vert_2 \right]
\leq 2 \mathbb{E}\left[\Vert \mathcal{S}(G \circ R)  \Vert_2 \right], 
\end{equation}
where the last inequality follows from the triangle inequality. Let $p_{\max} \triangleq c_0 \log n/\binom{n-1}{d-1}$ for simplicity. Consider the hypergraph $G_{+}$ that is coupled to $G$ as follows. The hypergraph $G_{+}$ does not contain any edge that appears in $G$. Each edge $e$ that is not present in $G$ is present in $G_{+}$ with probability $\frac{p_{\max}-p_e}{1-p_e}$ (independently across edges). Letting $G^{(1)} \sim \hsbm(d,n,p_{\max},p_{\max})$, we see that $(G+G_{+})\circ R$ has the same distribution as $G^{(1)} \circ R$. Also, $\mathbb{E}\left[\mathcal{S}(G_{+} \circ R) ~|~ G \right] = 0$. Using these observations along with Jensen's inequality, we obtain
\begin{align*}
\mathbb{E}\left[\Vert \mathcal{S}(G \circ R)  \Vert_2 \right] &= \mathbb{E}\left[\Vert \mathcal{S}(G \circ R) + \mathbb{E}\left[\mathcal{S}(G_{+} \circ R)~|~G \right]  \Vert_2 \right]
\leq \mathbb{E}\left[\Vert \mathcal{S}(G \circ R) + \mathcal{S}(G_{+} \circ R)  \Vert_2 \right]\\
&= \mathbb{E}\left[\Vert \mathcal{S}((G+G_{+}) \circ R)  \Vert_2 \right]
= \mathbb{E}\left[\Vert \mathcal{S}(G^{(1)} \circ R)  \Vert_2 \right].
\end{align*}
Substituting this into \eqref{eq: sp-norm temp1}, we have
\begin{equation} \label{eq:sp norm temp 2}
\mathbb{E}\left[ \Vert \cS(G) - \Exp{[\cS(G)]} \Vert_2 \right] \leq 2 \mathbb{E}\left[\Vert S(G^{(1)} \circ R) \Vert_2 \right].    
\end{equation}
Note that, even though $\cS(G^{(1)} \circ R)$ has identically distributed entries, they are still dependent; we want a matrix with independent entries instead to apply Lemma \ref{lem:sp-norm-i.i.d matrix}. To achieve this, we use a somewhat involved symmetrization argument. For simplicity, consider the case when $d=2$ (which reduces to the adjacency matrix), where we have independent entries up to the symmetry. In other words, each independent edge random variable appears exactly twice in the matrix. In this situation, \cite[Theorem 5]{Hajek2016} uses the standard symmetrization technique by adding another independent copy of $\cS(G^{(1)} \circ R)$ and rearranging random variables to create independence. However, for a general $d$, this is not sufficient. But observe that the random variable (label) associated with any fixed edge $e$ is added to exactly $K:=2 \cdot \binom{d}{2} = d^2-d$ entries of the similarity matrix when we apply the map $\cS(\cdot)$. Thus, we instead add $K$ independent copies to create enough independence and show how to rearrange the hyperedge random variables such that each matrix has fully independent entries after the rearrangement. 

More formally, let $G^{(m)}$ and $R^{(m)}$ be independent copies of $G^{(1)}$ and $R$ respectively for $m \in [K]$. Note that $\Exp{[\cS(G^{(m)}\circ R^{(m)})]}$ is the zero matrix. Thus, adding the zero matrix, and then using Jensen's inequality we get
\begin{align}
  \Exp{[\Vert \cS(G^{(1)}\circ R) \Vert_2]}&=\Exp{\left[\twonorm{\cS(G^{(1)}\circ R^{(1)})+ \sum_{m=2}^{K} \Exp{[\cS(G^{(m)}\circ R^{(m)})]}}\right]} \nonumber \\
  &\leq \Exp \left[\twonorm{\sum_{m=1}^{K} \cS(G^{(m)}\circ R^{(m)})}\right]. \label{eq:sp norm temp 3} 
\end{align}

Observe that $\sum_{m=1}^{K} \cS(G^{(m)}\circ R^{(m)})$ is the sum of \emph{independent} copies of random matrices with \emph{dependent} entries. The goal is to re-express this same quantity as the sum of \emph{dependent} matrices with fully \emph{independent} entries. To this end, let us consider the following construction. Let $L(e)$ be a fixed-ordered list of locations to which the random variable associated with the edge $e$ is added. Formally, $L(e):=(\mathsf{op}^{(1)}_e,\dots, \mathsf{op}^{(K)}_e )$ is an ordered list of all ordered pairs of $e$; i.e. $(i,j): i,j \in e$. 

For $e \in \mathcal{E}$ and $m, \ell \in [K]$ let $X^{(e,m,\ell)}$ be the $n \times n$ matrix which has only one non-zero entry:
\begin{align*}
X_{ij}^{(e,m,\ell)} &= \begin{cases}
A^{(m)}_e\circ R^{(m)}_e  & (i,j)=\mathsf{op}^{(\ell)}_e\\
0 & \text{otherwise},
\end{cases}   
\end{align*}
where $A_e^{(m)}$ denotes the indicator random variable associated with the edge $e$ in $G^{(m)}$. By construction,
\begin{equation}\label{eq:sum of independent copies}
  \sum_{m=1}^{K} \cS(G^{(m)}\circ R^{(m)}) =\sum_{m=1}^{K} \sum_{e\in \mathcal{E}} \sum_{\ell=1}^{K} X^{(e,m,\ell)}   
\end{equation}
We will now re-express this summation as the sum of \emph{dependent} matrices with \emph{independent} entries. Let $Y=\sum_{i=1}^{\binom{n-2}{d-2}} X_i Z_i$, where $X_i\sim \textnormal{Bern}(p_{\max})$ and $Z_i \sim \text{Rad}$ are independent. Let $D$ be a diagonal matrix whose diagonal entries are i.i.d. copies of $Y$.
Let us consider matrices $C^{(1)},\dots,C^{(K)}$, where
$$C^{(1)}=\sum_{e\in \mathcal{E}} \sum_{\ell=1}^{K} X^{(e,\ell,\ell)}+D$$
$$C^{(k)}= \sum_{e\in \mathcal{E}} \sum_{\ell=1}^{K}  X^{(e,(\ell+k-1)\% K, \ell)}+ (-1)^{k+1} D $$
Here $\%$ represents the standard modulo operation, except $K\%K$ is $K$ instead of $0$. Intuitively, $C^{(1)}$ is the matrix for which the $K$ locations corresponding to a given edge ``consult'' $K$ independent copies of $G$. Observe that all the entries of $C^{(1)}$ are independent copies of $Y$. To ensure that we add the random variable associated with an edge of any given copy of $G$ at all $K$ locations in the similarity matrix, we iterate over the list in a cyclic manner when constructing $C^{(2)}, \dots, C^{(K)}$. Note that $D$ has the same distribution as $-D$. Therefore, from the symmetry all $C^{(k)}$s have the same distribution. Moreover,
\begin{align}
  \sum_{k=1}^{K} C^{(k)} &=  \sum_{k=1}^{K} \sum_{e\in \mathcal{E}} \sum_{\ell=1}^{K}  X^{(e,(\ell+k-1)\% K,\ell)}+ (-1)^{k+1} D  \nonumber\\
  &=  \sum_{e\in \mathcal{E}} \sum_{\ell=1}^{K} \sum_{k=1}^{K}  X^{(e,(\ell+k-1)\% K,\ell)}+ (-1)^{k+1} D  \nonumber\\
  &= \sum_{e\in \mathcal{E}} \sum_{\ell=1}^{K} \sum_{m=1}^{K}  X^{(e,m,\ell)}  \nonumber\\
  &= \sum_{m=1}^{K} \cS(G^{(m)}\circ R^{(m)}), \label{eq:sum of dependent copies}
\end{align}
where the last step follows from \eqref{eq:sum of independent copies}. Therefore, combining \eqref{eq:sp norm temp 2}, \eqref{eq:sp norm temp 3}, and \eqref{eq:sum of dependent copies}
\begin{equation}\label{eq:sp norm final}
   \Exp{[\twonorm{W-W^*}]} \leq   2\Exp{\left[\twonorm{\sum_{k=1}^{K}C^{(k)}} \right]} \leq 2K \Exp{[\twonorm{C}]},
\end{equation}
where $C$ has the same distribution as $C^{(1)}$. Each entry of $C$ is an independent copy of $Y$. Let $F$ be the event that all the entries of $C$ are in the range $[-4d,4d]$. By a union bound, 
\begin{align*}
    \prob{F^{\textnormal{c}}} &\leq \, n^2 \cdot \prob{|Y|\geq 4d } \leq \, n^2 \cdot \prob{ \sum_{i=1}^{\binom{n-2}{d-2}}{X_i} \geq 4d } .
 \end{align*}
Since $\sum_{i=1}^{\binom{n-2}{d-2}}{X_i} \sim \textnormal{Bin}\left( \binom{n-2}{d-2}, p_{\max}\right)$, by \cite[Theorem 4.4, Equation 4.1]{mitzenmacher2017probability}
 \begin{equation} \prob{F^\textnormal{c}} \leq n^2 \cdot \prob{\text{Bin}\left(\binom{n-2}{d-2}, \frac{c_0 \log n}{\binom{n-1}{d-1}} \right) \geq 4d } \leq n^2 \frac{e^{4d}}{\Theta(n/\log n)^{4d}} = O\left( \frac{1}{n^{4d-3}} \right). \label{eq:F-c}\end{equation}
Therefore,
\begin{align}
    \Exp{[\twonorm{C}]}&= \prob{F} \Exp{[\twonorm{C} \mid F]} + \prob{F^{\textnormal{c}}}\Exp{[\twonorm{C} \mid F^\textnormal{c}]} \nonumber\\
    &\leq \Exp{[\twonorm{C} \mid F]}+ O\left( \frac{1}{n^{4d-3}}\right) \Exp[\Vert C \Vert_F \mid F^\textnormal{c}] \nonumber \\
    & \leq \Exp{[\twonorm{C} \mid F]} + O\left( \frac{1}{n^{4d-3}}\right) n \binom{n-2}{d-2}
    \nonumber\\
    &= \Exp{[\twonorm{C} \mid F]} + O \left(n^{3 - 4d + 1 + d-2} \right) \nonumber\\
    & \leq \Exp{[\twonorm{C} \mid F]} +o(1),  \label{eq:conditional-expectation remaining part}
\end{align}
where the second inequality uses the fact that each entry of $C$ is at most $\binom{n-2}{d-2}$.
Thus, it is only left to show $\Exp{[\twonorm{C} \mid F]} = O(\sqrt{\log n}) $. Note that the entries of $C$ are independent even after conditioning on $F$. Moreover, the entries are bounded in $[-4d,4d]$. Thus
\begin{align*}
  \Exp{[|C_{ij}| \mid F]}&=\sum_{k=1}^{4d} k \cdot \prob{|C_{ij}|=k \mid F} 
  \leq \sum_{k=1}^{4d} k \cdot \frac{\prob{F \cap |C_{ij}|=k}}{\prob{F}}= \sum_{k=1}^{4d} k\cdot \frac{\prob{|C_{ij}|=k }}{1-o(1)} \\
  &\lesssim \sum_{k=1}^{4d} k \cdot  \prob{|C_{ij}|=k} \leq \Exp{[|C_{ij}|]} \leq \binom{n-2}{d-2} p_{\max} = \frac{\binom{n-2}{d-2}c_0 \log n }{\binom{n-1}{d-1}} \lesssim \frac{\log n}{n}.  
\end{align*}
Therefore, by Lemma \ref{lem:sp-norm-i.i.d matrix} and the fact that $\Exp{[C \mid F]}$ is the zero matrix, we obtain
$$\Exp{[ \twonorm{C} \mid F]} = \Exp[\twonorm{C-\Exp{[C \mid F]}} \mid F ] \leq c' \sqrt{n \frac{\log n}{n}} \leq c' \sqrt{\log n}.$$ 
Substituting this back in \eqref{eq:sp norm final} and using \eqref{eq:conditional-expectation remaining part},
$$\Exp{[\twonorm{W-W^*}]} \leq 2K \Exp[\twonorm{C}] \leq 2K\Exp[\twonorm{C} \mid F ]+o(1) \leq 2Kc' \sqrt{\log n} +o(1) \leq c \sqrt{\log n},$$
for some $c$ that depends on $d$ and $c_0$, concluding the proof.
\end{proof}

\end{document}